\documentclass[11pt,a4paper]{article}
\usepackage[left=2.8cm,right=2.8cm,bottom=4cm,top=4cm,bindingoffset=0.5cm]{geometry}
\usepackage[applemac]{inputenc}
\usepackage[english]{babel}
\usepackage{amsmath}
\usepackage{amssymb}
\usepackage{amsthm}
\usepackage{fancyhdr}
\usepackage{graphicx}
\usepackage{curves}
\usepackage{subfigure}
\usepackage{boxedminipage}
\usepackage{cite}
\usepackage{lastpage}
\usepackage{textcase}
\usepackage{authblk}
\theoremstyle{definition} \newtheorem{definition}{Definition}[section]
\theoremstyle{plain} \newtheorem{theorem}[definition]{Theorem}
\theoremstyle{plain} 
\theoremstyle{plain} \newtheorem{lemma}[definition]{Lemma}
\theoremstyle{plain} \newtheorem{corollary}[definition]{Corollary}
\theoremstyle{plain} \newtheorem{remark}[definition]{Remark}
\theoremstyle{definition} 
\numberwithin{equation}{section}
\pagestyle{fancy}
\fancyhf{}
\fancyhead[R]{} 
\fancyhead[L]{\leftmark} 
\fancyfoot[C]{\thepage}
\fancypagestyle{plain}{%
  \fancyhf{}%
}

\setlength{\jot}{5pt}

\newcommand{\N}{\mathbb{N}}
\newcommand{\Z}{\mathbb{Z}}

\newcommand{\R}{\mathbb{R}}

\newcommand{\E}{\mathcal{E}}

\newcommand{\Num}{\mathcal{N}}

\newcommand{\bra}{\langle}
\newcommand{\ket}{\rangle}
\newcommand{\A}{\mathcal{A}}
\newcommand{\M}{\mathcal{M}}
\newcommand{\F}{\mathcal{F}}

\newcommand{\supp}{\text{supp}}

\newcommand{\Oh}{\mathcal{O}}
\renewcommand{\epsilon}{\varepsilon}
\renewcommand{\phi}{\varphi}
\newcommand*{\mychi}{\raisebox{0.35ex}{\mbox{\large $\chi$}}}

\begin{document}

\title{The Ground State Energy of a Dilute Bose Gas \\ in Dimension $n>3$}
\author{Anders Aaen}
\affil{Department of Mathematics, Aarhus University\\ Ny Munkegade 118, 8000 Aarhus C, Denmark}
\date{\today}
\maketitle

\begin{abstract}
We consider a Bose gas in spatial dimension $n>3$ with a repulsive, radially symmetric two-body potential $V$. In the limit of low density $\rho$, the ground state energy per particle in the thermodynamic limit is shown to be $(n-2)|\mathbb S^{n-1}|a^{n-2}\rho$, where $|\mathbb S^{n-1}|$ denotes the surface measure of the unit sphere in $\R^n$ and $a$ is the scattering length of $V$. Furthermore, for smooth and compactly supported two-body potentials, we derive upper bounds to the ground state energy with a correction term $(1+C\gamma)8\pi^4a^6\rho^2|\ln(a^4\rho)|$ in dimension $n=4$, where $\gamma:=\int V(x)|x|^{-2}\, dx$, and a correction term which is $\Oh(\rho^2)$ in higher dimensions.
\end{abstract}


\section{Introduction}

The experimental realization of Bose-Einstein Condensation in 1995 \cite{bec1} has inspired renewed interest in a rigorous understanding of the interacting Bose gas, and in particular the ground state energy. The typical model for the energy of $N$ bosons enclosed in a box $\Lambda=\Lambda_L:=(-L/2,L/2)^n$, is the Hamiltonian 
\begin{equation} \label{model}
H_{N,L}=\sum_{i=1}^N-\Delta_i+\sum_{1\leq j<k\leq N} V(x_j-x_k)
\end{equation}
on $L^2_{\text{sym}}(\Lambda^N)$ (the set of totally symmetric $L^2$-functions on $\Lambda^N$). Here units are chosen such that $\hbar^2/2m=1$, where $m$ is the mass of a particle. We will always assume that the two-body potential $V$ is a nonnegative and radially symmetric function on $\R^n$. Let
$$
E_0(N,L):=\inf\sigma(H_{N,L})=\inf\{\bra \Psi, H_{N,L}\Psi \ket : \|\Psi\|=1\}
$$
denote the ground state energy of the Bose gas, and let
\begin{equation} \label{sdsdmdsnsdbsdh}
e_0(\rho):= \lim_{N\to \infty}\frac{E_0(N,(N/\rho)^{1/n})}{N}
\end{equation}
denote the ground state energy per particle in the thermodynamic limit at density $\rho>0$. The latter is independent of whatever boundary conditions imposed on $\Lambda$. We let $a$ denote the scattering length of $V$ (see section \ref{sec:leading}) and note that $Y:=a^n\rho$ is a dimensionless quantity.

In dimension $n=3$, the asymptotic behavior of $e_0(\rho)$ in the limit of low density was studied by Bogoliubov \cite{Bo}, Lee-Yang \cite{leeyang} and Lee-Huang-Yang \cite{leehuangyang} in the 1940-50's. In particular, the latter applied the pseudopotential method to derive the expansion
$$
e_0(\rho)=4\pi a \rho\bigg(1+\frac{128}{15\sqrt{\pi}}Y^{1/2}+ o\big(Y^{1/2}\big) \bigg) \quad \text{as $Y\to 0$},
$$
now known as the Lee-Huang-Yang formula (LHY). To give a mathematical proof of LHY is still an open problem, except in a special case of $\rho$ in a so-called simultaneously weak coupling and high density regime, and for a rather narrow class of potentials \cite {giulianiseiringer}. Even to prove the leading order term in LHY turned out to be a hard problem:  A variational calculation carried out by Dyson in 1957 \cite{dyson} showed the upper bound $e_0(\rho)\leq 4\pi a\rho(1+CY^{1/3})$, for hard-core interactions. This has later been generalized to general nonnegative, radially symmetric potentials \cite{liebseiringeryngvason2000}. However, no proof of a matching, leading order lower bound was available until 1998, where Lieb-Yngvason managed to show that $e_0(\rho)\geq 4\pi a\rho(1-CY^{1/17})$. Their approach was improved in \cite{lee-yin} to yield $e_0(\rho)\geq 4\pi a\rho (1-C\rho^{1/3}|\ln \rho|^3)$. At the present time, no lower bound has captured even the correct order in the expansion parameter $Y$ in LHY. For the upper bound there has been success though: In \cite{esy08} a trial state of the form
\begin{equation} \label{trial_esy}
\Psi=\exp\bigg(\frac{1}{2}\sum_{p\neq 0}c_pa_p^+a_{-p}^+ +\sqrt{N_0}a_0^+ \bigg) |0\ket
\end{equation}
was used to derive an upper bound
$$
e_0(\rho)\leq 4\pi a\rho \bigg(1+ \frac{128}{15\sqrt{\pi}}(1+C\lambda)Y^{1/2}\bigg) +\tilde  C\rho^2|\ln \rho| ,
$$
for a coupled two-body potential $V=\lambda \tilde V$. While the correction term has the correct order in $Y$, the constant is only correct in the limit of weak coupling, $\lambda\to 0$. The (Fock) trial state \eqref{trial_esy} is inspired by the Bogoliubov approximation, and the crucial feature is that particles of nonzero momenta appear only in \emph{pairs} of opposite momenta. Similar states have previously been considered by Girardeau-Arnowitt \cite{G-A} and Solovej \cite{Solovej} in the context of Bose gases. In a paper from 2009 \cite{YY} Yau-Yin introduced a new trial state, extending the properties of \eqref{trial_esy}. More precisely, they include pairs with total momentum of order $\rho^{1/2}$ (however their trial state has a fixed number of particles in contrast to \eqref{trial_esy}). This turns out to lower the energy significantly and their result is an upper bound consistent with LHY. We note however that the calculation with the Yau-Yin trial state is somewhat more involved than the computation with \eqref{trial_esy}.

The model \eqref{model} has also been studied in other dimensions. The case $n=1$ (with a delta-function potential) was already considered back in 1963 by Lieb-Liniger \cite{lieblinger} and turned out to be exactly solvable. In two dimensions, the leading order term was, to our knowledge, first identified by Schick \cite{schick} in 1971 to be $4\pi \rho |\ln(a^2\rho)|^{-1}$. This was rigorously proven to be correct by Lieb-Yngvason in 2001 \cite{liebyngvason2001}. To our knowledge there are yet no rigorous results on the $2$-dimensional correction term (in fact, it seems that there is not even consensus about what this term should be: compare e.g. \cite{schick}, \cite{yang} and \cite{mc}). In \cite{yang} Yang reexamined the pseudopotential method in dimension two, four and five. In the latter he found the method inconclusive, while in four dimensions he derived the expansion
\begin{equation} \label{yang4d}
e_0(\rho)=4\pi^2a^2\rho \big[1+2\pi^2Y |\ln Y| +o\big(Y\ln Y \big) \big]  \quad \text{as $Y\to 0$}.
\end{equation}
We remark that in Yangs paper the correction $2\pi^2Y |\ln Y|$ appears to be $4\pi^2Y |\ln Y|$, due to a minor miscalculation.

In this paper we test some of the rigorous $3$-dimensional calculations in higher dimensions. We follow the proofs of Dyson and Lieb-Yngvason to obtain the $n$-dimensional upper- and lower bounds (Theorem \ref{thm:dyson_upper} and Theorem \ref{thm:lower_bound}),
\begin{equation} \label{sdsdsdmsdnsdhsdysdu}
1-CY^{\alpha}\leq \frac{e_0(\rho)}{s_na^{n-2}\rho} \leq 1+CY^{\beta},
\end{equation}
where $s_n:=(n-2)|\mathbb{S}^{n-1}|$, $|\mathbb{S}^{n-1}|$ denotes the surface measure of the unit sphere in $\R^n$ and where
$$
\alpha=\frac{n-2}{n(n+2)+2} \quad \text{and} \quad \beta =\frac{n-2}{n}.
$$
Secondly, we employ the trial state \eqref{trial_esy} to improve the upper bounds. In dimension $n=4$ we show that (Theorem \ref{thm:ESYndim})
$$
e_0(\rho)\leq 4\pi^2a^2\rho\big[ 1+2\pi^2(1+C\gamma)Y|\ln Y| \big] +\Oh(\rho^2), 
$$
where $\gamma:=\int V(x)|x|^{-2}\, dx$, consistent with \eqref{yang4d} in the limit $\gamma \to 0$. In dimension $n\geq 5$ the calculation yields the upper bound (Theorem \ref{thm:ESYndim})
$$
e_0(\rho)\leq s_na^{n-2}\rho + \Oh(\rho^2) .
$$
The second order asymptotics of $e_0(\rho)$ becomes more subtle in dimension $n>3$. The correction to the energy is given in terms of certain integrals, which, in three dimensions, are exactly computable in the limit $\rho\to 0$, in a straight-forward manner. This is not the case in higher dimensions, and a more careful analysis has to be carried out. In dimension $n\geq 5$ we have not been able to identify the expansion parameter $Y$ in the correction term, nor an explicit coefficient. 

Finally, since \eqref{trial_esy} is a Fock state, we need the fact that the canonical ground state energy defined in \eqref{sdsdmdsnsdbsdh} can be recovered from the grand-canonical setting. Although this is a well-known result, we did not come across a good reference for it, and hence we have included a proof in Appendix \ref{app-ensembles}.

\section{The Leading Order Term} \label{sec:leading}

In this section we prove the upper and lower bounds in \eqref{sdsdsdmsdnsdhsdysdu}. We will assume that $V$ is a nonnegative, radial and measurable function on $\R^n$, where $n\geq 3$. The scattering length of $V$ is denoted by $a$ and may be defined via the variational problem (see e.g. \cite{LSSY}, \cite{yinphd})
\begin{equation} \label{def:scattt}
s_n a^{n-2}:=\inf_{u}\int_{\R^n}|\nabla u|^2+\frac{1}{2}Vu^2 ,
\end{equation}
where the infimum is taken over all nonnegative, radially symmetric functions $u\in H^1_{\text{loc}}(\R^n)$ satisfying $u(r)\to 1$ as $r\to \infty$. Notice that such functions are automatically continuous away from the origin. Also, it is easy to see that we may restrict attention to radially increasing functions. Moreover, we remark that $a$ is finite if and only if $V$ is integrable at infinity. In many cases the infimum in \eqref{def:scattt} is a unique minimum, and the minimizer $u$ satisfies the zero-energy scattering equation
\begin{equation} \label{sdsdsdnsdbsdhsdg}
-\Delta u +\frac{1}{2}Vu=0
\end{equation}
in the sense of distributions on $\R^n$. The existence of a scattering solution for a nonnegative, radially symmetric and \emph{compactly supported} potential is established in \cite{liebyngvason2001}. We note briefly some properties of the scattering solution $u$, referring to \cite{liebyngvason2001}, \cite{LSSY} for details:
\begin{itemize}
\item[(i)] For large $r$, $u(r)\approx 1- (a/r)^{n-2}$, or more precisely
\begin{equation} \label{sdsdsdmsdksdj}
\lim_{r\to \infty}\frac{1-u(r)}{(a/r)^{n-2}}=1.
\end{equation}
In fact
\begin{equation} \label{sssdsbdgsdfsdtsdrsy}
u(r)\geq 1- (a/r)^{n-2},
\end{equation}
with equality for $r>R_0$ if $\supp(V)\subset B(0,R_0)$.
\item[(ii)] Monotonicity: If $V\leq \tilde V$, then $a\leq \tilde a$, while $u \geq \tilde u$. 
\item[(iii)] Regularity imposed on $V$ is inherited by $u$. For instance, one may apply elliptic regularity and Sobolev imbedding's to show that if $V$ is smooth, so is $u$.
\item[(iv)]  For $V\in L^1(\R^n)$, it follows from \eqref{sdsdsdnsdbsdhsdg} that $u$ can be represented as
\begin{equation} \label{sdsddsmsdksdl}
1-u(x)=\frac{1}{2}\Gamma (Vu)(x):=\frac{1}{2s_n}\int_{\R^n}\frac{V(y)u(y)}{|x-y|^{n-2}}\, dy .
\end{equation}
By \eqref{sdsdsdmsdksdj} and the dominated convergence theorem it then follows that 
\begin{equation} \label{sdsdsmsdnsdydudtdy}
2s_na^{n-2}= \int_{\R^n}V(x)u(x)\, dx.
\end{equation}
\end{itemize}

The main result of this section is the following, which is an immediate consequence of Theorem \ref{thm:dyson_upper} and Corollary \ref{thm:lower_bound_cor} below.

\begin{theorem}
Let $n\geq 3$ and suppose that $V$ is nonnegative, radially symmetric, measurable and decays faster than $r^{-\nu}$ at infinity, where $\nu=(6n-2)/5$. Suppose furthermore that $V$ admits a scattering solution. Then
\begin{equation} \label{sdsdsdhsdbsdfsdgsdfsdg}
\lim_{\rho \to 0} \frac{e_0(\rho)}{s_na^{n-2}\rho} =1.
\end{equation}
\end{theorem}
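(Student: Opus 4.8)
The plan is to derive \eqref{sdsdsdhsdbsdfsdgsdfsdg} by sandwiching $e_0(\rho)$ between a lower bound and an upper bound that both have the form $s_n a^{n-2}\rho\,(1+o(1))$ as $\rho\to 0$, using precisely the two bounds already announced in \eqref{sdsdsdmsdnsdhsdysdu}, namely Theorem \ref{thm:dyson_upper} and Corollary \ref{thm:lower_bound_cor} (which is presumably the clean version of Theorem \ref{thm:lower_bound} stated for the class of potentials in the hypothesis). Since the theorem statement explicitly says it is an immediate consequence of those two results, the real content of the section lies in proving them; here I only have to verify that the decay hypothesis $V=o(r^{-\nu})$ with $\nu=(6n-2)/5$ is exactly what is needed to make both the exponents $\alpha=\tfrac{n-2}{n(n+2)+2}$ and $\beta=\tfrac{n-2}{n}$ effective, i.e.\ that $Y=a^n\rho\to 0$ and that the error terms $CY^\alpha$, $CY^\beta$ are genuinely $o(1)$.

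First I would record that under the stated hypotheses $V$ is integrable at infinity (since $\nu>n$ for $n\ge 3$), so by the remark after \eqref{def:scattt} the scattering length $a$ is finite; combined with the assumed existence of a scattering solution, the quantities $s_n a^{n-2}$ and $Y=a^n\rho$ are well defined, and $Y\to 0$ as $\rho\to 0$ because $a$ is a fixed finite constant. Then I would invoke the upper bound: by Theorem \ref{thm:dyson_upper}, $e_0(\rho)\le s_n a^{n-2}\rho(1+CY^{\beta})$ with $\beta=\tfrac{n-2}{n}>0$, hence $\limsup_{\rho\to 0} e_0(\rho)/(s_n a^{n-2}\rho)\le 1$. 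For the matching lower bound I would invoke Corollary \ref{thm:lower_bound_cor}: $e_0(\rho)\ge s_n a^{n-2}\rho(1-CY^{\alpha})$ with $\alpha>0$, hence $\liminf_{\rho\to 0} e_0(\rho)/(s_n a^{n-2}\rho)\ge 1$. Combining the two gives the claimed limit.

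The only subtlety worth flagging is the role of the decay exponent $\nu=(6n-2)/5$. The Dyson-type upper bound of Theorem \ref{thm:dyson_upper} and the Lieb--Yngvason-type lower bound of Theorem \ref{thm:lower_bound} are presumably proved first for nice (e.g.\ compactly supported) potentials and then extended by a truncation/monotonicity argument using property (ii) after \eqref{sdsdsdnsdbsdhsdg}: replace $V$ by $V\mathbf 1_{B(0,R)}$, which only decreases $a$, run the bound, and let $R\to\infty$. Controlling the error introduced by the tail $\int_{|x|>R}V$ in the lower bound is what forces a quantitative decay rate on $V$, and $(6n-2)/5$ is exactly the threshold that makes the tail negligible compared to the $Y^\alpha$ error already present; I would simply cite this from the proof of Theorem \ref{thm:lower_bound} rather than redo it. So the main obstacle is not in this theorem at all — it is deferred to the proofs of Theorem \ref{thm:dyson_upper} and Theorem \ref{thm:lower_bound} — and the present proof is a two-line squeeze, with the one genuine check being that the hypotheses guarantee $a<\infty$ and that the hypotheses of those two cited results are met by the class of potentials considered here.
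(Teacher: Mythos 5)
Your proposal is correct and matches the paper exactly: the paper gives no separate argument for this theorem, stating that it is an immediate consequence of Theorem \ref{thm:dyson_upper} and Corollary \ref{thm:lower_bound_cor}, which is precisely your two-sided squeeze, with the decay hypothesis $\nu=(6n-2)/5$ serving only to meet the hypotheses of the corollary (and to ensure $a<\infty$, hence $Y\to 0$). Your side remark that the truncation argument is what forces the decay rate is also consistent with the paper's proof of Corollary \ref{thm:lower_bound_cor}.
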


\subsection{The Upper Bound}

We have the following dimensional generalization of  \cite{dyson}, \cite{LSSY}.

\begin{theorem} \label{thm:dyson_upper}
Let $n\geq 3$ and suppose that $V$ is nonnegative, radially symmetric and measurable.
\begin{itemize}
\item[(i)] Without further assumptions,
$$
\limsup_{\rho \to 0}\frac{e_0(\rho)}{s_na^{n-2}\rho}\leq 1.
$$
\item[(ii)]  There exist $C,\delta>0$ independent of $V$ such that, if $V$ admits a scattering solution, then 
$$
e_0(\rho)\leq s_na^{n-2}\rho\big[1+CY^{1-2/n} \big],
$$
whenever $Y\leq \delta$.
\end{itemize}
\end{theorem}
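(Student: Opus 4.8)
The plan is to follow Dyson's variational strategy, adapted to $n$ dimensions. The trial function is the standard Dyson-Jastrow product $\Psi(x_1,\dots,x_N)=\prod_{i=1}^N f(t_i(x_1,\dots,x_i))$, where $t_i$ is the distance from $x_i$ to the nearest of the points $x_1,\dots,x_{i-1}$, and $f$ is built from the scattering solution. Specifically, I would take $f(r)=u(r)/u(b)$ for $r\le b$ and $f(r)=1$ for $r\ge b$, where $u$ solves the zero-energy scattering equation and $b$ is a cutoff radius to be optimized, chosen in the regime $\rho^{-1/n}\gg b\gg a$. Since $V$ is nonnegative and radial, $\langle\Psi,H_{N,L}\Psi\rangle/\|\Psi\|^2$ can be bounded above using Dyson's lemma-type manipulation: the key inequality is that for the product ansatz, the expected energy per particle is controlled by $(N/|\Lambda|)\int(|\nabla f|^2+\tfrac12 Vf^2)$ plus correction terms coming from the nearest-neighbour structure and three-body overlaps.

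The key steps, in order: (1) Reduce to computing $\langle\Psi,H_{N,L}\Psi\rangle/\|\Psi\|^2$ for the Dyson trial state in a box with Neumann boundary conditions, using \eqref{sdsdmdsnsdbsdh} and the fact that $e_0(\rho)$ is independent of boundary conditions. (2) Establish the pointwise/operator inequality underlying Dyson's lemma in $n$ dimensions, namely that convolving the nearest-neighbour interaction against $f$ and using $-\Delta u+\tfrac12 Vu=0$ lets one replace $V$ by an effective soft potential whose integral is essentially $2s_na^{n-2}$ by \eqref{sdsdsmsdnsdydudtdy}. (3) Bound the normalization $\|\Psi\|^2$ from below by a product-type estimate, incurring an error controlled by $\rho b^n$. (4) Bound the three-body and higher overlap terms, which are controlled by powers of $\rho b^n$ and by $\int_{|x|\le b}(1-f(x))\,dx$, which in $n$ dimensions scales like $a^{n-2}b^2$ (since $1-f\sim (a/r)^{n-2}$); this is where the exponents differ from three dimensions. (5) Collect all errors: the leading term is $s_na^{n-2}\rho$, the corrections are powers of $Y=a^n\rho$ times the cutoff ratio $b/\rho^{-1/n}$ and $a/b$, and optimizing $b$ over $a\ll b\ll\rho^{-1/n}$ yields the exponent $1-2/n$. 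For part (i), one drops the quantitative bookkeeping and simply lets $b\to\infty$ slowly (or uses a truncated scattering solution for merely integrable $V$ without a genuine minimizer) so that the correction terms vanish as $\rho\to 0$, giving the $\limsup$ statement with no decay assumption on $V$.

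The main obstacle I expect is step (2): making Dyson's lemma work cleanly in general dimension, in particular getting the effective-potential replacement with the sharp constant $2s_na^{n-2}$ while keeping all error terms expressible as powers of $Y$. In three dimensions this rests on the identity relating $\int V u$ to the scattering length and on the $1/|x-y|$ kernel; here one must instead use \eqref{sdsddsmsdksdl}, \eqref{sdsdsmsdnsdydudtdy} and the Green's function $|x-y|^{-(n-2)}/s_n$, and check that the integrability at infinity (which is exactly why the decay hypothesis $\nu=(6n-2)/5$ appears in the companion theorem, though it is \emph{not} needed for Theorem \ref{thm:dyson_upper} itself) does not interfere. A secondary technical point is that for part (ii) one needs $V$ to admit a genuine scattering solution so that the pointwise bound \eqref{sssdsbdgsdfsdtsdrsy} and the representation \eqref{sdsddsmsdksdl} are available; for part (i) one circumvents this by approximating $V$ from below by compactly supported truncations $V_R=V\mathbf 1_{B(0,R)}$, using monotonicity (property (ii) of the scattering solution) and the fact that $a_R\uparrow a$, so that the bound for $V$ follows by a limiting argument after first proving it for each $V_R$.
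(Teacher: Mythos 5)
Your overall strategy is the paper's: the Dyson product trial state built from $f=u/u(b)$ cut off at a radius $b$, reduction to a Rayleigh-quotient bound of the form \eqref{sdsdjsdnsdbsdgsdt} with the three quantities $I=\int(1-f^2)\sim a^{n-2}b^2$, $K=\int f|\nabla f|$ and $J=\int(|\nabla f|^2+\tfrac12Vf^2)$, and the choice $b\sim\rho^{-1/n}$, which balances the errors and produces the exponent $1-2/n$. However, the step you single out as the main obstacle is a misdiagnosis: no ``Dyson's-lemma-type'' replacement of $V$ by a soft potential with integral $2s_na^{n-2}$, and none of the Green's-function machinery \eqref{sdsddsmsdksdl}, \eqref{sdsdsmsdnsdydudtdy}, enters the upper bound. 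That apparatus (Lemma \ref{lem:dyson}) is the lower-bound tool of Theorem \ref{thm:lower_bound}. In the upper bound the only input about $u$ is the variational definition \eqref{def:scattt}, which gives $J\le s_na^{n-2}/u(b)^2$ at once, together with the pointwise bound \eqref{sssdsbdgsdfsdtsdrsy} to make $I$, $K$ and $u(b)^{-2}$ explicit with constants independent of $V$ for part (ii). You should also record, as the paper does, that the Dyson state is not symmetric and that one may still use it because the absolute ground state of $H_{N,L}$ on $L^2(\Lambda^N)$ is symmetric.

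The genuine gap is your argument for part (i). You propose to prove the bound for the truncations $V_R=V\mychi_{B(0,R)}$ (which admit scattering solutions) and then pass to the limit using monotonicity and $a_R\uparrow a$. Monotonicity runs the wrong way for an upper bound: since $V\ge V_R$, the energy with $V$ is \emph{larger} than with $V_R$, so an upper bound on the truncated system tells you nothing about $e_0(\rho)$ for the full potential; this truncation-plus-monotonicity scheme is exactly what the paper uses for the \emph{lower} bound (Corollary \ref{thm:lower_bound_cor}), where the inequality points the right way. The correct route for (i) --- and the paper's --- keeps the full $V$ in the Rayleigh quotient and takes $u$ to be an arbitrary admissible (near-minimizing) function in \eqref{def:scattt}: then $\limsup_{\rho\to0}e_0(\rho)\rho^{-1}\le u(b)^{-2}\int(|\nabla u|^2+\tfrac12Vu^2)$, and letting $b\to\infty$ and then taking the infimum over $u$ yields $s_na^{n-2}$ with no scattering solution, no truncation and no decay hypothesis. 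If you insist on using the truncated scattering solution $u_R$ as a trial function, you must still evaluate the full potential on it and control the tail term $\tfrac12\int_{|x|>R}Vf^2$ (using $f\le1$ and the integrability of $V$ at infinity forced by $a<\infty$); your write-up omits this, and as stated the limiting argument does not close.
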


\begin{proof}
We employ the periodic trial state of Dyson \cite{dyson}. This state is not symmetric, but since the ground state of $H_{N,L}$ on the full space $L^2(\Lambda^N)$ \emph{is} symmetric \cite{LSSY}, we obtain an upper bound to $e_0(\rho)$.  Suppose that $u\in H^1_{\text{loc}}(\R^n)$ is nonnegative, radially symmetric, increasing and moreover that $u(r)\to 1$ as $r\to \infty$. The trial state is then defined by
$$
\Psi:=F_2\cdot F_3\cdots F_N,
$$
where
$$
F_i:=\min_{1\leq j<i}\bigg[\min_{m\in \Z}f(x_i-x_j-mL) \bigg]
$$
and 
$$
f(r):=\left\{\begin{array}{ll} \frac{u(r)}{u(b)} & 0\leq r\leq b \\ 1 & r>b
 \end{array} \right. ,
$$
for some (large) $b>0$ to be chosen. Following the calculation in \cite{LSSY} we obtain
\begin{equation} \label{sdsdjsdnsdbsdgsdt}
e_0(\rho)\leq \frac{J\rho+\frac{2}{3}(K\rho)^2}{(1-I\rho)^2},
\end{equation}
where 
$$
I:=\int (1-f(x)^2)\, dx, \quad K:=\int f(x)|\nabla f(x)|\, dx
$$
and
$$
J:=\int |\nabla f(x)|^2+\frac{1}{2}V(x)f(x)^2\, dx.
$$
It follows that
$$
\limsup_{\rho\to 0}e_0(\rho)\rho^{-1}\leq J \leq \frac{1}{u(b)^2}\int |\nabla u(x)|^2+\frac{1}{2}V(x)u(x)^2 \, dx,
$$
where we have used 
$$
f(r)\leq \frac{u(r)}{u(b)} \quad \text{and} \quad  f'(r)\leq  \frac{u'(r)}{u(b)}
$$
in the latter inequality. In the limit $b\to \infty$ we get
$$
\limsup_{\rho\to 0}e_0(\rho)\rho^{-1}\leq \int |\nabla u(x)|^2+\frac{1}{2}V(x)u(x)^2 \, dx,
$$
and minimizing over $u$ yields (i), by definition of the scattering length.

In case $V$ admits a scattering solution, we apply the above construction with $u$ being this particular function. The bound \eqref{sssdsbdgsdfsdtsdrsy} then allows us to estimate more explicitly. Indeed, we have $f(r)\geq [1-(a/r)^{n-2}]_+$, and hence
$$
I\leq |\mathbb{S}^{n-1}| \bigg(\int_0^a r^{n-1}\, dr +\int_a^b2a^{n-2}r\, dr \bigg) \leq |\mathbb{S}^{n-1}|a^{n-2}b^2 .
$$
Next,
$$
J\leq \frac{s_n a^{n-2}}{u(b)^2} \leq  \frac{s_n a^{n-2}}{(1-(a/b)^{n-2})^2},
$$
provided $b>a$. Finally, using $f(r)\leq 1$ and an integration by parts yields
$$
K\leq |\mathbb{S}^{n-1}|\int_0^bf'(r)r^{n-1}\, dr \leq |\mathbb{S}^{n-1}| \left(b^{n-1}-(n-1)\int_0^bf(r)r^{n-2}\, dr \right) .
$$
However,
\begin{eqnarray*}
\int_0^bf(r)r^{n-2}\, dr &\geq& \int_a^b\big[1-(a/r)^{n-2} \big]r^{n-2}\, dr\\
&=& \frac{b^{n-1}}{n-1}-a^{n-2}b+\frac{n-2}{n-1}a^{n-1} \geq \frac{b^{n-1}}{n-1}-a^{n-2}b,
\end{eqnarray*}
and hence $K\leq |\mathbb{S}^{n-1}|(n-1)a^{n-2}b$. Now, by choosing $b:=(|\mathbb{S}^{n-1}|\rho)^{-1/n}$, we have
$$
(a/b)^{n-2}=|\mathbb{S}^{n-1}|a^{n-2}b^2\rho=\tilde Y^{\beta},
$$
where $\tilde Y:=|\mathbb{S}^{n-1}| Y$ and $\beta:=(n-2)/n$. Note that in particular $b>a$ if $\tilde Y<1$. In total we have
$$
e_0(\rho)\leq s_na^{n-2}\rho \bigg[\frac{1}{(1-\tilde Y^{\beta})^4}+\frac{C Y^{\beta}}{(1-\tilde Y^{\beta})^2} \bigg] \leq s_na^{n-2}\rho\big(1+\tilde C Y^{\beta} \big),
$$
provided $\tilde Y$ is bounded away from $1$.
\end{proof}

\subsection{The Lower Bound}

In this section we prove an $n$-dimensional lower bound by following the steps in \cite{liebyngvason98}. The assumption of compact support in Theorem \ref{thm:lower_bound} below is relaxed in Corollary \ref{thm:lower_bound_cor}.

\begin{theorem} \label{thm:lower_bound} 
Let $n\geq 3$ and suppose that $V$ is nonnegative, radially symmetric, measurable and compactly supported with, say, $\text{supp}(V)\subset B(0,R_0)$. There exist $C,\delta>0$ independent of $V$ such that
$$
e_0(\rho)\geq s_na^{n-2}\rho \big(1-C Y^{\alpha} \big),
$$
where
\begin{equation} \label{sdsdsdsdsdmsdnsdhsdjsdysdu}
\alpha := \frac{n-2}{n(n+2)+2},
\end{equation}
provided 
\begin{equation} \label{qqmqmqkqmqj}
Y\leq \min \big\{\delta, (a/R_0)^{\frac{n-2}{5\alpha}} \big\}.
\end{equation}
\end{theorem}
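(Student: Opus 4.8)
The plan is to follow the strategy of Lieb--Yngvason \cite{liebyngvason98}: combine a partition of the box $\Lambda$ into small cubic cells with Neumann boundary conditions, a generalized Dyson lemma that trades part of the kinetic energy for a ``soft'' potential carrying the same scattering strength, and Temple's inequality to bound from below the ground state energy of the resulting soft Hamiltonian in a single cell. Three free parameters enter --- the cell side $\ell$, the fraction $\epsilon$ of kinetic energy one sacrifices, and the range $R$ of the soft potential --- and the exponent $\alpha=(n-2)/(n(n+2)+2)$ will come out of optimizing over them; in particular, for $n=3$ this recovers the value $1/17$ of \cite{liebyngvason98}.

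The first ingredient is the $n$-dimensional Dyson lemma: if $V\geq 0$ is radial with $\supp V\subset B(0,R_0)$ and scattering length $a$, and $U\geq 0$ is radial, supported in the shell $R_0\leq |x|\leq R$, with $|\mathbb{S}^{n-1}|\int_0^\infty U(r)r^{n-1}\,dr\leq 1$, then for every region $B$ star-shaped with respect to the origin and every $\psi\in H^1(B)$,
$$
\int_B\Big(|\nabla\psi|^2+\tfrac12 V|\psi|^2\Big)\,dx\ \geq\ s_n a^{n-2}\int_B U(|x|)\,|\psi(x)|^2\,dx.
$$
This is proved by restricting to rays through the origin and, on each ray, using the variational characterization \eqref{def:scattt} of the scattering length together with the explicit profile \eqref{sssdsbdgsdfsdtsdrsy}, \eqref{sdsdsdmsdksdj} of the scattering solution on $[R_0,\infty)$; only the compact-support hypothesis is used here.

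Next I would work in a single cell $C$ of side $\ell$ with $p$ particles. Splitting $\sum_i(-\Delta_i)=\epsilon\sum_i(-\Delta_i)+(1-\epsilon)\sum_i(-\Delta_i)$ and, for each particle $i$, keeping only the interaction with a nearest neighbour, one applies the lemma above (with a suitable star-shaped region and the wavefunction viewed as a function of $x_i$) to obtain
$$
H_p^C\ \geq\ \epsilon\sum_{i=1}^p(-\Delta_i)\ +\ (1-\epsilon)\,s_n a^{n-2}\sum_{i=1}^p U(t_i),\qquad t_i:=\min_{j\neq i}|x_i-x_j|,
$$
with Neumann conditions on $C$. To the right-hand side I would apply Temple's inequality: $\epsilon\sum_i(-\Delta_i)$ has the constant function as ground state, ground state energy $0$ and spectral gap $\epsilon(\pi/\ell)^2$, so with $W:=(1-\epsilon)s_n a^{n-2}\sum_i U(t_i)$ one gets $E_0(H_p^C)\geq \langle W\rangle_0-\langle W^2\rangle_0/\big(\epsilon(\pi/\ell)^2-\langle W\rangle_0\big)$ as soon as the denominator is positive. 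The expectations are computed in the constant state, and, because $R$ is chosen much smaller than the mean interparticle spacing, one finds $\langle W\rangle_0\gtrsim s_n a^{n-2}\,p(p-1)\ell^{-n}$ up to controlled lower-order corrections, with $\langle W^2\rangle_0$ correspondingly small.

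Finally I would feed the single-cell bound $E_0(H_p^C)\geq A\,p(p-1)$, with $A\approx s_na^{n-2}\ell^{-n}$, into the cell decomposition: distributing $N$ particles among the $\sim(L/\ell)^n$ cells and using $\sum_\alpha p_\alpha(p_\alpha-1)\geq N^2/(\#\,\text{cells})-N$ by Cauchy--Schwarz gives $e_0(\rho)\geq s_na^{n-2}\rho\,(1-\text{errors})$. The errors are the loss $\epsilon$ from Dyson's lemma, the Temple correction $\langle W^2\rangle_0/(\text{gap})$, the $-N$ term from Cauchy--Schwarz (which forces $\rho\ell^n$ to be bounded below), and the requirement $R\geq R_0$; balancing these as powers of $Y$ in $\ell,\epsilon,R$ produces $CY^\alpha$, and the constraint $R\geq R_0$ under the smallness of the errors is precisely what gives $Y\leq\min\{\delta,(a/R_0)^{(n-2)/(5\alpha)}\}$. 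I expect the main obstacle to be exactly this last bookkeeping: the Temple positivity condition $\langle W\rangle_0\lesssim\epsilon\,\ell^{-2}$ couples all the parameters, and it is the simultaneous optimization subject to it --- not any individual estimate --- that is delicate and that pins down the precise exponent $\alpha$.
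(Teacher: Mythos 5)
Your outline matches the paper's route (an $n$-dimensional Dyson lemma, the split $H\geq -\epsilon\Delta+(1-\epsilon)a_nW$ with Temple's inequality in the constant Neumann state, a decomposition into cells of side $\ell$, and the ansatz $\epsilon,\ a/\ell,\ R$ as powers of $Y$ yielding $\alpha=(n-2)/(n(n+2)+2)$ and the constraint $Y\leq (a/R_0)^{(n-2)/(5\alpha)}$). But there is one genuine gap in the final assembly step. You feed into the cell decomposition a single-cell bound $E_0(H^C_p)\geq A\,p(p-1)$ with $A\approx s_na^{n-2}\ell^{-n}$ \emph{independent of} $p$, and then apply Cauchy--Schwarz over all cells. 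Such a uniform bound is not what Temple's inequality gives: the Temple estimate requires $\langle W\rangle<\epsilon\pi^2/\ell^2$ (the condition $G(p,\ell)>0$), and its constant degrades with the occupation number through the factors $\bigl(1-v_nR^n/\ell^n\bigr)^{p-2}$ and $\bigl(1-\text{const}\,a^{n-2}p/((R^n-R_0^n)G(p,\ell))\bigr)$. For $p$ much larger than $\rho\ell^n$ the bound is vacuous or fails outright, and nothing in the variational problem prevents a large number of particles from accumulating in a single cell, so the step ``$\sum_\alpha p_\alpha(p_\alpha-1)\geq N^2/M-N$ plus a $p$-independent $A$'' does not close the argument.

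The paper repairs exactly this point with an extra ingredient you omit: superadditivity of $m\mapsto E_0(m,\ell)$, i.e. $E_0(k+m,\ell)\geq E_0(k,\ell)+E_0(m,\ell)$, which follows from $V\geq 0$ by discarding interactions between the two groups. One first minimizes over occupation distributions $(c_m)$ subject to $\sum_m c_m=1$ and $\sum_m mc_m=\rho\ell^n$, then splits the sum at $p:=\lfloor 4\rho\ell^n\rfloor$: for cells with $m<p$ particles Temple's inequality (valid there since $G$ and $K$ are decreasing in the particle number) plus Cauchy--Schwarz gives the quadratic term, while for cells with $m\geq p$ one uses superadditivity, $E_0(m,\ell)\geq \lfloor m/p\rfloor E_0(p,\ell)\geq \tfrac{m}{2p}E_0(p,\ell)$, to control the high-occupancy cells linearly in $m$; combining the two regimes and checking that the resulting function of $t=\sum_{m<p}mc_m$ is minimized at $t=\rho\ell^n$ yields $E_0(N,L)/N\geq(\rho\ell^n-1)K(4\rho\ell^n,\ell)$, after which your parameter bookkeeping goes through as you describe. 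Without this superadditivity step (or some substitute handling of overcrowded cells), the proof as proposed is incomplete.
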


In order to prove Theorem \ref{thm:lower_bound} we consider $H=H_{N,L}$ with Neumann boundary conditions on $\Lambda$. The first step is to obtain an $n$ - dimensional version of Dyson's lemma. In what follows we set
$$
a_n:=(n-2)a^{n-2}.
$$
\begin{lemma}[Dyson's Lemma] \label{lem:dyson}
Suppose that $U$ is a measurable, nonnegative and radially symmetric function on $\R^n$, which satisfies
$$
U(r)=0, \quad \text{for $r\leq R_0$}, \quad \text{and} \quad \int_0^{\infty}U(r)r^{n-1}\, dr\leq 1.
$$
Let $B\subseteq \R^n$ be open and star shaped w.r.t. the origin. Then
$$
\int_{B}  |\nabla \phi(x)|^2+\frac{1}{2}V(x)|\phi(x)|^2\, dx \geq a_n \int_{B}U(x)|\phi(x)|^2\, dx, 
$$
for each $\phi\in H^1(B)$.
\end{lemma}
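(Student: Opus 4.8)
I would reduce the $n$-dimensional inequality to a one-dimensional radial estimate along rays through the origin, exploiting that $V$ is supported in $B(0,R_0)$ while $U$ vanishes there. Writing $x=r\omega$ with $r>0$, $\omega\in\mathbb{S}^{n-1}$, and using $|\nabla\phi(r\omega)|^2\ge|\partial_r\phi(r\omega)|^2$ together with the radial symmetry of $V$ and $U$, both sides of the asserted inequality become, after passing to polar coordinates, integrals over $\mathbb{S}^{n-1}$ of quantities depending only on $g_\omega(r):=\phi(r\omega)$. Since $B$ is star-shaped with respect to the origin, each ray meets $B$ in an interval $\{r\ge0:r\omega\in B\}=[0,R(\omega))$ with $0<R(\omega)\le\infty$. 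It therefore suffices to prove, for a.e.\ $\omega$,
$$
\int_0^{R(\omega)}\!\Big(g_\omega'(r)^2+\tfrac12 V(r)g_\omega(r)^2\Big)r^{n-1}\,dr \;\ge\; a_n\int_0^{R(\omega)}\!U(r)\,g_\omega(r)^2\,r^{n-1}\,dr .
$$
Here one may assume $\phi$ real (treating real and imaginary parts separately), and, replacing $g_\omega$ by $|g_\omega|$, also $g_\omega\ge0$; if $\int_B(|\nabla\phi|^2+V|\phi|^2)=\infty$ there is nothing to prove, so for a.e.\ $\omega$ the left-hand side above is finite and $g_\omega$ has an absolutely continuous representative — the standard ``absolutely continuous on lines'' property of $H^1$ functions written in polar coordinates, the weight $r^{n-1}$ absorbing the mild singularity a Sobolev function may develop along a ray toward the origin.

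The heart of the argument is the pointwise bound: for every $t$ with $R_0\le t\le R(\omega)$,
$$
\int_0^{t}\Big(g_\omega'(r)^2+\tfrac12 V(r)g_\omega(r)^2\Big)r^{n-1}\,dr \;\ge\; a_n\, g_\omega(t)^2 .
$$
To prove this I would extend $g_\omega$ from $[0,t]$ to $[0,\infty)$ by the constant value $g_\omega(t)$ on $(t,\infty)$. Because $t\ge R_0$, $V\equiv0$ on $(t,\infty)$ and the extension has vanishing derivative there, so the radial energy is unchanged. The radialization of this extension is a nonnegative function in $H^1_{\text{loc}}(\R^n)$ tending to $g_\omega(t)$ at infinity (one checks $g_\omega^2 r^{n-1}$ and $g_\omega'^2 r^{n-1}$ are integrable near the origin, so it is genuinely in $H^1_{\text{loc}}$); dividing by $g_\omega(t)$ — the case $g_\omega(t)=0$ being trivial — and invoking the variational definition \eqref{def:scattt} of the scattering length, together with $s_n a^{n-2}=|\mathbb{S}^{n-1}|\,a_n$ and cancellation of the angular factor $|\mathbb{S}^{n-1}|$, yields exactly the claimed inequality. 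This is the only place where the scattering length $a$ enters, and it does so through the constraint $u(r)\to1$ at infinity, matched by the constant extension.

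Granting this, set $E:=\int_0^{R(\omega)}(g_\omega'^2+\tfrac12 V g_\omega^2)r^{n-1}\,dr$ (if $E=\infty$ the one-dimensional inequality is trivial). Applying the pointwise bound with $t=r$ for every $r$ in the support of $U$, which lies in $[R_0,R(\omega)]$, gives $g_\omega(r)^2\le E/a_n$ there. Hence
$$
a_n\int_0^{R(\omega)}U(r)\,g_\omega(r)^2\,r^{n-1}\,dr \;\le\; E\int_0^{R(\omega)}U(r)\,r^{n-1}\,dr \;\le\; E ,
$$
using the normalization $\int_0^\infty U(r)r^{n-1}\,dr\le1$. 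This is the desired one-dimensional estimate; integrating it over $\omega\in\mathbb{S}^{n-1}$ and recombining the polar-coordinate integrals — recalling $|\nabla\phi|^2\ge|\partial_r\phi|^2$ on the left and that $U,V,a_n$ are direction-independent on the right — produces the lemma.

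I expect the main obstacle to be not any single estimate but the measure-theoretic bookkeeping of the reduction to a.e.\ ray: verifying that the restrictions $g_\omega$ are genuinely absolutely continuous with $g_\omega'(r)=\omega\cdot\nabla\phi(r\omega)$, that the possible blow-up of $g_\omega$ toward the origin is harmless because of the $r^{n-1}$ weight, and that the radialized constant extension is an admissible competitor in \eqref{def:scattt}. Once these points are settled, the chain of inequalities in the last two paragraphs is elementary.
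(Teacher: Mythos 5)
Your proposal is correct and follows essentially the same route as the paper: reduction to a.e.\ ray via polar coordinates and star-shapedness, then the pointwise bound $\int_0^t\bigl(|g_\omega'|^2+\tfrac12 V|g_\omega|^2\bigr)r^{n-1}\,dr\ge a_n|g_\omega(t)|^2$ for $t\ge R_0$ obtained by rescaling the restriction, extending it by a constant, and using it as a competitor in \eqref{def:scattt}, and finally integration against $U(r)r^{n-1}\,dr$ with the normalization $\int_0^\infty U r^{n-1}\,dr\le1$. The only difference is that you spell out the measure-theoretic details (ACL representatives, admissibility in $H^1_{\mathrm{loc}}$) that the paper leaves implicit.
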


\begin{proof}
For any $\omega\in \mathbb{S}^{n-1}$ we let
$$
R(\omega)=\sup\{r\geq 0: s\omega \in B, \, \text{for each $0\leq s\leq r$}\}
$$
denote the (possibly infinite) distance from the origin to the boundary of $B$ in the direction of $\omega$. Since $B$ is open and star shaped w.r.t. the origin, it follows that, for any $r\geq 0$, $r\omega \in B$ if and only if $r<R(\omega)$. By passing into polar coordinates, we then see that it suffices to show that, for each fixed $\omega\in \mathbb{S}^{n-1}$,
\begin{equation} \label{eq:sdfjsdfisdfo}
\int_0^{R(\omega)}\big( |f'(r)|^2 + \frac{1}{2}V(r)|f(r)|^2 \big)r^{n-1}\, dr \geq a_n \int_0^{R(\omega)}U(r)|f(r)|^2 r^{n-1}\, dr,
\end{equation}
where $f(r):=\phi(r\omega)$ with $|f'(r)|\leq |\nabla \phi(r\omega)|$. We may assume that $R(\omega)>R_0$, since otherwise the right hand side in \eqref{eq:sdfjsdfisdfo} vanishes, and we claim that
\begin{equation} \label{eq:sdksdosdlsdp}
\int_0^{R(\omega)}\big( |f'(r)|^2 + \frac{1}{2}V(r)|f(r)|^2 \big)r^{n-1}\, dr \geq a_n|f(R)|^2, 
\end{equation}
for each $R_0<R<R(\omega)$. Indeed, if $f(R)\neq 0$, then the function $u$ given by $u(x)=|f(|x|)/f(R)|$ for $|x|\leq R$ and $u(x)=1$ for $|x|>R$ is admissible in \eqref{def:scattt}, and since $V(r)=0$, for $r>R$, it  follows that
$$
s_na^{n-2}\leq \frac{|\mathbb{S}^{n-1}|}{|f(R)|^2}\int_0^{R(\omega)}\big( |f'(r)|^2 + \frac{1}{2}V(r)|f(r)|^2 \big)r^{n-1}\, dr .
$$
Now \eqref{eq:sdfjsdfisdfo} follows by multiplying both sides of \eqref{eq:sdksdosdlsdp} with $U(R)R^{n-1}$ and then integrating w.r.t. $R$.
\end{proof}

\begin{corollary} \label{cor:saasjasdjasd}
Suppose that $U$ satisfies the conditions of Lemma \ref{lem:dyson}, and define
$$
W:=\sum_{i=1}^N U\circ t_i, \quad t_i(x_1,\ldots,x_N):=\min_{j\neq i}|x_i-x_j| .
$$
Then $H\geq a_nW$.
\end{corollary}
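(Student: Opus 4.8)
The plan is to reduce the $N$-particle estimate $H \ge a_n W$ to $N$ separate applications of Dyson's Lemma, one for each particle index $i$, by integrating out all the other coordinates. Fix a configuration $(x_1,\dots,x_N)$ and a particle index $i$. The key geometric observation is that, holding $x_j$ ($j\ne i$) fixed, the set of positions $x_i$ that are strictly closer to particle $i$'s own "Voronoi-type" cell — more precisely, the set where the nearest neighbour of $x_i$ among $\{x_j : j\ne i\}$ is realized — need not be star-shaped, so one cannot apply Lemma \ref{lem:dyson} on that set directly. Instead I would use the standard Dyson trick: for each ordered pair, the ball $B_i := \{x : |x - x_j| > \text{(something)}\}$ is replaced by a \emph{cleverly chosen star-shaped region}. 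Concretely, following \cite{liebyngvason98}, \cite{LSSY}, one partitions $\R^n$ (in the $x_i$ variable, with the other coordinates frozen) according to which $x_j$ is nearest, i.e. into the cells $\mathcal{V}_j := \{x_i : |x_i - x_j| \le |x_i - x_k| \text{ for all } k \ne i\}$; these Voronoi cells \emph{are} convex, hence star-shaped with respect to any of their points, in particular one may translate so that $x_j$ is the origin.

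Here are the steps in order. First, for fixed $i$ and fixed $(x_j)_{j\ne i}$, write the relevant part of the quadratic form as
$$
\int_{\R^n} \Big( |\nabla_{x_i}\Psi|^2 + \tfrac12 \sum_{j\ne i} V(x_i - x_j)|\Psi|^2 \Big)\, dx_i
\;\ge\; \sum_{j\ne i} \int_{\mathcal{V}_j} \Big( |\nabla_{x_i}\Psi|^2 + \tfrac12 V(x_i - x_j)|\Psi|^2 \Big)\, dx_i,
$$
using $V\ge 0$ to drop the cross terms (on $\mathcal V_j$ we keep only the $j$-th potential term) and that the cells cover $\R^n$. Second, on each $\mathcal V_j$, translate by $x_j$ so that the cell becomes a convex set containing the origin; since $U$ vanishes for $r\le R_0$ and $\int_0^\infty U(r) r^{n-1}\,dr \le 1$, Lemma \ref{lem:dyson} (applied with $\phi = \Psi(\dots, x_i, \dots)$ as a function of the shifted variable, and $B = \mathcal V_j - x_j$) gives
$$
\int_{\mathcal{V}_j} \Big( |\nabla_{x_i}\Psi|^2 + \tfrac12 V(x_i - x_j)|\Psi|^2 \Big)\, dx_i
\;\ge\; a_n \int_{\mathcal{V}_j} U(x_i - x_j)\,|\Psi|^2\, dx_i.
$$
Third, observe that on $\mathcal V_j$ we have $|x_i - x_j| = t_i(x_1,\dots,x_N)$ by definition of the Voronoi cell and of $t_i$, so $U(x_i - x_j) = U(t_i(x_1,\dots,x_N))$ there; summing over $j\ne i$ and using that the cells cover $\R^n$ yields
$$
\int_{\R^n} \Big( |\nabla_{x_i}\Psi|^2 + \tfrac12 \sum_{j\ne i} V(x_i - x_j)|\Psi|^2 \Big)\, dx_i
\;\ge\; a_n \int_{\R^n} U(t_i)\,|\Psi|^2\, dx_i.
$$
Finally, integrate this over the remaining variables $(x_j)_{j\ne i}$, sum over $i = 1,\dots,N$, and note that $\sum_{i=1}^N \sum_{j\ne i} \tfrac12 V(x_i - x_j) = \sum_{j<k} V(x_j - x_k)$, which is exactly the interaction term in $H$; the gradient terms sum to $\sum_i \langle \Psi, -\Delta_i \Psi\rangle$, while the right-hand side becomes $a_n\langle \Psi, W\Psi\rangle$. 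This gives $\langle\Psi, H\Psi\rangle \ge a_n\langle\Psi, W\Psi\rangle$ for all $\Psi$ in the form domain, i.e. $H \ge a_n W$.

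The main obstacle — really the only subtle point — is the measurability/overlap bookkeeping for the Voronoi decomposition: the cells $\mathcal V_j$ overlap on a set of measure zero (their common boundaries), one must check that $t_i$ and $W$ are genuinely measurable functions on $\Lambda^N$, and one must make sure the $x_i$-integral is legitimately split cell-by-cell even when $\Psi$ is only $H^1$ (handled by noting the boundaries have measure zero and that restrictions of $H^1$ functions to the convex cells lie in $H^1$ of the cell, so Lemma \ref{lem:dyson} applies). A minor point is that $R(\omega)$ in Lemma \ref{lem:dyson} may be finite here (the Voronoi cell is bounded in most directions), but the lemma was stated to allow exactly that. Everything else is a direct consequence of Lemma \ref{lem:dyson} and the nonnegativity of $V$.
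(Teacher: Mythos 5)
Your proof is correct and follows essentially the same route as the paper: decompose the $x_i$-space into the nearest-neighbour (Voronoi) cells attached to the other particles, note these are convex hence star-shaped about $x_j$ after translation, apply Lemma \ref{lem:dyson} cell by cell, and recombine via Fubini; the only cosmetic difference is that you keep the full interaction $\tfrac12\sum_{j\neq i}V(x_i-x_j)$ and drop the non-nearest terms inside each cell, whereas the paper first passes to $H\geq\sum_i(-\Delta_i+\tfrac12 V\circ t_i)$ and then decomposes. Just replace your integrals over $\R^n$ by integrals over $\Lambda$ (the cells $\mathcal V_j\cap\Lambda$ are still convex and contain $x_j$), as $H$ here is the Neumann Hamiltonian on the box.
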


\begin{proof}
Since $V$ is nonnegative and radial,
$$
\sum_{i=1}^NV(t_i(\vec x)) \leq \sum_{i=1}^N\sum_{j<i}V(x_i-x_j)+\sum_{i=1}^N\sum_{j>i}V(x_i-x_j) =2\sum_{i<j}V(x_i-x_j),
$$
for each $\vec x=(x_1,\ldots,x_N)$, and hence
\begin{equation} \label{sdsdsdnsdbsdgsdvsdf}
H\geq \sum_{i=1}^N\big(-\Delta_i+\frac{1}{2}V\circ t_i \big) .
\end{equation}
We focus on the first term $i=1$, and fix $x_2,\ldots,x_N\in \Lambda$. For $j\neq 1$ define
$$
B_j=\{x_1\in \Lambda : t_1(\vec x)=|x_1-x_j|\} .
$$
Fix an arbitrary $\psi\in H^1(\Lambda^N)$. By a change of variables $x_1\mapsto x_1+x_j$, and by noting that $(B_j-x_j)$ is star shaped w.r.t. the origin (indeed convex), we may apply Dyson's lemma to obtain
\begin{equation} \label{eq:sdjsdudshrl}
\int_{B_j} |\nabla_1\psi(\vec x)|^2+\frac{1}{2}V(t_i(\vec x))|\psi(\vec x)|^2\, d x_1 \geq a_n\int_{B_j} U(t_1(\vec x))|\psi(\vec x)|^2\, dx_1 ,
\end{equation}
for each $j\neq 1$. Moreover, since the $B_j$'s cover $\Lambda$ disjointly (a.e.), we conclude that \eqref{eq:sdjsdudshrl} holds with $B_j$ replaced by $\Lambda$. Then, by Fubini's theorem,
$$
\int_{\Lambda^N} |\nabla_1\psi(\vec x)|^2+\frac{1}{2}V(t_1(\vec x))|\psi(\vec x)|^2\, d\vec x \geq a_n\int_{\Lambda^N} U(t_1(\vec x))|\psi(\vec x)|^2 \, d\vec x.
$$
We get analogous contributions from $i=2,\ldots,N$ in \eqref{sdsdsdnsdbsdgsdvsdf}, and upon adding them, we obtain the result.
\end{proof}

We now combine Corollary \ref{cor:saasjasdjasd} with Temple's inequality \cite{RS4} in a perturbative approach. The parameters $R$ and $\epsilon$ appearing below will be chosen appropriately later on.
\begin{lemma} \label{sdsdmsdlsdksdodsp}
Let $0<\epsilon<1$ and $R_0<R<L/2$. Suppose that
$$
G(N,L):=\epsilon\pi^2L^{-2}-s_na^{n-2} L^{-n}N^2 > 0.
$$
Then
$$
E_0(N,L)\geq N(N-1)K(N,L),
$$
where
$$
K(N,L):=\frac{s_na^{n-2}}{L^n} (1-\epsilon)\bigg(1-\frac{2R}{L}\bigg)^n \bigg(1-v_n\frac{R^n}{L^n}\bigg)^{N-2}  \bigg(1-\frac{n(n-2)a^{n-2}N}{(R^n-R_0^n)G(N,L)} \bigg)
$$
Here $v_n$ denotes the measure of the unit ball in $\R^n$.
\end{lemma}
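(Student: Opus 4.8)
The plan is to follow the scheme of Lieb--Yngvason \cite{liebyngvason98}: trade the singular potential $V$ for a bounded, integrable ``soft'' potential of range $R$ by means of Dyson's Lemma, keeping a fraction $\epsilon$ of the kinetic energy in reserve, and then bound the ground state energy of the resulting operator from below by Temple's inequality around the constant function.

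\emph{Step 1 (reduction via Dyson's Lemma).} Take $U$ to be the characteristic function of the shell $[R_0,R]$, normalized so that $\int_0^\infty U(r)r^{n-1}\,dr=1$, i.e.\ $U(r)=n/(R^n-R_0^n)$ for $R_0\le r\le R$ and $U(r)=0$ otherwise; then $U$ is admissible in Lemma \ref{lem:dyson}. Set $W:=\sum_{i=1}^N U\circ t_i$ and $H_{\text{kin}}:=\sum_i(-\Delta_i)$. Splitting $H=\epsilon H_{\text{kin}}+(1-\epsilon)H_{\text{kin}}+\sum_{i<j}V(x_i-x_j)$ and using the two ingredients of the proof of Corollary \ref{cor:saasjasdjasd} --- namely $(1-\epsilon)\tfrac12\sum_i V\circ t_i\le\sum_{i<j}V(x_i-x_j)$ and the form inequality $\sum_i(-\Delta_i+\tfrac12 V\circ t_i)\ge a_n W$ --- we obtain, as quadratic forms under Neumann boundary conditions,
\[
H\ \ge\ \tilde H:=\epsilon H_{\text{kin}}+\mathcal V,\qquad \mathcal V:=(1-\epsilon)a_n W\ \ge\ 0 ,
\]
so $E_0(N,L)\ge\inf\sigma(\tilde H)$. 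Moreover $0\le\mathcal V\le(1-\epsilon)a_n\,nN/(R^n-R_0^n)$ pointwise, since at most $N$ of the terms $U\circ t_i$ are simultaneously nonzero.

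\emph{Step 2 (Temple's inequality).} Let $\Psi_0\equiv L^{-nN/2}$ be the normalized Neumann ground state of $H_{\text{kin}}$, so $H_{\text{kin}}\Psi_0=0$, and write $\langle\cdot\rangle_0:=\langle\Psi_0,\cdot\,\Psi_0\rangle$; then $\langle\tilde H^2\rangle_0=\langle\mathcal V^2\rangle_0\le\|\mathcal V\|_\infty\langle\mathcal V\rangle_0$. The first nonzero Neumann eigenvalue of $-\Delta$ on $\Lambda_L$ being $\pi^2/L^2$, the second eigenvalue $E_1$ of $\tilde H$ satisfies $E_1\ge\epsilon\pi^2/L^2$ by min--max ($\mathcal V\ge0$). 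From $U\circ t_1\le\sum_{j\ge2}U(|x_1-x_j|)$ one gets $\langle W\rangle_0\le N(N-1)L^{-n}\int_{\R^n}U=N(N-1)L^{-n}|\mathbb S^{n-1}|$, hence $\langle\mathcal V\rangle_0\le(1-\epsilon)s_na^{n-2}N(N-1)L^{-n}\le s_na^{n-2}N^2L^{-n}<\epsilon\pi^2/L^2$ by the hypothesis $G(N,L)>0$; in particular $\langle\tilde H\rangle_0=\langle\mathcal V\rangle_0<E_1$ and $E_1-\langle\mathcal V\rangle_0\ge G(N,L)$. Temple's inequality \cite{RS4} then yields
\[
\inf\sigma(\tilde H)\ \ge\ \langle\mathcal V\rangle_0\Big(1-\frac{\|\mathcal V\|_\infty}{E_1-\langle\mathcal V\rangle_0}\Big)\ \ge\ \langle\mathcal V\rangle_0\Big(1-\frac{n(n-2)a^{n-2}N}{(R^n-R_0^n)\,G(N,L)}\Big).
\]

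\emph{Step 3 (lower bound on $\langle\mathcal V\rangle_0$ and conclusion).} It remains to bound $\langle W\rangle_0=N L^{-nN}\int_{\Lambda^N}U(t_1(\vec x))\,d\vec x$ from below. Restrict $x_1$ to the concentric cube of side $L-2R$, so that $\overline{B(x_1,R)}\subset\Lambda$; for each $k\in\{2,\dots,N\}$, on the set of configurations with $x_k$ in the shell $R_0\le|x_1-x_k|\le R$ and all other $x_j\notin\overline{B(x_1,R)}$ one has $t_1=|x_1-x_k|\in[R_0,R]$, hence $U(t_1)=n/(R^n-R_0^n)$ there; these $N-1$ sets are pairwise disjoint and of total volume $(N-1)v_n(R^n-R_0^n)(L^n-v_nR^n)^{N-2}$. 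Using $|\mathbb S^{n-1}|=nv_n$ and $s_n=(n-2)|\mathbb S^{n-1}|$ this gives
\[
\langle\mathcal V\rangle_0\ \ge\ (1-\epsilon)\,\frac{s_na^{n-2}}{L^n}\,N(N-1)\Big(1-\frac{2R}{L}\Big)^{n}\Big(1-\frac{v_nR^n}{L^n}\Big)^{N-2}.
\]
Inserting this into the estimate of Step 2 gives $E_0(N,L)\ge N(N-1)K(N,L)$ whenever the bracket $1-n(n-2)a^{n-2}N/((R^n-R_0^n)G)$ is nonnegative; if it is negative, then $K(N,L)<0$ while $E_0(N,L)\ge0$ (as $H\ge0$ under Neumann boundary conditions), so the asserted inequality is trivial. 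Once $U$ and $\epsilon$ are fixed the computation is essentially bookkeeping; the delicate point is the lower bound just above, where one must recover exactly the combinatorial factor $N(N-1)$ together with the two volume corrections --- this relies on choosing \emph{disjoint} single--neighbour events and on the identity $|\mathbb S^{n-1}|=nv_n$ --- and the one subtlety to keep in mind is that the operative spectral gap is that of the Neumann (not Dirichlet) Laplacian on the box, namely $\pi^2/L^2$.
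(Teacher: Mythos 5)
Your proof is correct and follows essentially the same route as the paper: the $\epsilon$-split with Dyson's lemma, Temple's inequality around the constant Neumann ground state with the variance controlled by the pointwise bound $W\leq nN/(R^n-R_0^n)$, the upper bound on $\bra W\ket$ to replace the denominator by $G(N,L)$, and the lower bound on $\bra W\ket$ via the disjoint single-neighbour shell events with $x_1$ restricted to the inner cube of side $L-2R$. Your explicit remark that the bound is trivial when the final bracket (hence $K(N,L)$) is negative is a small point the paper leaves implicit, but otherwise the arguments coincide.
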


\begin{proof}
Suppose that $U$ and $W$ are as in Lemma \ref{lem:dyson} respectively Corollary \ref{cor:saasjasdjasd}. Together with the fact that $V$ is nonnegative, we then have a lower bound
$$
H=\epsilon H +(1-\epsilon)H \geq  -\epsilon \Delta +(1-\epsilon)a_n W =:\tilde H ,
$$
and consequently
\begin{equation} \label{sdsdsdmsdnsdhsdjsdg}
E_0(N,L)\geq \tilde E_0(N,L):=\inf \sigma(\tilde H).
\end{equation}
We estimate $\tilde E_0(N,L)$ by employing Temple's inequality in the ground state of $-\epsilon \Delta$ (with Neumann Boundary conditions), which is the constant function $\phi_0(x)\equiv |\Lambda|^{-N/2}$ with corresponding eigenvalue zero. Given any operator $A$ on $L^2(\Lambda^N)$ with domain containing $\phi_0$, we let $\bra A\ket=\bra \phi_0,A\phi_0\ket$. Temple's inequality and \eqref{sdsdsdmsdnsdhsdjsdg} yields
\begin{eqnarray*}
E_0(N,L)&\geq& \bra \tilde H\ket -\frac{\bra \tilde H^2\ket -\bra \tilde H \ket^2}{\tilde E_1-\bra \tilde H \ket} \\
&=& (1-\epsilon)a_n\bra W\ket - \frac{(1-\epsilon)^2a_n^2\big(\bra W^2\ket -\bra W\ket^2 \big)}{\tilde E_1-(1-\epsilon)a_n\bra W\ket},
\end{eqnarray*}
provided $\bra \tilde H\ket<\tilde E_1$, where $\tilde E_1$ is the second lowest eigenvalue of $\tilde H$. Note however that, since $W$ is nonnegative, we have $\tilde H \geq -\epsilon \Delta$, and hence $\tilde E_1 \geq \epsilon \pi^2/L^2$, which is the second lowest eigenvalue of $-\epsilon \Delta$. We now choose the function $U$ to be 
$$
U(r):=\left\{ \begin{array}{ll}  
n(R^n-R_0^n)^{-1} & \text{for $R_0<r<R$} \\
0 & \text{otherwise}
\end{array} \right. .
$$
By discarding the term $\bra W\ket^2$, replacing $(1-\epsilon)$ by $1$ in two appropriate places and employing the fact that
$$
\bra W^2\ket \leq n\cdot N (R^n-R_0^n)^{-1}\bra W\ket,
$$
we obtain
\begin{equation}\label{sdlsdsdpsdosdpsdi}
E_0(N,L)\geq (1-\epsilon)a_n \bra W\ket \bigg[1-\frac{n a_n N}{(R^n-R_0^n)\big(\epsilon\pi^2/L^2-a_n\bra W\ket \big)} \bigg],
\end{equation}
provided $a_n\bra W\ket <\epsilon\pi^2/L^2$. To estimate this further, we need upper and lower bounds on $\bra W\ket$, and we claim that
\begin{equation}\label{adsmasdjsaduasdyasdusadi}
\frac{|\mathbb{S}^{n-1}|}{L^{n}}N(N-1)\big(1-2R/L \big)^n \big(1-v_n R^n/L^n \big)^{N-2} \leq \bra W\ket \leq \frac{|\mathbb{S}^{n-1}|}{L^{n}} N(N-1) .
\end{equation}
This will conclude the proof of the lemma. For the upper bound in \eqref{adsmasdjsaduasdyasdusadi} we fix $x_1\in \Lambda$ and notice that
$$
 \big\{(x_2,\ldots,x_N)\in \Lambda^{N-1}: R_0<t_1(\vec x)<R\big\} \subseteq \bigcup_{j=2}^N F_j,
$$
where $\vec x=(x_1,\ldots,x_N)$ and $F_j=\Lambda^{N-1}$, except that the $j$'th factor is replaced by $B(x_1,R)\backslash B(x_1,R_0)$. It follows that
$$
\int_{\Lambda^{N-1}}U(t_1(\vec x))\, dx_2\ldots dx_N \leq \frac{n}{R^n-R_0^n} \sum_{j=2}^N |F_j| = |\mathbb{S}^{n-1}|(N-1)|\Lambda|^{N-2}.
$$
By integrating over $x_1\in \Lambda$ and then adding the identical contributions from the integrals of $U(t_2),\ldots,U(t_N)$, we arrive at the upper bound in \eqref{adsmasdjsaduasdyasdusadi}. To verify the lower bound, we let $\Lambda'\subseteq \Lambda$ denote the cube with same center as $\Lambda$ but with side length $L-2R$. Fix $x_1\in \Lambda'$ and notice that $B(x_1,R)\subseteq \Lambda$. We then have
\begin{equation} \label{eq:ssdhdsnsdvsdb}
\bigcup_{j=2}^N E_j \subseteq  \big\{(x_2,\ldots,x_N)\in \Lambda^{N-1}: R_0<t_1(\vec x)<R\big\},
\end{equation}
where
$$
E_j=(\Lambda\backslash B(x_1,R))^{N-1} 
$$
except again that the $j$'th factor is replaced by $B(x_1,R)\backslash B(x_1,R_0)$. Since the $E_j$'s are pairwise disjoint, \eqref{eq:ssdhdsnsdvsdb} implies that
$$
 \int_{\Lambda^{N-1}} U(t_1(\vec x))\, dx_2\ldots dx_N \geq \frac{n}{R^n-R_0^n} \sum_{j=2}^N |E_j| = |\mathbb{S}^{n-1}|(N-1)(|\Lambda|-v_nR^n)^{N-2},
$$
and integrating over $\Lambda \supset \Lambda'\ni x_1$, we obtain
$$
 \int_{\Lambda^{N}} U(t_1(\vec x))\, d\vec x \geq |\mathbb{S}^{n-1}|(N-1)(L-2R)^n (|\Lambda|-v_nR^n)^{N-2} .
$$
Again, by adding identical contributions from the integrals of  $U(t_2),\ldots,U(t_N)$, we have proved \eqref{adsmasdjsaduasdyasdusadi} and with it the lemma.
\end{proof}
 
Note that, for fixed $\rho>0$,
 $$
 G(\rho L^n,L)\leq \pi^2 L^{-2}-s_na^{n-2}\rho^2L^n <0,
 $$
for large $L$, so Lemma \ref{sdsdmsdlsdksdodsp} may not immediately be applied to get estimates in the thermodynamic limit.

\begin{lemma}\label{lem:super}
The mapping $N\mapsto E_0(N,L)$ is superadditive, i.e.,
$$
E_0(k+m,L)\geq E_0(k,L)+E_0(m,L), \quad \text{for all $k,m\in \N$}.
$$
\end{lemma}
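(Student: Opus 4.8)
The statement to prove is superadditivity of $N \mapsto E_0(N,L)$, namely $E_0(k+m,L) \geq E_0(k,L) + E_0(m,L)$. The natural strategy is to take a near-ground state $\Psi$ for $H_{k+m,L}$ and extract from it a lower bound involving the energies of the two subsystems. The key observation is that the interaction potential $V$ is \emph{nonnegative}, so dropping the cross-interactions between two groups of particles can only decrease the energy. Concretely, split the $k+m$ particle coordinates into a first block $(x_1,\ldots,x_k)$ and a second block $(x_{k+1},\ldots,x_{k+m})$, and write
\begin{equation*}
H_{k+m,L} \geq H^{(1)}_{k,L} + H^{(2)}_{m,L},
\end{equation*}
where $H^{(1)}_{k,L}$ acts on the first block of variables (kinetic terms $-\Delta_i$ for $i \leq k$ plus pair interactions $V(x_i-x_j)$ with $i<j\leq k$) and $H^{(2)}_{m,L}$ acts on the second block; the discarded terms are exactly the nonnegative cross-interactions $\sum_{i\leq k < j} V(x_i-x_j) \geq 0$.

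Next I would take expectation values. For a normalized $\Psi \in L^2_{\mathrm{sym}}(\Lambda^{k+m})$ one has $\langle \Psi, H_{k+m,L}\Psi\rangle \geq \langle \Psi, (H^{(1)}_{k,L}+H^{(2)}_{m,L})\Psi\rangle$. To bound the right-hand side from below, freeze the second block of variables and view $\Psi(\cdot\,,x_{k+1},\ldots,x_{k+m})$ as an (unnormalized) element of $L^2(\Lambda^k)$; the variational characterization of $E_0(k,L)$ on the \emph{full} (not necessarily symmetric) space $L^2(\Lambda^k)$ — whose infimum equals $E_0(k,L)$ since the bosonic ground state of $H_{k,L}$ is symmetric, as used already in the proof of Theorem \ref{thm:dyson_upper} — gives
\begin{equation*}
\int_{\Lambda^k} \Big( \sum_{i=1}^k |\nabla_i \Psi|^2 + \sum_{i<j\leq k} V(x_i-x_j)|\Psi|^2 \Big)\, dx_1\cdots dx_k \geq E_0(k,L) \int_{\Lambda^k} |\Psi|^2\, dx_1\cdots dx_k .
\end{equation*}
Integrating this over the frozen variables $x_{k+1},\ldots,x_{k+m} \in \Lambda$ yields $\langle \Psi, H^{(1)}_{k,L}\Psi\rangle \geq E_0(k,L)\|\Psi\|^2 = E_0(k,L)$, and symmetrically $\langle \Psi, H^{(2)}_{m,L}\Psi\rangle \geq E_0(m,L)$. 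Adding these and taking the infimum over normalized $\Psi$ gives $E_0(k+m,L) \geq E_0(k,L) + E_0(m,L)$, as desired.

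A couple of technical points need care rather than being genuine obstacles. One must make sure the manipulations are valid for a general $\Psi$ in the form domain of $H_{k+m,L}$, which is handled by standard density/quadratic-form arguments (working with the quadratic forms throughout, with $V \geq 0$ making all terms well-defined in $[0,\infty]$ for the potential part). One also needs that the Neumann (or whatever) boundary conditions are compatible with restricting $\nabla_i\Psi$ to slices — this is fine since $H^1(\Lambda^{k+m})$ restricts to $H^1$ in the first $k$ variables for a.e. choice of the remaining ones, by Fubini. The mild subtlety worth flagging is the use of the full-space variational principle to compare against the \emph{bosonic} ground state energy $E_0(k,L)$: the slice $\Psi(\cdot\,,x_{k+1},\ldots)$ is generally not symmetric in its own $k$ variables, but the infimum of $\langle\Phi,H_{k,L}\Phi\rangle/\|\Phi\|^2$ over \emph{all} of $L^2(\Lambda^k)$ coincides with $E_0(k,L)$ precisely because the minimizer on the full space is symmetric. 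This is the only place where the structure of the problem (as opposed to mere positivity of $V$) enters, and it is exactly the same fact already invoked earlier in the paper.
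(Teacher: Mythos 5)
Your proof is correct and follows essentially the same route as the paper: discard the nonnegative cross-interactions between the two blocks of particles, then apply the variational principle slice-wise via Fubini (the paper's proof is exactly this, with the symmetry point you flag handled implicitly by working with arbitrary $\psi\in H^1(\Lambda^{k+m})$ and the fact that the ground state on the full space is symmetric).
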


\begin{proof}
Fix an arbitrary normalized $\psi\in H^1(\Lambda^{k+m})$. Since $V$ is nonnegative, it follows that
\begin{eqnarray}  \label{sadjasdkasdiasdo}
\bra \psi, H\psi\ket &\geq& \int_{\Lambda^{k+m}}  \sum_{i=1}^{k}|\nabla_i \psi|^2 + \sum_{1\leq i<j\leq k}V(x_i-x_j) |\psi|^2 \\
&+&  \int_{\Lambda^{k+m}} \sum_{i=k+1}^{k+m}|\nabla_i \psi|^2 + \sum_{k+1\leq i<j\leq k+m}V(x_i-x_j) |\psi|^2 . \nonumber
\end{eqnarray}
Then, by Fubini's theorem, 
$$
\int_{\Lambda^{k+m}} \sum_{i=1}^{k}|\nabla_i \psi|^2 + \sum_{1\leq i<j\leq k}V(x_i-x_j) |\psi|^2 \geq \int_{\Lambda^m}\left(E_0(k,L)\int_{\Lambda^k}|\psi|^2\right)=E_0(k,L),
$$
and similarly for the second term on the right-hand side in \eqref{sadjasdkasdiasdo} .
\end{proof}

\begin{lemma}
Suppose that $L/l\in \N$. Then 
\begin{equation} \label{cellbound}
E_0(N,L)\geq M \cdot \min \sum_{m=0}^Nc_mE_0(m,l),
\end{equation}
where $M:=(L/l)^n$ and where the minimum is over all tuples $(c_0,\ldots,c_N)$ of numbers $c_m\geq 0$ subject to the conditions
\begin{equation}\label{eq:constraints}
\sum_{m=0}^Nc_m=1 \quad \text{and} \quad \sum_{m=0}^Nmc_m=N/M .
\end{equation}
\end{lemma}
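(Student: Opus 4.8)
The plan is to run the familiar cell-decomposition argument. Since $L/l\in\N$ we may tile $\Lambda=\Lambda_L$ by $M=(L/l)^n$ pairwise disjoint congruent sub-cubes $Q_1,\dots,Q_M$ of side length $l$. The idea is to decouple $H=H_{N,L}$ (with Neumann boundary conditions, as fixed above) into a direct sum over all ways of assigning the $N$ particles to these cells, bound each summand from below by a sum of single-cell ground state energies, and finally repackage the resulting combinatorial minimum in terms of the weights $c_m$.

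First I would carry out the decoupling. As the ground state of $H$ on the full space $L^2(\Lambda^N)$ is symmetric, it suffices to bound $\inf\sigma(H)$ there. For $\psi\in H^1(\Lambda^N)$ the kinetic form is additive over the product cells, $\int_{\Lambda^N}\sum_i|\nabla_i\psi|^2=\sum_{\alpha_1,\dots,\alpha_N}\int_{Q_{\alpha_1}\times\cdots\times Q_{\alpha_N}}\sum_i|\nabla_i\psi|^2$, while nonnegativity of $V$ allows us to discard all pair interactions between particles lying in different cells. Writing $L^2(\Lambda^N)=\bigoplus_{f\colon\{1,\dots,N\}\to\{1,\dots,M\}}\bigotimes_{i=1}^N L^2(Q_{f(i)})$ and setting $m_\alpha(f):=\#f^{-1}(\alpha)$, on the summand indexed by $f$ we obtain the operator lower bound $\sum_{\alpha=1}^M H^{\mathrm{Neu}}_{m_\alpha(f),Q_\alpha}$, whose infimum of spectrum equals $\sum_\alpha E_0(m_\alpha(f),l)$ by translation invariance (the cubes act on disjoint tensor factors). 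Hence
$$
E_0(N,L)\ \ge\ \min\Big\{\textstyle\sum_{\alpha=1}^M E_0(m_\alpha,l)\ :\ m_\alpha\ \text{nonnegative integers},\ \sum_{\alpha=1}^M m_\alpha=N\Big\}.
$$

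Next I would convert this minimum over cell occupations into the stated form. Given an admissible tuple $(m_1,\dots,m_M)$ (note $m_\alpha\le N$ automatically), put $c_m:=M^{-1}\#\{\alpha:m_\alpha=m\}$ for $m=0,1,\dots,N$. Then $c_m\ge0$, $\sum_{m=0}^N c_m=1$ and $\sum_{m=0}^N m\,c_m=M^{-1}\sum_{\alpha=1}^M m_\alpha=N/M$, so $(c_0,\dots,c_N)$ satisfies \eqref{eq:constraints}; moreover $\sum_{\alpha=1}^M E_0(m_\alpha,l)=M\sum_{m=0}^N c_m E_0(m,l)$. Since every such $c$ is in particular a competitor in the minimization over all real tuples obeying \eqref{eq:constraints}, and the latter minimum is attained (the feasible set is a nonempty compact polytope and the objective is linear), we conclude $\sum_\alpha E_0(m_\alpha,l)\ge M\min\sum_{m=0}^N c_m E_0(m,l)$. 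Combining this with the previous display gives \eqref{cellbound}.

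There is no real obstacle here; the argument is essentially bookkeeping. The one step deserving care is the form-decoupling: one must verify that restriction embeds $H^1(\Lambda^N)$ into $\bigoplus_f\bigotimes_i H^1(Q_{f(i)})$ and that the Neumann form on $\Lambda^N$ is dominated by (in fact agrees, on this subspace, with) the direct sum of the cell-wise Neumann forms, so that dropping the cross-cell interaction terms yields a genuine operator lower bound. Note in particular that no convexity of $m\mapsto E_0(m,l)$ is needed at this stage — passing from integer occupations to arbitrary nonnegative weights $c_m$ only uses the trivial fact that a minimum over a larger set is no larger than the minimum over the finitely many occupation-derived distributions.
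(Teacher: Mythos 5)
Your proposal is correct and follows essentially the same route as the paper: tile $\Lambda$ into $M$ cells, drop the inter-cell interactions using $V\geq 0$, bound each cell configuration from below by the sum of Neumann cell ground state energies, and then encode the occupation numbers as the weights $c_m$ satisfying \eqref{eq:constraints}. The paper phrases the decoupling at the level of the quadratic form on a fixed $\psi$ (via Fubini and the argument of the superadditivity lemma) rather than through a direct-sum/Neumann-bracketing decomposition of $L^2(\Lambda^N)$, but this is only a difference in bookkeeping, not in substance.
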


\begin{proof}
We partition $\Lambda$ into $M$ disjoint boxes $\Lambda_1,\ldots,\Lambda_M$, each of side length $l$. Correspondingly we have a partition $\{\Omega_{\beta}\}$ of $\Lambda^N$,
$$
\Omega_{\beta}:=\Lambda_{\beta_1}\times \ldots\times \Lambda_{\beta_N}, \quad \beta=(\beta_1,\ldots,\beta_N), \; 1\leq \beta_j\leq M ,
$$
and hence
\begin{equation} \label{sdsdsdmsdhsdjsdu}
\bra \psi, H\psi \ket = \sum_{\beta}\int_{\Omega_{\beta}} \sum_{i=1}^N|\nabla_i\psi|^2+\sum_{i<j}V(x_i-x_j)|\psi|^2 .
\end{equation}
Fix a $\beta$ as above. By Fubini's theorem, the integration regime $\Omega_{\beta}$ may be replaced by $\Lambda_1^{\alpha_1}\times\ldots \times \Lambda_M^{\alpha_M}$, for some multiindex $\alpha\in \N_0^M$ with length $|\alpha|=N$. For each $0\leq m\leq N$, we let $M\cdot c_m$ denote the number of components of $\alpha$ equal to $m$. By following the proof of Lemma \ref{lem:super}, we split the kinetic energy into appropriate terms, and discard interactions between particles in different boxes to obtain the lower bound
\begin{eqnarray*}
\int_{\Omega_{\beta}} \ldots &\geq & \bigg(\int_{\Omega_\beta}|\psi|^2 \bigg)\sum_{j=1}^ME_0(\alpha_j,l) \\
&=&  \bigg(\int_{\Omega_\beta}|\psi|^2 \bigg)  M\sum_{m=0}^Nc_mE_0(m,l) \\
&\geq& \bigg(\int_{\Omega_\beta}|\psi|^2 \bigg)  M \min \bigg(\sum_{m=0}^Nc_mE_0(m,l)\bigg) .
\end{eqnarray*}
Employing this estimate in \eqref{sdsdsdmsdhsdjsdu} yields the result.
\end{proof}

\begin{lemma} \label{asfasraseast}
Let  $\rho=N/L^n$. Suppose that $L/l\in \N$, $R_0<R<l/2$ and $ G(4\rho l^n,l)>0$. Then
$$
\frac{E_0(N,L)}{N}\geq (\rho l^n-1)K(4\rho l^n,l).
$$
\end{lemma}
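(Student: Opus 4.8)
The plan is to feed the Temple-type bound of Lemma~\ref{sdsdmsdlsdksdodsp} and the superadditivity of Lemma~\ref{lem:super} into the cell estimate~\eqref{cellbound}, and then to reduce the resulting minimisation to an elementary optimisation over two-point distributions. Write $p:=\rho l^n=N/M$ with $M:=(L/l)^n$, and set $K:=K(4\rho l^n,l)$. Since $N=Mp$, by~\eqref{cellbound} it suffices to prove
$$
\sum_{m=0}^{N}c_m E_0(m,l)\ \ge\ p(p-1)K
$$
for every tuple with $c_m\ge 0$, $\sum_m c_m=1$ and $\sum_m m c_m=p$; dividing the resulting bound $E_0(N,L)\ge Mp(p-1)K$ by $N$ then gives exactly $E_0(N,L)/N\ge(p-1)K$. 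One may assume $K\ge 0$ and $p\ge 1$, for otherwise the right-hand side is non-positive while $E_0(N,L)\ge 0$ (the Neumann Hamiltonian is non-negative). The map $c\mapsto\sum_m c_m E_0(m,l)$ is linear, hence attains its minimum over the compact polytope $\{c\ge 0:\sum_m c_m=1,\ \sum_m m c_m=p\}$ at an extreme point; as this polytope is cut out by two affine equalities, its extreme points have at most two nonzero coordinates. So it is enough to treat $c$ supported on two integers $m_1\le p\le m_2$, with $c_{m_1}=(m_2-p)/(m_2-m_1)$ and $c_{m_2}=(p-m_1)/(m_2-m_1)$ (the degenerate case $m_1=m_2=p$ being subsumed below).

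Two lower bounds on $E_0(m,l)$ will be used. First, for $0\le m\le 4p$ one has $E_0(m,l)\ge m(m-1)K$: this is trivial for $m\in\{0,1\}$, and for $2\le m\le 4p$ it follows from Lemma~\ref{sdsdmsdlsdksdodsp} on the box of side $l$ (here $G(m,l)\ge G(4p,l)>0$ since $G(\cdot,l)$ is decreasing) together with the fact, read off from the explicit formula, that $m\mapsto K(m,l)$ is non-increasing on $[2,4p]$ whenever $K(4p,l)\ge 0$. Second, for $m>4p$, put $g:=\lfloor 4p\rfloor\ge 2$; by superadditivity (Lemma~\ref{lem:super}), $E_0(m,l)\ge\lfloor m/g\rfloor\,E_0(g,l)$, and since $G(g,l)\ge G(4p,l)>0$, applying Lemma~\ref{sdsdmsdlsdksdodsp} on the box of side $l$ together with $\lfloor m/g\rfloor\ge m/(2g)$ and $g-1\ge 4p-2$ gives
$$
E_0(m,l)\ \ge\ \frac{m}{2g}\,g(g-1)\,K\ \ge\ m(2p-1)\,K .
$$

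Now combine the cases. If $m_2\le 4p$, then $E_0(m_i,l)\ge m_i(m_i-1)K\ge 0$ for $i=1,2$; since $x\mapsto x(x-1)$ is convex and $c_{m_1}m_1+c_{m_2}m_2=p$, Jensen's inequality gives $\sum_i c_{m_i}E_0(m_i,l)\ge K\sum_i c_{m_i}m_i(m_i-1)\ge Kp(p-1)$. If $m_2>4p$, set $u:=c_{m_1}m_1\in[0,m_1]$, so that $c_{m_2}m_2=p-u$ and
$$
\sum_i c_{m_i}E_0(m_i,l)\ \ge\ K\bigl[u(m_1-1)+(p-u)(2p-1)\bigr]\ =\ K\bigl[p(2p-1)-u(2p-m_1)\bigr]\ \ge\ Kp(p-1),
$$
the last step because $u(2p-m_1)\le m_1(2p-m_1)\le p^2$ (as $m_1\mapsto m_1(2p-m_1)$ is non-decreasing on $[0,p]$ with value $p^2$ at $m_1=p$). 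In either case $\sum_m c_m E_0(m,l)\ge p(p-1)K$, which proves the lemma.

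I expect the one genuinely delicate point to be the regime $m_2>4p$: the minimising distribution may place a small weight on an arbitrarily large box, and one must recover enough energy from it through superadditivity. Taking $g=\lfloor 4p\rfloor$, the largest group size still licensed by the hypothesis $G(4\rho l^n,l)>0$, is precisely what produces the coefficient $2p-1$, and this is exactly sharp enough for the elementary inequality $p(2p-1)-u(2p-m_1)\ge p(p-1)$ to close; a cruder superadditivity estimate would lose a constant here and break the bound. The remaining ingredients --- the monotonicity of $K(\cdot,l)$ on $[2,4\rho l^n]$ and the fact that the constraint polytope has two-sparse extreme points --- are routine.
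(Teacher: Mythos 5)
Your argument is correct and rests on the same two pillars as the paper's proof --- the Temple-type bound of Lemma~\ref{sdsdmsdlsdksdodsp} combined with the monotonicity of $G(\cdot,l)$ and $K(\cdot,l)$, and the superadditivity of Lemma~\ref{lem:super} applied with group size $\lfloor 4\rho l^n\rfloor$ --- but the optimisation over the coefficients in \eqref{cellbound} is handled by a genuinely different route. The paper never reduces to extreme points: writing $k=\rho l^n$ and $p=\lfloor 4k\rfloor$, it splits $\sum_m c_mE_0(m,l)$ at the cutoff $p$, bounds the low part by $K(p,l)\,t(t-1)$ via Cauchy--Schwarz in the single scalar $t=\sum_{m<p}mc_m$, bounds the high part by $\tfrac12(p-1)(k-t)K(p,l)$ via superadditivity, and then observes that $x\mapsto x(x-1)+\tfrac12(p-1)(k-x)$ is decreasing on $[0,k]$, so the minimum sits at $t=k$ and equals $k(k-1)$. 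You instead pass to the two-sparse extreme points of the constraint polytope and run a case analysis, using Jensen for the convex function $x(x-1)$ when both support points are $\le 4k$, and the linear bound $E_0(m,l)\ge m(2k-1)K$ (from superadditivity with $g=\lfloor 4k\rfloor$ and $\lfloor m/g\rfloor\ge m/(2g)$) when the upper support point exceeds $4k$, closing with the elementary inequality $m_1(2k-m_1)\le k^2$. Your version makes the structure of the minimising $(c_m)$ explicit and replaces the one-variable monotonicity check by an elementary estimate, at the mild cost of invoking the extreme-point structure of the linear program; the paper's version needs no such reduction and works for all admissible $(c_m)$ at once. Both need exactly the same sharpness from superadditivity and both give the identical constant, and your remark that $K(\cdot,l)$ is only monotone where it is nonnegative is a point the paper glosses over.

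One phrasing slip in your reduction: ``otherwise the right-hand side is non-positive'' does not cover the simultaneous case $K(4\rho l^n,l)<0$ and $\rho l^n<1$, where $(\rho l^n-1)K>0$. In that corner the stated inequality can in fact fail (take $N=1$, so that $E_0(N,L)=0$ with Neumann conditions, and choose $R$ so close to $R_0$ that the last factor of $K$ is negative while $G(4\rho l^n,l)>0$, which does not involve $R$), so the lemma must be read with the tacit restriction $K\ge 0$ --- and indeed the paper's own proof assumes this implicitly as well, when it multiplies the Cauchy--Schwarz and monotonicity inequalities by $K(p,l)$ and the final bound by $k-1$. Since in the application one has $\rho l^n=Y^{-\alpha}>1$ and $K>0$ for small $Y$, this is immaterial; just state the extra hypotheses $K(4\rho l^n,l)\ge 0$ and $\rho l^n\ge 1$ (as you in effect use) rather than claiming the complementary case is trivial.
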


\begin{proof}
Suppose that $c_m\geq 0$ satisfies \eqref{eq:constraints}. We split the sum in \eqref{cellbound} into two parts:
\begin{equation} \label{eq:sdjsdusdnsdhsdy}
\sum_mc_mE_0(m,l)=\sum_{m<p}c_mE_0(m,l) + \sum_{m\geq p} c_mE_0(m,l),
\end{equation}
for some $p\in \N$ to be chosen. Suppose for now that $G(p,l)>0$.  Since $G(N,L)$ and $K(N,L)$ are decreasing functions of $N$, Lemma \ref{sdsdmsdlsdksdodsp} implies that
\begin{equation}\label{eq:sasmasbasvascasv}
E_0(m,l)\geq  m(m-1)K(p,l), \quad \text{for $0\leq m\leq p$},
\end{equation}
and hence
$$
\sum_{m<p}c_mE_0(m,l)  \geq K(p,l) \sum_{m<p}c_mm(m-1) .
$$
Let $t:=\sum_{m<p}mc_m$. By the Cauchy-Schwarz inequality,
$$
t^2 \leq \left( \sum_{m<p}m^2c_m\right) \left(\sum_{m<p}c_m \right) \leq \sum_{m<p}m^2c_m,
$$
and it follows that
$$
\sum_{m<p}c_mm(m-1) \geq t(t-1) .
$$
Thus we have 
$$
\sum_{m<p}c_mE_0(m,l)  \geq  K(p,l)t(t-1) .
$$
We now employ the superadditivity of $m\mapsto E_0(m,l)$ (Lemma \ref{lem:super}) to obtain a lower bound on the second sum on the right hand side in \eqref{eq:sdjsdusdnsdhsdy}. For $m\geq p$ we write $m=\lfloor m/p\rfloor p+r$, where $\lfloor m/p\rfloor$ denotes the lower integer part of $m/p$ and $r\in \N_0$ is the remainder. Notice that $\lfloor m/p\rfloor\geq m/(2p)$ always. The superadditivity of $E_0(m,l)$ then yields
$$
E_0(m,l)\geq m/(2p) E_0(p,l) ,
$$
and it follows that
$$
\sum_{m\geq p}c_mE_0(m,l)\geq  \frac{E_0(p,l)}{2p}(k-t) \geq \frac{1}{2}(p-1)(k-t) K(p,l),
$$
where $k:=N/M=\rho l^n$. Altogether we have
$$
\sum_{m=0}^Nc_m E_0(m,l) \geq  K(p,l)\big[t(t-1)+\frac{1}{2}(p-1)(k-t) \big] .
$$
The choice $p=\lfloor 4k\rfloor$ implies that $x\mapsto \big(x(x-1)+\frac{1}{2}(p-1)(k-x) \big)$ is decreasing on $[0,k]$, which is where $t$ lies, and hence the minimum is taken at $x=k$. Thus we have that
$$
\frac{E_0(N,L)}{N}\geq \frac{1}{\rho l^n}\sum_m c_mE_0(m,l)\geq K(p,l)(k-1),
$$
as claimed.
\end{proof}

We can now finish the proof of Theorem \ref{thm:lower_bound}.

\begin{proof}[Proof of Theorem \ref{thm:lower_bound}]
Suppose that the conditions of Lemma \ref{asfasraseast} are satisfied. Recall that $Y=a^n\rho$. Then
\begin{eqnarray*}
\frac{E_0(N,L)}{N}&\geq& s_na^{n-2}\rho \big(1-\epsilon\big)\big(1-2nR/l\big)\bigg[1-Y^{-1} (a/l)^n \bigg] \\
&\times& \bigg[1-4v_nY (l/a)^n (R/l)^n \bigg] \bigg[1 - \frac{4n(n-2)l^nY}{ (R^n-R_0^n)(\epsilon \pi^2 (a/l)^2-16s_nY^2(l/a)^n)} \bigg] .
\end{eqnarray*}
We now make the ansatz
$$
\epsilon =Y^{\alpha}, \quad a/l=Y^{\beta}, \quad \frac{R^n-R_0^n}{l^n}=Y^{\gamma},
$$
for exponents $\alpha,\beta,\gamma>0$. In particular this implies that
$$
\bigg(\frac{R}{l}\bigg)^n= Y^{\gamma}+\bigg(\frac{R_0}{a}\bigg)^nY^{n\beta}\leq 2Y^{\gamma},
$$
provided 
\begin{equation} \label{sdsdsdksdnsdh}
Y\leq \bigg(\frac{a}{R_0} \bigg)^{n/(n\beta-\gamma)}.
\end{equation}
Thus we have
\begin{eqnarray*}
\frac{E_0(N,L)}{N}&\geq& s_na^{n-2}\rho \big(1-Y^{\alpha} \big)\big(1-C_1Y^{\gamma/n} \big)\big(1-Y^{n\beta-1} \big)\big(1-C_2Y^{1+\gamma-n\beta} \big) \\
&\times& \bigg(1-\frac{C_3Y^{1-\alpha-2\beta-\gamma}}{1-C_4Y^{2-\alpha-(n+2)\beta}} \bigg).
\end{eqnarray*}
In attempt to fit exponents we choose $\beta$ and $\gamma$ such that
$$
\gamma/n=\alpha= n\beta -1,
$$
which in particular implies that $1+\gamma-n\beta=2\alpha$. Now, the optimal choice of $\alpha$, such that
$$
1-\alpha-2\beta-\gamma \geq \alpha \quad \text{and} \quad 2-\alpha-(n+2)\beta>0,
$$
is given in \eqref{sdsdsdsdsdmsdnsdhsdjsdysdu}. With this choice the requirements of Lemma \ref{asfasraseast} are indeed satisfied if $Y$ is sufficiently small (depending only on the dimension) and if we take $L=kl$, for an integer $k\in \N$. Also \eqref{sdsdsdksdnsdh} is exactly the latter condition in \eqref{qqmqmqkqmqj}. By letting $k\to \infty$ we therefore conclude the proof. 
\end{proof}

\begin{corollary} \label{thm:lower_bound_cor} 
Suppose that $V$ is nonnegative, radial and measurable with a decay $V(r)\leq Cr^{-\nu}$, for large $r$, where $\nu>(6n-2)/5$. Suppose furthermore that $V$ admits a scattering solution. There exist a constant $C>0$ depending only on $n$ and a $\delta>0$ depending on $n,V$ such that
$$
e_0(\rho)\geq s_na^{n-2}\rho \big(1-C Y^{\alpha} \big),
$$
provided $Y\leq \delta$.
\end{corollary}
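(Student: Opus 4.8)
The plan is to reduce the statement to the compactly supported case of Theorem \ref{thm:lower_bound} by cutting $V$ off at a large radius. For $b>0$ I would set $V_b:=V\cdot\mathbf{1}_{\{|x|<b\}}$, so that $0\le V_b\le V$ pointwise; then $H_{N,L}[V_b]\le H_{N,L}[V]$ as quadratic forms, hence $E_0(N,L;V_b)\le E_0(N,L;V)$, and in the thermodynamic limit $e_0(\rho)=e_0(\rho;V)\ge e_0(\rho;V_b)$. Since $V_b$ is nonnegative, radial, measurable and compactly supported (in a ball of radius $\le b$), Theorem \ref{thm:lower_bound} applies to it: writing $a_b$ for the scattering length of $V_b$ and $Y_b:=a_b^n\rho$, it gives
\[
e_0(\rho)\ \ge\ s_n a_b^{n-2}\rho\,(1-CY_b^{\alpha})
\]
with $C$ and the threshold $\delta_0$ depending only on $n$, provided $Y_b\le\min\{\delta_0,(a_b/b)^{(n-2)/(5\alpha)}\}$.

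Next I would compare $a_b$ and $a$. Monotonicity of the scattering length (property (ii)) gives $a_b\le a$, hence $Y_b\le Y$. For a matching lower bound, given $\epsilon>0$ I would choose $w$ admissible in the variational problem \eqref{def:scattt} for $V_b$ with $0\le w\le 1$ (truncating at $1$ only lowers the energy and preserves admissibility) and $\int|\nabla w|^2+\tfrac12 V_b w^2\le s_n a_b^{n-2}+\epsilon$, and then test the $V$-problem with $w$: since $w\le 1$,
\[
s_n a^{n-2}\le\int|\nabla w|^2+\tfrac12 V w^2=\int|\nabla w|^2+\tfrac12 V_b w^2+\tfrac12\!\int_{|x|\ge b}\!V w^2\le s_n a_b^{n-2}+\epsilon+\tfrac12\!\int_{|x|\ge b}\!V .
\]
Letting $\epsilon\to 0$ and using $V(r)\le Cr^{-\nu}$ with $\nu>(6n-2)/5>n$ (so that $\int_{|x|\ge b}V\le C'b^{n-\nu}$ with $C'=C'(n,V)$), this yields $a_b^{n-2}\ge a^{n-2}-C'b^{n-\nu}$. (If $a=0$ there is nothing to prove; assuming $a>0$, it follows that $a_b\ge a/2$ once $b$ exceeds a threshold depending on $V$.) Together with the previous display,
\[
e_0(\rho)\ \ge\ s_n\bigl(a^{n-2}-C'b^{n-\nu}\bigr)\rho\,(1-CY^{\alpha}).
\]

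Finally I would pick $b=b(\rho)\to\infty$ as $\rho\to0$ to balance two requirements. To keep the truncation loss within the error already present, it suffices that $C'a^{-(n-2)}b^{n-\nu}\le Y^{\alpha}=(a^n\rho)^{\alpha}$, i.e.\ $b\gtrsim\rho^{-\alpha/(\nu-n)}$ up to a constant depending on $n,V$. To make the admissibility condition $Y_b\le(a_b/b)^{(n-2)/(5\alpha)}$ hold (which implies \eqref{qqmqmqkqmqj} for $V_b$), it suffices — using $a_b\ge a/2$ and $Y_b\le Y$ — that $b\le\tfrac a2 Y^{-5\alpha/(n-2)}$, i.e.\ $b\lesssim\rho^{-5\alpha/(n-2)}$. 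These two ranges overlap for all small $\rho$ exactly when $\tfrac{\alpha}{\nu-n}<\tfrac{5\alpha}{n-2}$, which rearranges to $\nu>(6n-2)/5$ — precisely the hypothesis. Choosing e.g.\ $b(\rho):=c\,\rho^{-\alpha/(\nu-n)}$ with $c=c(n,V)$ suitable, all the above conditions (including $Y_b\le\delta_0$ and $a_b\ge a/2$) hold once $Y=a^n\rho\le\delta$ for some $\delta=\delta(n,V)$, and $a^{n-2}-C'b^{n-\nu}\ge a^{n-2}(1-Y^{\alpha})$ then gives
\[
e_0(\rho)\ \ge\ s_n a^{n-2}\rho\,(1-Y^{\alpha})(1-CY^{\alpha})\ \ge\ s_n a^{n-2}\rho\,(1-C''Y^{\alpha}),\qquad C''=C''(n).
\]
I expect the only genuinely delicate point to be this last balancing of exponents: the value $\nu=(6n-2)/5$ is exactly the borderline at which the truncation error becomes compatible with the range of validity \eqref{qqmqmqkqmqj} of Theorem \ref{thm:lower_bound}, while everything else is a routine monotonicity and comparison argument.
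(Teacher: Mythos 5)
Your proposal is correct and follows essentially the same route as the paper: truncate $V$ at a large radius $b$, apply Theorem \ref{thm:lower_bound} to the truncated potential, bound $a^{n-2}-a_b^{n-2}\le \frac{1}{2s_n}\int_{|x|\ge b}V \le C' b^{n-\nu}$, and choose $b\sim Y^{-\alpha/(\nu-n)}$ so that the truncation loss is $O(Y^{\alpha})$ while the range condition \eqref{qqmqmqkqmqj} for the truncated potential still holds, which is exactly where $\nu>(6n-2)/5$ enters. The only (harmless) difference is that you compare $a_b$ with $a$ via the variational characterization \eqref{def:scattt} tested with a truncated near-minimizer, whereas the paper uses the scattering solutions and the representation \eqref{sdsdsmsdnsdydudtdy} together with $u\le u_R\le 1$; both give the same tail estimate.
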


\begin{proof}
Let $R>0$ and define $V_R=V\mychi_{B(0,R)}$ with scattering length $a_R\leq a$. Since $V$ is nonnegative, replacing $V$ with $V_R$ cannot increase the energy. By Theorem \ref{thm:lower_bound} we then have
\begin{eqnarray*}
e_0(\rho) \geq s_na_R^{n-2}\rho \big(1-CY_R^{\alpha} \big)  \geq s_na_R^{n-2}\rho \big(1-CY^{\alpha} \big) ,
\end{eqnarray*}
provided $Y_R:=a_R^n\rho$ is sufficiently small and 
\begin{equation} \label{sdsdsdjsdnsdh}
Y_R\leq \bigg(\frac{a_R}{R}\bigg)^{\frac{n-2}{5\alpha}} .
\end{equation}
Denote the scattering solutions of $V$ and $V_R$ by $u$ respectively $u_R$. Then, by \eqref{sdsdsmsdnsdydudtdy},
\begin{eqnarray*}
a^{n-2}-a_R^{n-2}&=&\frac{1}{2s_n}\int V(x)u(x)-V_R(x)u_R(x) \, dx \\
&\leq& \frac{1}{2s_n} \int V(x)-V_R(x) \, dx = \frac{1}{2s_n} \int_{|x|\geq R}V(x)\, dx,
\end{eqnarray*}
where the inequality follows from the fact that $u\leq u_R\leq 1$. From the decay of $V$ we obtain
$$
a_R^{n-2}\geq a^{n-2}\bigg(1-\frac{K}{2(n-2)a^{n-2}R^{\epsilon}} \bigg),
$$
provided $R$ is sufficiently large. By choosing $R$ such that 
$$
\frac{K}{2(n-2)a^{n-2}R^{\epsilon}} =Y^{\alpha},
$$
it follows that $R$ is large,
$$
a_R^{n-2}\geq a^{n-2}\big(1 -Y^{\alpha}\big) ,
$$
and \eqref{sdsdsdjsdnsdh} is satisfied, if $Y$ is sufficiently small and $\nu>(6n-2)/5$.
\end{proof}

\section{A Second Order Upper Bound} \label{sec:second}

In this section we derive a second order upper bound to $e_0(\rho)$ by estimating the energy in the state \eqref{trial_esy}. The calculation is inspired by \cite{esy08}.

\begin{theorem} \label{thm:ESYndim}
Let $n\geq 3$ and suppose that $V\in C_0^{\infty}(\R^n)$ is nonnegative and radially symmetric with $V(0)>0$. Then
\begin{align*}
e_0(\rho)&\leq 4\pi a\rho \bigg(1+[1+C\gamma]\frac{128}{15\sqrt{\pi}}Y^{1/2} \bigg) +\Oh(\rho^2|\ln \rho|) \quad (n=3) \\
e_0(\rho)&\leq 4\pi^2 a^2\rho \bigg(1+[1+C\gamma]2\pi^2 Y|\ln Y| \bigg) +\Oh(\rho^2) \quad (n=4) \\
e_0(\rho)&\leq s_n a^{n-2}\rho +\Oh(\rho^2) \quad (n\geq 5),
\end{align*}
where
\begin{equation} \label{definitiona_gamma}
\gamma:=\int_{\R^n}V(x)|x|^{2-n}\, dx
\end{equation}
and $C>0$ is independent of $V$.
\end{theorem}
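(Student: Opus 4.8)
The plan is to evaluate the energy of the Fock trial state \eqref{trial_esy} in the periodic box $\Lambda_L$ and send $L\to\infty$, following the scheme of \cite{esy08}. Since \eqref{trial_esy} does not have a fixed particle number, I would first invoke the equivalence of ensembles from Appendix \ref{app-ensembles}: it suffices to exhibit, for every small $\rho$, a state $\Psi$ in Fock space over $L^2(\Lambda_L)$ with $\langle\Psi,N\Psi\rangle=\rho L^n(1+o(1))$ and $\limsup_{L\to\infty}\langle\Psi,H\Psi\rangle/(\rho L^n)$ bounded by the claimed right-hand side. In the momentum representation $H=\sum_p p^2a_p^+a_p+\tfrac1{2L^n}\sum_{k,p,q}\hat V(k)a_{p+k}^+a_{q-k}^+a_qa_p$ with $p\in\tfrac{2\pi}{L}\Z^n$, and the free parameters of $\Psi$ are the condensate occupation $N_0>0$ (the hypothesis $V(0)>0$ guarantees $\hat V(0)=\int V>0$, which fixes the relevant momentum scale $\rho^{1/2}$ below) and the pair amplitudes $c_p=c_{-p}\in(-1,0]$. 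Because $a_0^+$ commutes with every $a_p^+a_{-p}^+$, $p\neq0$, the two exponentials in \eqref{trial_esy} factorise, and after normalisation $\Psi$ is a quasi-free state whose only non-vanishing correlators are $\langle a_0\rangle$, $\langle a_0^+a_0\rangle=N_0(1+o(1))$, and, for $p\neq0$, the quantities $\gamma_p:=\langle a_p^+a_p\rangle=c_p^2/(1-c_p^2)$ and $\alpha_p:=\langle a_pa_{-p}\rangle=c_p/(1-c_p^2)$, which satisfy the extremal relation $\alpha_p^2=\gamma_p(1+\gamma_p)$; all terms of $\langle\Psi,H\Psi\rangle$ are then expressed through these via the Wick rules.

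The choice of $c_p$ is the crux. I would fix a scale $s\sim\rho^{1/2}$ and a scale $\sim R_0^{-1}$ of order one, and for $|p|\gtrsim s$ take $c_p$ to be a scaled, smoothly truncated version of the Fourier coefficients of $-\rho_0(1-u)$, $\rho_0:=N_0/L^n$, so that in position space the pair function imitates the zero-energy scattering solution $u$ on the relevant range of scales. As in Dyson's argument this forces the \emph{quadratic}-in-$\rho_0$ part of $\langle\Psi,H\Psi\rangle$ to collapse onto the block $\rho_0^2L^n\bigl(\int|\nabla u|^2+\tfrac12 Vu^2\bigr)$, which by \eqref{def:scattt} equals $s_na^{n-2}\rho_0^2L^n$ up to errors from the truncation of $u$ at scale $\sim s^{-1}$; combined with $\rho_0=\rho(1+o(1))$ this produces the leading term $s_na^{n-2}\rho$ with an error that is $\Oh(\rho^2)$ per particle. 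For $|p|\lesssim s$ the scattering form no longer minimises the quadratic energy, and I would replace $c_p$ by the Bogoliubov/phonon amplitude associated with $p^2$ and the on-site coupling $\rho_0\hat V(p)$, interpolating smoothly across $|p|\sim s$. In dimension $n\geq5$ this essentially finishes the proof: once the remainder terms below are controlled by $\Oh(\rho^3)$, i.e. $\Oh(\rho^2)$ per particle, nothing more is needed.

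Carrying out the Wick expansion organises $\langle\Psi,H\Psi\rangle$ into: (i) the condensate-plus-high-momentum block, equal to $s_na^{n-2}\rho^2L^n(1+o(1))$; (ii) the low-momentum Bogoliubov block
$$
\tfrac12\sum_{0<|p|\lesssim s}\Big[\sqrt{(p^2+\rho_0\hat V(p))^2-(\rho_0\hat V(p))^2}-(p^2+\rho_0\hat V(p))+\tfrac{(\rho_0\hat V(p))^2}{2p^2}\Big];
$$
and (iii) remainder terms — exchange contributions, the pure non-condensate interaction, the matching-region contribution near $|p|\sim s$, and finite-size corrections — to be bounded. In the thermodynamic limit (ii) becomes an integral over $\R^n$; the substitution $p=(\rho_0\hat V(0))^{1/2}q$ together with $\hat V(p)\approx\hat V(0)$ on the relevant range shows that the integrand behaves like $\tfrac12|q|^{-4}$ at large $q$. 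Hence in $n=3$ it converges absolutely and yields a finite multiple of $\rho^{5/2}$ — the Lee–Huang–Yang term, with the classical constant $\tfrac{128}{15\sqrt\pi}$ once the prefactors are assembled; in $n=4$ it is logarithmically divergent at the upper end, and the true decay of $\hat V$ cuts it off at $|p|\sim R_0^{-1}$, turning it into $\tfrac{(\rho_0\hat V(0))^3}{64\pi^2}|\ln\rho|\,L^4(1+o(1))$; in $n\geq5$ it is $\Oh(\rho^3)$. Dividing by $N=\rho L^n$, using $s_4=4\pi^2$, $s_3=4\pi$, $\hat V(0)=2s_na^{n-2}(1+\Oh(\gamma))$, and replacing $|\ln\rho|$ by $|\ln Y|$ at the cost of $\Oh(\rho^2)$, reproduces the three stated bounds.

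The main obstacle is twofold. First, the \emph{value} of the correction constant: the pair function cannot be simultaneously the exact scattering solution (as wanted for $|p|\gtrsim s$) and the exact Bogoliubov minimiser (as wanted for $|p|\lesssim s$), so the on-site coupling appearing in block (ii) is $\hat V(0)=\int V$ rather than the true low-energy coupling $\int Vu=2s_na^{n-2}$ (see \eqref{sdsdsmsdnsdydudtdy}), and these differ by $\int V(1-u)$, which is comparable to $\gamma\hat V(0)$ via $1-u(0)=\tfrac1{2s_n}\int V(y)u(y)|y|^{2-n}\,dy\le\tfrac{\gamma}{2s_n}$ (cf. \eqref{sdsddsmsdksdl}, \eqref{definitiona_gamma}); propagating this discrepancy through the cube $(\rho_0\hat V(0))^3$ in (ii) is exactly what turns the sharp coefficient into $1+C\gamma$ and makes it sharp only in the weak-coupling limit $\gamma\to0$, consistent with Yang's expansion \eqref{yang4d}. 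Second, the bookkeeping in $n=4$ is delicate: several remainder terms in (iii) are individually of size $\rho^3$ and some even $\rho^3\ln\rho$, so one must check that \emph{all} genuinely $\rho^3|\ln\rho|$ contributions come from (ii) with the stated coefficient while everything else is $\Oh(\rho^3)=\Oh(\rho^2)$ per particle — the analogous step in \cite{esy08} is easier precisely because in $n=3$ the correction is an honestly convergent integral dominated by $|p|\sim\rho^{1/2}$.
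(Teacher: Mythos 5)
Your proposal follows the same overall route as the paper: the quasi-free pair state \eqref{trial_esy}, reduction to the canonical thermodynamic limit via the equivalence of ensembles, and a Wick-type evaluation in which the quadratic-in-$\rho$ block reconstructs $s_na^{n-2}\rho$ and a Bogoliubov-type block at $|p|\sim\rho^{1/2}$ produces the $Y^{1/2}$ (resp.\ $Y|\ln Y|$) correction. The genuine difference is the choice of the pair amplitude. You patch two regimes together — the Fourier coefficients of $-\rho_0(1-u)$ for $|p|\gtrsim\rho^{1/2}$ and the Bogoliubov minimiser with the \emph{bare} coupling $\rho_0\hat V_p$ for $|p|\lesssim\rho^{1/2}$, with a smooth interpolation — whereas the paper makes a single global choice \eqref{sdsdmsdksdlsdjsdk}, $e_p=\tfrac12(1-\sqrt{1+4\rho\hat w_p})$ with $\hat w_p=\hat g_p/(2p^2)$, $g=Vu$, which automatically behaves like $-\rho\hat w_p$ at high momenta and like the Bogoliubov amplitude with the \emph{renormalised} coupling $\hat g_0=2s_na^{n-2}$ (see \eqref{sdsdsmsdnsdydudtdy}) at low momenta. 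This buys two things. First, there is no matching region: the paper's energy splits exactly as $s_na^{n-2}\rho+Q+\tilde Q+\Omega$ (Lemma \ref{sdsdksdmsdh}), and the delicate $n=4$ bookkeeping you flag reduces to the asymptotics of the single integral $Q=\tfrac1{2(2\pi)^n\rho}\int p^2\Phi(\rho\hat w_p)\,dp$, while all remainders are collected in $\Omega=\Oh(\rho^2)$; in your scheme the interpolation errors near $|p|\sim\rho^{1/2}$ and the truncation of $u$ at scale $\rho^{-1/2}$ would have to be shown $\Oh(\rho^2)$ per particle, and this is left unargued. Second, the mechanism behind $(1+C\gamma)$ is different: in the paper $Q$ carries the \emph{exact} Lee--Huang--Yang/Yang coefficient (with the true $a$), and the $\gamma$-loss is the separate exchange-type term $\tilde Q=\tfrac2{(2\pi)^n}\int\hat\phi_ph_p\,dp$ with $\phi=V(1-u)$, bounded \emph{linearly} by $C\gamma$ times the correction, so $C$ is genuinely independent of $V$. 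In your version the loss enters through powers of $\hat V_0/\hat g_0=1+\tilde\gamma$ in the Bogoliubov block (e.g.\ $(1+\tilde\gamma)^3$ in $n=4$, $(1+\tilde\gamma)^{5/2}$ in $n=3$), which yields the stated form only for bounded $\gamma$; moreover your claim that the low-momentum coupling \emph{must} be $\hat V_0$ is not quite right — the paper's choice shows one may use $\hat g_p$ there, at the price of the term $\tilde Q$ elsewhere.

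Two smaller gaps you should close. The trial state is periodic while the grand-canonical energy in Lemma \ref{lem-ensembles} is defined with Dirichlet boundary conditions and the untruncated $V$; the paper needs the periodised potential $V_L$ and the periodic-to-Dirichlet comparison of Lemma \ref{eeruuriryruri} (plus Corollary \ref{cor:sdsdsdmsdnsdh} to absorb the slight density shift) before the expectation $\langle\tilde H_L\rangle_\Psi/(\rho L^n)$ bounds $e_0(\rho)$. And in $n\geq5$ the correction is \emph{not} dominated by the low-momentum block alone: the analogue of your remainder class (iii) (the term $\Omega$ in the paper) is itself of order $\rho^2$ per particle there, so it must be estimated, not merely assumed negligible — this is exactly Lemma \ref{sdsdjdhdydhdy}.
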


\noindent The assumptions on $V$ in Theorem \ref{thm:ESYndim} are presumably not optimal. In the actual grand-canonical calculation below, we only need $V$ and its Fourier transform to decay sufficiently fast at infinity (depending on the dimension), and of course the latter can be met by imposing finite smoothness on $V$. We use compact support of $V$ and $V(0)>0$ in Lemma \ref{lem-ensembles} below, which allows us to relate the canonical- and 'grand canonical' ground state energies. Presumably the assumption of compact support can be relaxed to a sufficiently fast decay.

In order to prove Theorem \ref{thm:ESYndim} we initially consider \eqref{model} with \emph{Dirichlet} boundary conditions. Our calculation below is carried out in the grand canonical ensemble, and hence we consider the second quantization of $H_{N,L}$
\begin{equation} \label{sdsdsdsmdnsdhsdjsdusdysdu}
H_L:=\bigoplus_{N=0}^{\infty}H_{N,L} \quad \text{on} \quad \F_L:=\bigoplus_{N=0}^{\infty} L^2_{\text{sym}}(\Lambda_L^N) ,
\end{equation}
with the corresponding 'grand canonical ground state energy'
\begin{equation} \label{ssdmsdjsdhsdn}
E_0^{\text{GC}}(N,L):=\inf\big\{\bra H_L \ket_{\Psi}: \|\Psi\|_{\F}=1, \, \bra\Num \ket_{\Psi} \geq N\big\},
\end{equation}
where $\Num=\Num_L$ denotes the number operator on $\F_L$ and $\bra A\ket_{\Psi}$ denotes the expectation $\bra \Psi,A\Psi\ket$ of any operator $A$ with $\Psi$ in its domain. Consider the canonical and grand canonical ground state energy \emph{per volume},
\begin{equation} \label{def:gsepervolume}
e_L(\rho):=\frac{E_0(\rho L^n,L)}{L^n}, \quad e_L^{\text{GC}}(\rho):=\frac{E_0^{\text{GC}}(\rho L^n,L)}{L^n}.
\end{equation}
We will assume that the limit
\begin{equation} \label{sdsdsdnsdbsdgsdhsfsdtsdy}
e(\rho):=\lim_{L\to \infty}e_L(\rho)
\end{equation}
is a convex function of $\rho$ (see e.g. \cite{ruelle}). The following result, which we prove in appendix \ref{app-ensembles}, shows that, in the thermodynamic limit, the canonical and grand canonical energies agree.

\begin{lemma} \label{lem-ensembles}
Suppose that $V\in L^1(\R^n)$ is nonnegative, radially symmetric and compactly supported. Suppose furthermore that $V\geq \epsilon \mychi_{B(0,R)}$, for some $\epsilon,R>0$. Then
$$
e(\rho)=\lim_{L\to \infty} e_L^{\text{GC}}(\rho).
$$
\end{lemma}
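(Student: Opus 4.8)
The plan is to prove two inequalities between $e(\rho)$ and $\liminf_L e_L^{\text{GC}}(\rho)$, $\limsup_L e_L^{\text{GC}}(\rho)$. The easy direction is $e_L^{\text{GC}}(\rho) \le e_L(\rho)$, hence $\limsup_L e_L^{\text{GC}}(\rho) \le e(\rho)$: given an $N$-particle trial state $\Psi_N$ on $L^2_{\text{sym}}(\Lambda^N)$ with $\|\Psi_N\|=1$, regard it as an element of $\F_L$ supported in the $N$-particle sector; then $\bra\Num\ket_{\Psi_N} = N$ and $\bra H_L\ket_{\Psi_N} = \bra H_{N,L}\ket_{\Psi_N}$, so the grand canonical infimum over states with $\bra\Num\ket\ge N$ is no larger than $E_0(N,L)$. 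Dividing by $L^n$ and letting $L\to\infty$ along the thermodynamic sequence gives the first inequality.

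The substantive direction is $e(\rho) \le \liminf_L e_L^{\text{GC}}(\rho)$. First I would fix a near-optimal grand canonical state $\Psi$ for $E_0^{\text{GC}}(N,L)$ and decompose it into particle-number sectors, $\Psi = \sum_{m\ge 0} \Psi_m$ with $p_m := \|\Psi_m\|^2$, so that $\sum_m p_m = 1$, $\sum_m m\, p_m = \bra\Num\ket_\Psi \ge N$, and $\bra H_L\ket_\Psi = \sum_m \bra H_{m,L}\ket_{\Psi_m} \ge \sum_m p_m\, E_0(m,L)$. Thus $E_0^{\text{GC}}(N,L) \ge \inf \sum_m p_m E_0(m,L)$ subject to those two constraints. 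Using superadditivity (Lemma \ref{lem:super}), $E_0(m,L) \ge \lfloor m/N\rfloor\, E_0(N,L) \ge (m/N - 1) E_0(N,L)$ when $E_0(N,L) \ge 0$ — which holds since $V\ge 0$ — so crude bounds of this kind, together with an a priori upper bound $E_0(m,L) \le C m^2 L^{-n}$ coming from the Dyson upper bound (Theorem \ref{thm:dyson_upper}) to control the large-$m$ tail, reduce the constrained infimum to values of $m$ comparable to $N$. To turn this into a statement about $e(\rho)$ one passes to the limit and invokes convexity of $\rho\mapsto e(\rho)$: the per-volume energy of the mixture $\sum_m p_m E_0(m,L)/L^n$ is, in the limit, bounded below by $e$ evaluated at the average density $\sum_m m p_m / L^n \ge \rho$, and since $e$ is nondecreasing (again by $V\ge 0$) and convex this is $\ge e(\rho)$.

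The role of the hypothesis $V \ge \epsilon\mychi_{B(0,R)}$ is to prevent the grand canonical state from lowering its energy by concentrating anomalously many particles in some sector: it supplies a lower bound $E_0(m,L) \ge c\, m^2/L^n - C m$ (by comparing with the hard-sphere-type bound for the potential $\epsilon\mychi_{B(0,R)}$, whose energy grows quadratically in the particle number at fixed volume), which makes the tail $\sum_{m \gg N} p_m E_0(m,L)$ manifestly nonnegative and forces the constraint $\sum_m m p_m \ge N$ to be saturated near $m\approx N$ in any near-minimizer. I expect the main obstacle to be precisely this tail estimate: making rigorous that the optimal $(p_m)$ cannot place significant mass on $m$ far from $N$ requires combining the quadratic lower bound from $\epsilon\mychi_{B(0,R)}$ with the quadratic upper bound from the scattering-length calculation and carefully tracking the $L$-dependence so that, after dividing by $L^n$ and taking $L\to\infty$, the error terms vanish and convexity can be applied cleanly. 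The remaining steps — the sector decomposition, superadditivity, and the convexity argument — are routine once the tail is controlled.
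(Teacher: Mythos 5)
Your easy direction ($E_0^{\text{GC}}\le E_0$, hence $\limsup_L e_L^{\text{GC}}(\rho)\le e(\rho)$) and your sector decomposition of a near-minimizing Fock state are fine, and your reading of the role of $V\ge\epsilon\mychi_{B(0,R)}$ (via the quadratic lower bound of Lemma \ref{simple_lower}) is in the right spirit. The genuine gap is the step where you ``pass to the limit and invoke convexity'': after the decomposition you are left with $\sum_m p_m E_0(m,L)/L^n$ in which both the weights $p_m$ and the densities $m/L^n$ depend on $L$, so you cannot replace $E_0(m,L)/L^n$ by $e(m/L^n)$ term by term. Pointwise convergence $e_L\to e$ gives no uniformity over this $L$-dependent family of densities, $e_L$ itself is not known to be convex (convexity is assumed only for the limit $e$), and Jensen/supporting-line arguments apply to $e$, not to $e_L$. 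What is needed is a \emph{finite-volume} inequality comparing $E_0(m,L)$ from below with the limiting function $e$ at essentially the same density, uniformly in $m$. This is exactly the content of the paper's Lemma \ref{duplicate}, $e_L(\rho)\ge(1+R/L)^n\,e\big(\rho[1+R/L]^{-n}\big)$, proved by tiling a large box with Dirichlet copies of $\Lambda_L$ separated by corridors of width $R$; it is here that compact support of $V$ and the Dirichlet boundary condition enter, and it is the crux of the equivalence of ensembles rather than a routine limit. Your proposal never supplies this ingredient, so the central inequality $\liminf_L e_L^{\text{GC}}(\rho)\ge e(\rho)$ is not established.

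Two smaller points. First, the large-$m$ tail is less of an obstacle than you suggest: since $V\ge0$ every $E_0(m,L)\ge0$, so tail sectors cannot lower the bound, and once one has Lemma \ref{duplicate} together with monotonicity and convexity of $e$ (a supporting line at $\rho$ with nonnegative slope), the constraint $\sum_m m\,p_m\ge\rho L^n$ is all one needs; also, the a priori upper bound you want is the trivial constant-function bound \eqref{upperbound_constant}, not Theorem \ref{thm:dyson_upper}, which is a low-density statement. Second, the paper takes a different and cleaner route for the hard direction: it introduces a chemical potential, writes $e_L^{\text{GC}}(\rho)\ge\mu\rho+f_L(\mu)$ with $f_L=-e_L^*$, proves $f_L(\mu)\to f(\mu)=-e^*(\mu)$ (using Lemma \ref{duplicate}, the quadratic lower bound \eqref{sdsdsdmsdnsdhsdy} to confine the Legendre infimum to a compact density interval, and \eqref{upperbound_constant} to control the error), and then uses involutivity of the Legendre transform on the convex function $e$. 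If you repair your argument by inserting Lemma \ref{duplicate} before applying the supporting-line/Jensen step, your more direct decomposition approach should also work, but as written the proof is incomplete at its decisive point.
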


\noindent By \eqref{ssdmsdjsdhsdn} it is clear that $\rho\mapsto e_L^{\text{GC}}(\rho)$ is increasing, for any fixed $L$. As a consequence we have the following slightly stronger result.
\begin{corollary} \label{cor:sdsdsdmsdnsdh}
Suppose that $V$ satisfies the assumptions of Lemma \ref{lem-ensembles}, and suppose that $\rho_L\to \rho$ as $L\to \infty$. Then
$$
e(\rho)=\lim_{L\to \infty} e_L^{\text{GC}}(\rho_L)
$$
\end{corollary}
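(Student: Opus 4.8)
The plan is to prove this by a squeeze (sandwich) argument resting on three facts already at our disposal: the identity $e(\rho)=\lim_{L\to\infty}e_L^{\text{GC}}(\rho)$ supplied by Lemma~\ref{lem-ensembles} (valid for every fixed density $>0$), the monotonicity of $\rho\mapsto e_L^{\text{GC}}(\rho)$ for each fixed $L$ recorded just above, and the continuity of the limiting function $e$. The last point is where the standing convexity hypothesis on $e$ enters: a finite convex function on the open interval $(0,\infty)$ is automatically continuous there, and finiteness of $e$ is clear from $0\le e_L(\rho)$ together with the trial-state upper bounds of Section~\ref{sec:second}.

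Concretely, I would fix $\epsilon>0$ and, using continuity of $e$ at the point $\rho>0$, choose $\rho_-,\rho_+$ with $0<\rho_-<\rho<\rho_+$ and $e(\rho_+)-e(\rho_-)<\epsilon$. Since $e$, being a pointwise limit of the non-decreasing functions $e_L^{\text{GC}}$, is itself non-decreasing, this forces $e(\rho)-e(\rho_-)<\epsilon$ and $e(\rho_+)-e(\rho)<\epsilon$. Because $\rho_L\to\rho$, there is $L_0$ with $\rho_-\le\rho_L\le\rho_+$ for all $L\ge L_0$, and the monotonicity of $e_L^{\text{GC}}$ then gives
$$
e_L^{\text{GC}}(\rho_-)\le e_L^{\text{GC}}(\rho_L)\le e_L^{\text{GC}}(\rho_+),\qquad L\ge L_0.
$$
Taking $\liminf_{L\to\infty}$ of the left inequality and $\limsup_{L\to\infty}$ of the right one, and applying Lemma~\ref{lem-ensembles} at the \emph{fixed} densities $\rho_-$ and $\rho_+$, yields
$$
e(\rho)-\epsilon<e(\rho_-)\le\liminf_{L\to\infty}e_L^{\text{GC}}(\rho_L)\le\limsup_{L\to\infty}e_L^{\text{GC}}(\rho_L)\le e(\rho_+)<e(\rho)+\epsilon.
$$
Since $\epsilon>0$ was arbitrary, the limit $\lim_{L\to\infty}e_L^{\text{GC}}(\rho_L)$ exists and equals $e(\rho)$.

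There is no substantial obstacle here beyond Lemma~\ref{lem-ensembles} itself; the only step calling for a line of justification is the continuity of $e$ used to pick $\rho_\pm$, which I would deduce from the convexity assumption as above. If one wished to avoid even that, one could replace the two-sided squeeze by a one-sided argument using only the monotonicity of $e$, at the cost of a weaker conclusion; but since convexity of $e$ is assumed throughout Section~\ref{sec:second}, the clean two-sided argument above is the natural one.
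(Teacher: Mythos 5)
Your proposal is correct and follows essentially the same route as the paper: use convexity of $e$ to get continuity at $\rho$, exploit the monotonicity of $\rho\mapsto e_L^{\text{GC}}(\rho)$ to sandwich $e_L^{\text{GC}}(\rho_L)$ between values at fixed nearby densities, and then apply Lemma \ref{lem-ensembles} at those fixed densities before letting $\epsilon\to 0$. The only cosmetic difference is that the paper writes out the lower bound (with $\rho-\delta$) and leaves the symmetric upper bound to the reader, whereas you carry out both sides of the squeeze explicitly.
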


\begin{proof}
Fix an arbitrary $\epsilon>0$. By assumption $e(\rho)$ is convex and hence continuous. Thus we can choose $\delta>0$ such that 
$$
|e(\rho)-e(\rho')|\leq \epsilon,
$$
for each $\rho'>0$ with $|\rho-\rho'|\leq \delta$. Then, for $L$ sufficiently large,
\begin{eqnarray*}
e_L^{\text{GC}}(\rho_L)&\geq& e_L^{\text{GC}}(\rho-\delta) \\
&=& \big[e_L^{\text{GC}}(\rho-\delta)-e(\rho-\delta)\big] +e(\rho-\delta) \\
&\geq& \big[e_L^{\text{GC}}(\rho-\delta)-e(\rho-\delta)\big] +e(\rho)-\epsilon .
\end{eqnarray*}
By Lemma \ref{lem-ensembles} it then follows that
$$
\liminf_{L\to \infty} e_L^{\text{GC}}(\rho_L)\geq e(\rho)-\epsilon .
$$
Similarly we can show a consistent upper bound, and since $\epsilon$ was arbitrary, the result follows.
\end{proof}
In Section \ref{sec:trial} we construct a \emph{periodic} trial state with an expected number of particles $\bra \Num\ket =\rho L^n$, not directly leading to an upper bound on $e_0(\rho)$ via Lemma \ref{lem-ensembles}. However, Lemma \ref{eeruuriryruri} below, which is essentially proved in \cite{YY}, shows that given any periodic state, we can find a Dirichlet state on a slightly larger box, with almost as low energy. We let
$$
V_L(x):=\sum_{m\in \Z^n}V(x+mL)=\frac{1}{L^n}\sum_{p\in \Lambda_L^*}\hat V_pe^{ip\cdot x}, \quad x\in \R^n
$$
denote the $L$-periodization of $V$, where $\Lambda_L^*:=(2\pi/L)\Z^n$ and
$$
\hat V_p:=\int_{\R^n}e^{-ip\cdot x}V(x)\, dx
$$
denotes the Fourier transform of $V$, which is real-valued and radially symmetric, since $V$ is. Then let
$$
\tilde H_{N,L}:=\sum_{i=1}^N-\Delta_i+\sum_{1\leq j<k\leq N} V_L(x_j-x_k)
$$
with \emph{periodic} boundary conditions, and let $\tilde H_L$ denote its second quantization. Note that, since $V$ is nonnegative, it is clear that $V\leq V_L$, and hence the transition from $V$ to $V_L$ cannot decrease the energy. However, since $V_L\to V$ pointwise as $L\to \infty$, we expect the ground state energy of the two systems to coincide in the thermodynamic limit. 
\begin{lemma} \label{eeruuriryruri}
Let $L>2l>0$. Then
$$
E_0^{\text{GC}}(N,L+2l)\leq \bra \tilde H_{L} \ket_{\Psi} + C \frac{N}{lL},
$$
for each periodic, normalized $\Psi\in \F_L$ with $\bra \Num\ket_{\Psi}= N$. Here $C>0$ depends only on $n$.
\end{lemma}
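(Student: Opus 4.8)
The plan is to produce, from the periodic state $\Psi$ on $\Lambda_L$, a legitimate Dirichlet trial state on the larger box $\Lambda_{L+2l}$ by multiplying by a cutoff adapted to the period lattice. The decisive point is to choose this cutoff to form a \emph{partition of unity} under $L\Z^n$. Concretely, I would pick $\chi_1\in C^\infty(\R)$ supported in $[-\tfrac L2-l,\tfrac L2+l]$, equal to $1$ on $[-\tfrac L2+l,\tfrac L2-l]$, with $0\le\chi_1\le1$, $|\chi_1'|\le C/l$, $|\chi_1''|\le C/l^2$, and $\sum_{m\in\Z}\chi_1(t+mL)^2=1$ for all $t$; this is possible precisely because $L>2l$, so only consecutive translates of $\supp\chi_1$ overlap and on each overlap one may take a $\cos/\sin$-type transition. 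Setting $\chi(x):=\prod_{k=1}^n\chi_1(x_k)$ gives a function supported in $\Lambda_{L+2l}$, equal to $1$ on $\Lambda_{L-2l}$, with $|\nabla\chi|\le C_n/l$, $|\Delta\chi|\le C_n/l^2$ and $\sum_{m\in\Z^n}\chi(x+mL)^2=1$. The trial state $\Phi\in\F_{L+2l}$ is then defined in each $N'$-particle sector by $\Phi_{N'}:=\big(\prod_{i=1}^{N'}\chi(x_i)\big)\Psi_{N'}^{\mathrm{per}}$, where $\Psi^{\mathrm{per}}$ is the $L$-periodic extension of $\Psi$; since $\chi$ vanishes near $\partial\Lambda_{L+2l}$, each $\Phi_{N'}$ lies in $H^1_0(\Lambda_{L+2l}^{N'})$, so $\Phi$ is admissible for the Dirichlet Hamiltonian $H_{L+2l}$.

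With this choice the bookkeeping is almost free. Tiling $\R^{nN'}$ by translates of $\Lambda_L^{N'}$ and using periodicity, for any nonnegative $L$-periodic $g$ one gets $\int_{\Lambda_{L+2l}^{N'}}\big(\prod_i\chi(x_i)^2\big)g=\int_{\Lambda_L^{N'}}\big(\prod_i\sum_m\chi(y_i+mL)^2\big)g=\int_{\Lambda_L^{N'}}g$. Taking $g=|\Psi_{N'}^{\mathrm{per}}|^2$ shows $\|\Phi_{N'}\|=\|\Psi_{N'}\|$ for all $N'$, hence $\|\Phi\|=1$ and $\langle\Num\rangle_\Phi=\langle\Num\rangle_\Psi=N$, so $\Phi$ is a valid competitor in the definition of $E_0^{\mathrm{GC}}(N,L+2l)$ — this is exactly where the partition-of-unity property earns its keep, since a naive cutoff would lose particles. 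For the interaction, $0\le V\le V_L$ together with the same identity applied to the periodic $g=V_L(y_i-y_j)|\Psi_{N'}^{\mathrm{per}}|^2$ yields $\big\langle\sum_{i<j}V(x_i-x_j)\big\rangle_\Phi\le\big\langle\sum_{i<j}V_L(x_i-x_j)\big\rangle_\Psi$, i.e. the interaction part of $\langle\tilde H_L\rangle_\Psi$.

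For the kinetic energy I would use the localization identity $\int|\nabla_i(\chi(x_i)h)|^2\,dx_i=\int\chi(x_i)^2|\nabla_ih|^2\,dx_i-\int\chi(x_i)\Delta\chi(x_i)|h|^2\,dx_i$ (an integration by parts in $x_i$, with no boundary term since $\chi$ vanishes near $\partial\Lambda_{L+2l}$), applied with $h=\big(\prod_{k\ne i}\chi(x_k)\big)\Psi_{N'}^{\mathrm{per}}$. The first term unfolds, as above, to $\int_{\Lambda_L^{N'}}|\nabla_i\Psi_{N'}|^2$, so summing over $i$ and over all sectors reproduces the kinetic part of $\langle\tilde H_L\rangle_\Psi$. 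The error term unfolds to $\int_{\Lambda_L^{N'}}\big(\sum_m\chi\Delta\chi(y_i+mL)\big)|\Psi_{N'}|^2$; since $\supp\Delta\chi$ lies in the thin shell $\Lambda_{L+2l}\setminus\Lambda_{L-2l}$, one has $\big|\sum_m\chi\Delta\chi(y_i+mL)\big|\le (C_n/l^2)\,\mathbf 1[y_i\in S]$, where $S\subseteq\Lambda_L$ is the image of that shell mod $L\Z^n$, of measure $|S|\le (L+2l)^n-(L-2l)^n\le C_n\,lL^{n-1}$. So the total kinetic error is $\le (C_n/l^2)\,\langle N_S\rangle_\Psi$, with $N_S:=\sum_i\mathbf 1[x_i\in S]$.

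It remains to bound $\langle N_S\rangle_\Psi$, for which I would average over rigid translations of $\Psi$ on the torus: these change neither $\langle\tilde H_L\rangle$, nor $\|\Psi\|$, nor $\langle\Num\rangle_\Psi$, and $\frac1{L^n}\int_{\Lambda_L}\langle N_S\rangle_{\tau\cdot\Psi}\,d\tau=\frac{|S|}{L^n}\langle\Num\rangle_\Psi\le C_n\,Nl/L$, so some translate of $\Psi$ has $\langle N_S\rangle\le C_n\,Nl/L$; since the right-hand side of the asserted inequality is translation invariant, we may assume $\Psi$ itself is that translate. The kinetic error is then $\le (C_n/l^2)(C_n\,Nl/L)=CN/(lL)$, and combining with the interaction bound gives $E_0^{\mathrm{GC}}(N,L+2l)\le\langle H_{L+2l}\rangle_\Phi\le\langle\tilde H_L\rangle_\Psi+CN/(lL)$, as claimed. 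I expect the only genuine subtlety to be the construction of the periodized cutoff having simultaneously the partition-of-unity property and $O(1/l)$, $O(1/l^2)$ bounds on its first two derivatives — which is exactly what the hypothesis $L>2l$ makes possible; everything else is the unfolding identity used three times plus the one-line translation average.
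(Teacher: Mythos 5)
Your proof is correct: the paper gives no argument of its own for this lemma (it defers to the Yau--Yin reference \cite{YY}), and your scheme --- a periodized partition-of-unity cutoff $\chi$ with $\sum_{m}\chi(\cdot+mL)^2=1$, the IMS-type identity for the kinetic energy, unfolding integrals by periodicity, and a translation average to reduce to states with at most $C_nNl/L$ expected particles in the boundary shell --- is essentially the standard localization argument that reference supplies, with all constants depending only on $n$ as required. The only cosmetic point is that you should take $\supp\chi_1$ strictly inside $(-L/2-l,\,L/2+l)$ (or note that a smooth function on the closed box vanishing on its boundary lies in $H^1_0$) so that each $\Phi_{N'}$ is genuinely admissible for the Dirichlet form; this does not affect any estimate.
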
 \label{periodic-dirichlet}

\noindent We apply Lemma \ref{eeruuriryruri} with $l:=\sqrt{L}/2$ and notice that
$$
\frac{E_0^{\text{GC}}(\rho L^n, L+2l)}{\rho L^n} = \frac{E_0^{\text{GC}}(\rho_{L+2l} (L+2l)^n, L+2l)}{\rho_{L+2l} (L+2l)^n} ,
$$
where 
$$
\rho_L:=\frac{\rho(L-2l)^n}{L^n} \to \rho \quad \text{as $L\to \infty$}.
$$
Together with Corollary \ref{cor:sdsdsdmsdnsdh} we conclude that
$$
e_0(\rho)\leq \limsup_{L\to \infty} \frac{\bra \tilde H_{L} \ket_{\Psi}}{\rho L^n},
$$
for each periodic, normalized $\Psi\in \F_L$ with expected number of particles $\bra \Num \ket_{\Psi}= \rho L^n$.

Finally, we note that, with the periodic potential $V_L$, we have (in the sense of quadratic forms)
\begin{equation}\label{second_q}
\tilde H_L=\sum_p p^2a_p^+a_p + \frac{1}{2L^n}\sum_{\stackrel{\text{\footnotesize{$p,q,r,s$}}}{\text{\footnotesize{$p+q=r+s$}}}}\hat V_{p-r}a_p^+a_q^+a_{r}a_{s},
\end{equation}
where all sums are over $\Lambda_L^*$ and where $a_p^+$ and $a_p$ denote the bosonic creation and annihilation operators on $\F_L$ w.r.t. the plane wave $x\mapsto L^{-n/2}e^{ip\cdot x}$.

\subsection[The Trial State]{The Trial State} \label{sec:trial}

The state in \eqref{trial_esy} can be defined as follows. Fix $\rho,L>0$ and set $N:=\rho |\Lambda|=\rho L^n$. Then let
\begin{equation} \label{sdsdsdmsdnsdjsdksdl}
\Psi :=\sum_{\alpha}f(\alpha)|\alpha \ket
\end{equation}
where  $\{|\alpha\ket \}_{\alpha }\subset \F$ is the orthonormal basis given by
$$
|\alpha\ket: =\prod_{k\in \Lambda^*}\frac{1 }{\sqrt{\alpha(k)!}} (a_k^+)^{\alpha(k)}  |0\ket,
$$
for each $\alpha:\Lambda^*\to \N_0$ with $|\alpha|:=\sum_{k\in \Lambda^*}\alpha(k)<\infty$. Note that, by the canonical commutation relations,
\begin{equation} \label{sddslsdspdolwnn}
a_p|\alpha \ket = \sqrt{\alpha(p)}|\alpha -\delta_p\ket \quad \text{and} \quad a_p^+|\alpha \ket = \sqrt{\alpha(p)+1}|\alpha +\delta_p\ket ,
\end{equation}
for any $p\in \Lambda^*$, where $\delta_p(k):=\delta_{p,k}$. Let
$$
\M:=\big\{\alpha: \Lambda^*\to \N_0 : |\alpha|<\infty \; \text{and $\alpha(-p)=\alpha(p)$ for each $p\in \Lambda^*$} \big\},
$$
We define the coefficient function $f$ in \eqref{sdsdsdmsdnsdjsdksdl} by 
\begin{equation} \label{sdsdksdlsdmsdj}
f(\alpha):= \exp\bigg(N_0+\sum_{p\neq 0}|\ln(1-c_p^2)| \bigg)^{-1/2}\cdot \bigg(\frac{N_0^{\alpha(0)}}{\alpha(0)!} \prod_{p\neq 0} c_p^{\alpha(p)}\bigg)^{1/2},
\end{equation}
for $\alpha\in \M$ and $f(\alpha)=0$ otherwise. Here $c:\Lambda^*\backslash \{0\}\to (-1,1)$ is to be chosen and 
\begin{equation} \label{sdsdmsdnsdj}
N_0:=N-\sum_{p\neq 0}\frac{c_p^2}{1-c_p^2}.
\end{equation}
It will be apparent later on that \eqref{sdsdmsdnsdj} is equivalent to the condition $\bra \Psi, \Num\Psi\ket=N$. We will assume that $c_{-p}=c_p$, for each $p$ and clearly we also need some decay of $c_p$ in order for the sums in \eqref{sdsdksdlsdmsdj} and \eqref{sdsdmsdnsdj} to converge. Given any operator $A$ with a domain containing $\Psi$, we let $\bra A\ket:=\bra \Psi, A\Psi\ket$ denote the expectation of $A$ in the state $\Psi$. Most of the interaction terms in \eqref{second_q} have zero expectation in the state $\Psi$. In fact, since $f$ vanishes outside $\M$ and since $\alpha(-p)=\alpha(p)$, for each $p\in \Lambda^*$ and each $\alpha\in \M$, it follows that only \emph{pair} interactions terms where either $p=r$, $p=s$ or $p=-q$ have nonzero expectation in $\Psi$. Thus
$$
\bra \tilde H_L\ket= \sum_{p}p^2\bra a_p^+a_p\ket+E_1+E_2+E_3,
$$
where
$$
E_1:= \frac{\hat V_0}{2|\Lambda|}\sum_{p,q}\bra a_p^+a_q^+a_pa_q\ket, \quad E_2:= \frac{1}{2|\Lambda|}\sum_{p\neq q}\hat V_{p-q}\bra a_p^+a_q^+a_pa_q\ket
$$
and
$$
E_3:= \frac{1}{2|\Lambda|}\sum_{p\neq \pm q}\hat V_{p-q}\bra a_p^+a_{-p}^+a_qa_{-q}\ket .
$$
Lemma \ref{lem:prop-trial} below provides us with all the relevant expectations in terms of $N_0$ and $c_p$. We introduce the notation
$$
h_p:=\frac{c_p^2}{1-c_p^2}  \quad \text{and} \quad s_p:= \frac{c_p}{1-c_p^2}.
$$

\begin{lemma}  \label{lem:prop-trial}
Let $p,q\in \Lambda^*$ with $p\neq \pm q$ and $p\neq 0$. Then
\begin{enumerate}
\item $\bra a_0^+a_0\ket =N_0=\bra a_0a_0\ket$ and $\bra a_0^+a_0a_0^+a_0\ket =N_0(N_0+1)$
\item $\bra a_p^+a_pa_q^+a_q\ket = \bra a_p^+a_p\ket \cdot \bra a_q^+a_q\ket$ 
\item $\bra a_p^+a_{-p}^+a_qa_{-q}\ket = \bra a_p^+a_{-p}^+\ket \cdot \bra a_qa_{-q}\ket$
\item $\bra a_p^+a_p\ket =h_p$
\item $ \bra a_p^+a_{-p}^+\ket=s_p$
\item $\bra a_p^+a_pa_{\pm p}^+a_{\pm p}\ket = h_p(2h_p+1)$
\end{enumerate}
\end{lemma}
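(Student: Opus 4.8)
The plan is to use that, although $\Psi$ lives on the full Fock space $\F$, its coefficient function $f$ is a \emph{product} over the independent degrees of freedom, so that $\Psi$ is an infinite tensor product of elementary one- and two-mode states, and each operator occurring in the lemma acts non-trivially on at most two \emph{mutually independent} factors. Concretely, fix an ordering of $\Lambda^*$ and split $\Lambda^*=\{0\}\cup\Lambda^*_+\cup(-\Lambda^*_+)$ into the origin, a ``positive half'' $\Lambda^*_+$, and its reflection. Since $f$ is supported on $\M$ — so that $\alpha(-p)=\alpha(p)$ — and since $c_{-p}=c_p$, one rewrites \eqref{sdsdksdlsdmsdj} in the form
$$
\Psi=\Phi_0\otimes\bigotimes_{p\in\Lambda^*_+}\Phi_p,
$$
where $\Phi_0=e^{-N_0/2}\sum_{m\geq0}\tfrac{N_0^{m/2}}{\sqrt{m!}}\,|m\ket$ is a coherent state with (real) amplitude $\sqrt{N_0}$ in the zero mode, and, for $p\in\Lambda^*_+$, $\Phi_p=(1-c_p^2)^{1/2}\sum_{k\geq0}c_p^k\,|k\ket_p\otimes|k\ket_{-p}$ is a two-mode state living in the pair $\{p,-p\}$. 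Convergence of all these series — and hence the very definition of $\Psi$ — is ensured by $N_0>0$, $|c_p|<1$ and sufficient decay of $c_p$. As part of this first step I would also confirm that $\Psi$ has unit norm, and record that $a_0,a_0^+$ act only on $\Phi_0$, while for every $p\neq0$ the operators $a_p,a_p^+$ act only on the single factor $\Phi_q$ whose index pair $\{q,-q\}$ contains $p$.

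Given the factorization, the six identities reduce to one- or two-factor computations. Item (1) follows from $a_0\Phi_0=\sqrt{N_0}\,\Phi_0$: this gives $\bra a_0^+a_0\ket=\bra a_0a_0\ket=N_0$ immediately, and, rewriting $a_0^+a_0a_0^+a_0=(a_0^+a_0)^2$ via the canonical commutation relations and using that $a_0^+a_0$ has a Poisson law with parameter $N_0$ in a coherent state (so $\bra(a_0^+a_0)^2\ket=N_0^2+N_0$), also $\bra a_0^+a_0a_0^+a_0\ket=N_0(N_0+1)$. For (4)--(6) one uses that $a_p^+a_p$ and $a_{-p}^+a_{-p}$ act diagonally on $|k\ket_p\otimes|k\ket_{-p}$, both with eigenvalue $k$ (and again $a_p^+a_pa_p^+a_p=(a_p^+a_p)^2$), so that $\bra a_p^+a_p\ket=(1-c_p^2)\sum_kk\,c_p^{2k}$ and $\bra a_p^+a_pa_{\pm p}^+a_{\pm p}\ket=(1-c_p^2)\sum_kk^2c_p^{2k}$; evaluating the elementary series $\sum_kkx^k=x(1-x)^{-2}$ and $\sum_kk^2x^k=x(1+x)(1-x)^{-3}$ at $x=c_p^2$ and simplifying gives $h_p$ and $h_p(2h_p+1)$. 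For (5), expanding $a_p\Phi_p$ and $a_{-p}^+\Phi_p$ in the basis $\{|j\ket_p\otimes|j+1\ket_{-p}\}_{j\geq0}$ and pairing the matching vectors yields $\bra a_p^+a_{-p}^+\ket=\bra a_p\Phi_p,a_{-p}^+\Phi_p\ket=(1-c_p^2)\,c_p\sum_k(k+1)c_p^{2k}=c_p(1-c_p^2)^{-1}=s_p$.

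Finally, for the ``cross'' identities (2) and (3): when $p\neq\pm q$ and $p\neq0$, the operators $a_p^+a_p$ (resp. $a_p^+a_{-p}^+$) and $a_q^+a_q$ (resp. $a_qa_{-q}$) are supported on disjoint families of tensor factors — the pair containing $p$ is disjoint from the mode or pair containing $q$, regardless of whether $q=0$ — so the expectation of the product equals the product of the expectations, and (2), (3) follow by combining with (1), (4) and (5). All the calculations above are routine; the one step that requires genuine care is the initial rewriting of $\Psi$ as a tensor product, where one must keep track of the $\pm p$ pairing imposed by $\M$ (so that the tensor factors are indexed by pairs, not by single momenta), of the signs of the $c_p$ (so that the square roots in \eqref{sdsdksdlsdmsdj} are unambiguous), and of the normalization constant. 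Once this is set up cleanly the six identities drop out; alternatively — and without the ``coherent/squeezed state'' language — one may compute each $\bra\Psi,A\Psi\ket$ directly from \eqref{sddslsdspdolwnn} and the explicit form of $f$, whereupon the sum over $\M$ collapses into a product of one-variable power series, giving the same result.
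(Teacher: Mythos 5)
Your argument is correct, and it establishes all six identities, but it packages the computation differently from the paper. The paper never writes $\Psi$ as a tensor product: it works directly with the coefficient function $f$, using \eqref{sddslsdspdolwnn} together with the shift relations $f(\alpha+\delta_0)=N_0^{1/2}(\alpha(0)+1)^{-1/2}f(\alpha)$ and $f(\alpha+\delta_p+\delta_{-p})=c_pf(\alpha)$ to re-index the sum over $\M$; in particular item 4 is obtained from the self-consistency relation $\bra a_p^+a_p\ket=c_p^2\big(\bra a_p^+a_p\ket+1\big)$ rather than by summing a geometric series, and items 2, 3, 6 are declared to be ``proved similarly''. You instead make the product structure of $f$ explicit, exhibiting $\Psi$ as a coherent state in the zero mode tensored with two-mode pair states over a half-lattice, so that items 1 and 4--6 become moments of Poisson and geometric distributions and items 2--3 follow at once from independence of disjoint factors; this is arguably the cleanest way to see the factorization identities, which the paper leaves implicit, and it is equivalent to your closing remark that the sum over $\M$ collapses into a product of one-variable series. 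What the paper's route buys is that it stays entirely within the given Fock-space formalism and needs only the abstract normalization $\sum_\alpha|f(\alpha)|^2=1$, whereas your set-up forces the bookkeeping you yourself flag -- and rightly so: with the literal reading of \eqref{sdsdksdlsdmsdj} the square root of $\prod_{p\neq0}c_p^{\alpha(p)}$ produces $|c_p|^{\alpha(p)}$ over the half-lattice, so the signed convention (needed to get $\bra a_p^+a_{-p}^+\ket=s_p$ rather than $|s_p|$, and implicitly used by the paper in $f(\alpha+\delta_p+\delta_{-p})=c_pf(\alpha)$) must be fixed as you do, and the prefactor in \eqref{sdsdksdlsdmsdj} double-counts the pairs $\pm p$, your per-pair factor $(1-c_p^2)^{1/2}$ being the correctly normalized one. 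None of this affects the validity of the lemma in either presentation, but your factorization makes these small normalization and sign conventions visible, which is a useful by-product of the approach.
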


\begin{proof}
The identities are proved similarly, so we only show a few of them. By definition of $\Psi$ and the relations \eqref{sddslsdspdolwnn}, we have
$$
\bra a_p^+a_p\ket =\sum_{\alpha}\alpha(p)|f(\alpha)|^2,
$$
for any $p\in \Lambda^*$. Define the operation $\A^0\alpha:=\alpha+\delta_0$ and $\A^p\alpha:=\alpha+\delta_p+\delta_{-p}$, for $p\neq 0$. Notice that
$$
f(\A^0\alpha)=N_0^{1/2}(\alpha(0)+1)^{-1/2}f(\alpha) \quad \text{and} \quad f(\A^p\alpha)=c_pf(\alpha).
$$
We then have
$$
\bra a_0^+a_0\ket =\sum_{\alpha\in \A^0(\M)}\alpha(0)|f(\alpha)|^2= \sum_{\beta}(\beta(0)+1)|f(\A^0\beta)|^2=N_0,
$$
where we have also used that $\sum_{\beta}|f(\beta)|^2=1$ due to normalization. For $p\neq 0$ we get 
$$
\bra a_p^+a_p\ket= \sum_{\beta}(\beta(p)+1)|f(\A^p\beta)|^2 =c_p^2(\bra a_p^+a_p\ket+1),
$$
and solving for $\bra a_p^+a_p\ket$ yields 4. Also,
$$
\bra a_p^+a_{-p}^+\ket =\sum_{\alpha}\overline{f(\A^p\alpha)}f(\alpha)(\alpha(p)+1) =c_p(h_p+1)=s_p,
$$
as claimed.
\end{proof}

Notice that, by Lemma \ref{lem:prop-trial},
$$
\bra \Num \ket= \sum_p\bra a_p^+a_p\ket = N_0+\sum_{p\neq 0}h_p,
$$
and hence the condition $\bra \Num \ket =N$  is indeed equivalent to \eqref{sdsdmsdnsdj}.

\subsection{Computation of the Energy}

Eventually we will choose $c_p$ via the new variable
$$
e_p:=\frac{c_p}{1+c_p}, \quad h_p=\frac{e_p^2}{1-2e_p}, \quad s_p=\frac{e_p(1-e_p)}{1-2e_p}.
$$
Note that the constraint $|c_p|<1$ is equivalent to $e_p<1/2$. In Lemma \ref{sdsdksdmsdh} below we calculate the energy $\bra \tilde H_L\ket$ per particle in the thermodynamic limit
$$
E(\rho):=\lim_{L\to \infty}\frac{\bra \tilde H_L\ket}{\rho L^n} .
$$
For this reason, it is convenient to assume that $e_p$ is independent of $L$, i.e. we assume that $c$ is defined on $\R^n\backslash\{0\}$ rather than on $\Lambda^*\backslash\{0\}$. We will also employ the fact that for any continuous function $F\in L^1(\R^n)$, decaying faster than $|p|^{-n-\epsilon}$ at infinity, for some $\epsilon>0$, we have the convergence
\begin{equation} \label{sdsdhsdjsdkdsl}
\lim_{L\to \infty}\frac{1}{L^n}\sum_{p\in \Lambda^*}F(p) =\frac{1}{(2\pi)^n}\int_{\R^n}F(p)\, dp .
\end{equation}
We denote the scattering solution by $1-w$ and set
$$
\phi :=Vw \quad \text{and} \quad g:=V-\phi=V(1-w).
$$
Note that $\hat g_0=2s_na^{n-2}$ by \eqref{sdsdsmsdnsdydudtdy}. Though $w$ is not integrable, it follows from the scattering equation \eqref{sdsdsdnsdbsdhsdg} that, as tempered distribution, $\hat w$ equals the function $p\mapsto \hat g_p/(2p^2)$. We shall abuse notation slightly by denoting
$$
\hat w_p:=\frac{\hat g_p}{2p^2}.
$$

\begin{lemma} \label{sdsdksdmsdh}
Suppose that $e:\R^n\backslash\{0\}\to (-\infty,1/2)$ is even, continuous and integrable with fast decay. Then
$$
E(\rho)=s_na^{n-2}\rho + Q+\tilde Q+\Omega,
$$
where
$$
Q:=\frac{1}{2(2\pi)^n\rho}\int p^2 \bigg[\frac{e_p^2+2\rho\hat w_pe_p}{1-2e_p}+(\rho \hat w_p)^2 \bigg]\, dp,
$$
$$
\tilde Q:=\frac{2}{(2\pi)^n}\int \hat \phi_p h_p\, dp,
$$
$$
\Omega:=\frac{1}{2(2\pi)^{2n}\rho}\int\int \big[ \hat V_{p-q}(e_p+\rho\hat w_p)(e_q+\rho\hat w_q)+ 2(\hat V_{p-q}-\hat V_p)s_ph_q -2\hat V_p h_ph_q\big]\, dp\, dq .
$$
\end{lemma}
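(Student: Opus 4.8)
The plan is to compute $\bra\tilde H_L\ket$ exactly at finite $L$, divide by $N=\rho L^n$, and pass to the thermodynamic limit. From the decomposition of $\bra\tilde H_L\ket$ into kinetic part $\sum_pp^2\bra a_p^+a_p\ket$ plus $E_1+E_2+E_3$ recorded just before the lemma, I would first invoke Lemma \ref{lem:prop-trial} to rewrite every expectation in terms of $N_0,h_p,s_p$. The kinetic part becomes $\sum_{p\neq0}p^2h_p$. For $E_1$ one needs $\bra\Num^2\ket$: splitting $\sum_{p,q}\bra a_p^+a_pa_q^+a_q\ket$ according to whether $q\in\{p,-p,0\}$ and using parts (1),(2),(6) of Lemma \ref{lem:prop-trial} gives $\bra\Num^2\ket-\bra\Num\ket=N^2+\sum_{p\neq0}h_p+2\sum_{p\neq0}h_p^2$. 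The sums in $E_2,E_3$ are split the same way: each becomes an off-diagonal double sum (with $h_ph_q$, resp.\ $s_ps_q$), a $p=-q$ single sum carrying factors $2h_p^2+h_p$, and a zero-momentum single sum $2N_0\sum_{p\neq0}\hat V_ph_p$ (resp.\ with $s_p$); here one uses that mode $0$ is a coherent state, the $\{p,-p\}$ are the paired modes, and these are mutually independent, so all the expectations in question factorise.

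I would then divide by $N$ and let $L\to\infty$ using the Riemann-sum convergence \eqref{sdsdhsdjsdkdsl}, on $\R^n$ for the single sums and on $\R^{2n}$ for the double sums. Any term carrying a single lattice sum $L^{-n}\sum(\cdots)$ is $\Oh(1)$ and hence vanishes after the remaining factor $N^{-1}$ --- this removes the $\sum h_p^2$ and $\sum\hat V_{2p}(2h_p^2+h_p)$ contributions and the $\bra\Num\ket$-subtraction in $E_1$ --- whereas the double sums converge to double integrals and the measure-zero restrictions $p\neq0$, $p\neq\pm q$ may be dropped. With $N_0/N\to1-(2\pi)^{-n}\rho^{-1}\!\int h_p\,dp$ from \eqref{sdsdmsdnsdj}, one arrives at a finite expression built from $\int p^2h_p$, $\int\!\int\hat V_{p-q}h_ph_q$, $\int\!\int\hat V_{p-q}s_ps_q$, $\int\hat V_ph_p$, $\int\hat V_ps_p$, $\int h_p$ and the constant $\tfrac12\rho\hat V_0$.

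The final step is to recognise this as $s_na^{n-2}\rho+Q+\tilde Q+\Omega$. One trades the bare potential for the scattering data everywhere through $\hat g_p=2p^2\hat w_p=\hat V_p-\hat\phi_p$, extracting $s_na^{n-2}\rho$ out of $\tfrac12\rho\hat V_0$ by means of $\hat g_0=2s_na^{n-2}$ (equation \eqref{sdsdsmsdnsdydudtdy}). The remaining algebra is organised by (a) the substitutions $h_p=e_p^2/(1-2e_p)$, $s_p=h_p+e_p$, $h_p+s_p=e_p/(1-2e_p)$, which turn e.g.\ $\int\hat g_p(h_p+s_p)$ into $2\int p^2\hat w_p\,e_p/(1-2e_p)$, a visible piece of $Q$; and (b) Parseval/convolution identities, e.g.\ $\int\hat V_{p-q}\hat w_q\,dq=(2\pi)^n\hat\phi_p$, $(2\pi)^{-n}\!\int\hat w_p\hat\phi_p\,dp=\int Vw^2$, and $(2\pi)^{-n}\!\int p^2\hat w_p^2\,dp=\int|\nabla w|^2=\tfrac12\int wg$, which dispose of the $\hat w$--$\hat w$ and $\hat w$--$e$ convolution integrals. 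Grouping the terms quadratic in $(e_p+\rho\hat w_p)$ yields $Q$ and the first summand of $\Omega$, the leftover $\int\hat\phi_ph_p$ terms assemble $\tilde Q$, and the pieces proportional to $\int h_p$ (coming from $N_0$), together with the $\hat V_p$-subtraction in $E_2$, build the $2(\hat V_{p-q}-\hat V_p)s_ph_q-2\hat V_ph_ph_q$ part of $\Omega$.

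I expect the main obstacle to be twofold. Analytically, one must justify the $L\to\infty$ limit: the convergence of the double sums to double integrals requires $\hat V_{p-q}h_ph_q$ and its analogues in $L^1(\R^{2n})$ with fast decay, which is where $V\in C_0^\infty$ (hence $\hat V$ Schwartz) and the assumed fast decay of $e$ enter; and one must remember that $w\notin L^1$, so $\hat w_p$ exists only as the tempered distribution represented by $\hat g_p/(2p^2)$, a function with a $|p|^{-2}$ singularity at the origin --- locally integrable precisely because $n\geq3$ --- and decay $\hat g_p/p^2$ at infinity, so that all of $\int p^2\hat w_p^2$, $\int\!\int\hat V_{p-q}\hat w_p\hat w_q$ and $\int\hat\phi_pe_p$ do converge absolutely and the Parseval manipulations are legitimate. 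Combinatorially, the obstacle is the sheer volume of the bookkeeping: roughly a dozen terms must be matched, and the cancellations that replace $\hat V$ by $\hat g$ and produce the clean form of $Q,\tilde Q,\Omega$ are easy to spoil by a stray sign or factor.
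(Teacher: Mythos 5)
Your proposal follows essentially the same route as the paper's own proof: expectations via Lemma \ref{lem:prop-trial}, splitting of $E_1,E_2,E_3$ according to coincidences of momenta, passage to the thermodynamic limit via \eqref{sdsdhsdjsdkdsl} with the diagonal single sums discarded and $N_0$ eliminated through \eqref{sdsdmsdnsdj}, and the final rearrangement through $\hat g_p=2p^2\hat w_p$, $\hat V=\hat g+\hat\phi$ and $(2\pi)^{-n}\int\hat V_{p-q}\hat w_q\,dq=\hat\phi_p$, so your plan is correct in method and scope. One caution for the last matching step: carried out faithfully (test the case $e\equiv 0$, where $\bra\tilde H_L\ket/N=\hat V_0\rho/2$ exactly), the leftover single-integral term comes out as $(2\pi)^{-n}\rho^{-1}\int p^2\bigl[\tfrac{e_p^2+2\rho\hat w_pe_p}{1-2e_p}+(\rho\hat w_p)^2\bigr]dp$, i.e.\ the displayed prefactor $\tfrac{1}{2(2\pi)^n\rho}$ in $Q$ is off by a factor $2$ (the corrected version is the one consistent with \eqref{sdsdksdjsdu} and with the three-dimensional Lee--Huang--Yang constant), so do not distort your otherwise correct bookkeeping in order to reproduce the constant as printed in the statement.
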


\begin{proof}
By Lemma \ref{lem:prop-trial}, the kinetic energy is simply
$$
\sum_{p}p^2\bra a_p^+a_p\ket = \sum_{p\neq 0}p^2h_p = \sum_{p\neq 0}\frac{p^2e_p^2}{1-2e_p}.
$$
Using commutation relations, Lemma \ref{lem:prop-trial} and \eqref{sdsdmsdnsdj}, we find that
$$
E_1=\frac{\hat V_0}{2|\Lambda|}\bigg(\sum_{p,q}\bra a_p^+a_pa_q^+a_q\ket -\sum_p\bra a_p^+a_p \ket \bigg)= \frac{\hat V_0}{2|\Lambda|}\bigg(N^2+\sum_{p\neq 0}h_p(h_p+1) \bigg),
$$
where the last sum comes from the special cases $p=\pm q$. Note that contributions like that will vanish in the energy per particle in the thermodynamic limit. Similarly,
\begin{eqnarray*}
E_2&=&\frac{N_0}{|\Lambda|}\sum_{p\neq 0} \hat V_ph_p + \frac{1}{2|\Lambda|}\sum_{p\neq 0}\hat V_{2p}h_p(2h_p+1)+    \frac{1}{2|\Lambda|}\sum_{\stackrel{p,q\neq 0}{p\neq \pm q}}\hat V_{p-q} h_ph_q \\
&=& \sum_{p\neq 0}\rho \hat V_ph_p -\frac{1}{|\Lambda|}\sum_{p,q\neq 0}\hat V_ph_ph_q +  \frac{1}{2|\Lambda|}\sum_{p\neq 0}\hat V_{2p}h_p(2h_p+1)+ \frac{1}{2|\Lambda|}\sum_{\stackrel{p,q\neq 0}{p\neq \pm q}}\hat V_{p-q} h_ph_q ,
\end{eqnarray*}
and also
$$
E_3= \sum_{p\neq 0}\rho \hat V_ps_p -\frac{1}{|\Lambda|}\sum_{p,q\neq 0}\hat V_ps_ph_q + \frac{1}{2|\Lambda|}\sum_{\stackrel{p,q\neq 0}{p\neq \pm q}}\hat V_{p-q} s_ps_q .
$$
Thus, in the limit $L\to \infty$,
\begin{eqnarray*}
E(\rho)&=&\frac{\hat V_0 \rho}{2}+\frac{1}{(2\pi)^n\rho}\int\frac{p^2e_p^2+\rho\hat V_pe_p}{1-2e_p}\, dp +\frac{1}{2(2\pi)^{2n}\rho}\int\int \hat V_{p-q}s_ps_q-2\hat V_ps_ph_q \, dp\,dq \\
&+& \frac{1}{2(2\pi)^{2n}\rho}\int\int \hat V_{p-q}h_ph_q-2\hat V_ph_ph_q \, dp\,dq .
\end{eqnarray*}
By the relation $e_p=s_p-h_p$ we have
$$
\int\int \hat V_{p-q}s_ps_q-2\hat V_ps_ph_q \, dp\,dq =\int \int \hat V_{p-q}(e_pe_q -h_ph_q)+2(\hat V_{p-q}-\hat V_p)s_ph_q \, dp\, dq
$$
and hence
\begin{eqnarray*}
E(\rho) &=&\frac{\hat V_0 \rho}{2}+\frac{1}{(2\pi)^n\rho}\int\frac{p^2e_p^2+\rho\hat V_pe_p}{1-2e_p}\, dp +\frac{1}{2(2\pi)^{2n}\rho}\int\int \hat V_{p-q}e_pe_q\, dp\,dq \\
&+& \frac{1}{(2\pi)^{2n}\rho}\int\int  (\hat V_{p-q}-\hat V_p)s_ph_q -\hat V_ph_ph_q \, dp\,dq .
\end{eqnarray*}
Now, using $(2\pi)^n\hat \phi=\hat V*\hat w$, \eqref{sdsdsmsdnsdydudtdy} and $V=g+\phi$, we get
$$
\frac{\hat V_0\rho}{2}= \frac{\hat g_0\rho}{2}+\frac{\rho}{2(2\pi)^n}\int \hat V_p \hat w_p\, dp = s_na^{n-2}\rho+\frac{\rho}{2(2\pi)^n}\int \hat g_p \hat w_p\, dp + \frac{\rho}{2(2\pi)^n}\int \hat \phi_p \hat w_p\, dp
$$
and also
\begin{eqnarray*}
\frac{1}{2(2\pi)^{2n}\rho}\int\int \hat V_{p-q}e_pe_q\, dp\,dq &=& \frac{1}{2(2\pi)^{2n}\rho}\int\int \hat V_{p-q}(e_p+\rho \hat w_p)(e_q+\rho \hat w_q)\, dp\, dq \\
&-& \frac{1}{(2\pi)^n}\int \hat \phi_p e_p\, dp - \frac{\rho}{2(2\pi)^n}\int \hat \phi_p \hat w_p\, dp .
\end{eqnarray*}
Combining terms yields the desired.
\end{proof}

In \cite{esy08} the function $e_p$ is chosen as the pointwise minimizer of the sum of integrands in $Q$ and $\tilde Q$. However, it turns out that including the latter in the minimization problem does not lower the energy significantly. In fact, the calculation of Yau-Yin \cite{YY} suggests that $\tilde Q$ is really not present in the ground state energy, but should rather be cancelled by a term 'missing' in the energy of our trial state. Thus we will choose $e_p$ to minimize the simpler expression
$$
m_p:=\frac{e_p^2+2\rho \hat w_p e_p}{1-2e_p} .
$$
This yields
\begin{equation} \label{sdsdmsdksdlsdjsdk}
-e_p^2+e_p+\rho \hat w_p=0, \quad e_p= \frac{1}{2}\left(1-\sqrt{1+4\rho \hat w_p}\, \right) 
\end{equation}
and
$$
m_p=\frac{(1-2e_p)(-e_p-\rho\hat w_p)+(-e_p^2+e_p+\rho\hat w_p)}{1-2e_p}=\frac{1}{2}\big(\sqrt{1+4\rho\hat w_p}-1-2\rho\hat w_p \big),
$$
provided $1+4\rho \hat w_p\geq 0$. Note however that, since $\hat g$ is continuous, $\hat g_0>0$ and $\hat g_p\to 0$ as $|p|\to \infty$, it follows that $\hat w_p$ is bounded from below, and hence
$$
\liminf_{\rho\to 0}\big[ \inf_{p\neq 0}(1+4\rho \hat w_p)\big]\geq 1.
$$
With the choice in \eqref{sdsdmsdksdlsdjsdk} we have
\begin{equation} \label{sdsdksdjsdu}
Q=\frac{1}{2(2\pi)^n\rho}\int p^2\Phi(\rho\hat w_p)\, dp,
\end{equation}
where
\begin{equation} \label{sdsdmdsnsdjsdh}
\Phi(t):=\sqrt{1+4t}+2t^2-2t-1.
\end{equation}
Finally, we note that $|e_p|\leq \rho |\hat w_p|$, for $|p|\gg \rho^{1/2}$, and hence $e$ inherits decay from $\hat g$.

\subsection{Estimates}

In this section we estimate the integrals from Lemma \ref{sdsdksdmsdh} in the limit $\rho \to 0$, given the particular choice in \eqref{sdsdmsdksdlsdjsdk}. We begin with the term $Q$, and in fact we will derive asymptotics of order up to $\sim n/2$ with coefficients, all except one, given in terms of integrals of $\hat w_p$ (see also Table 1 below). We stress, however, that the physical relevance of these higher order asymptotics remains to be understood. In fact, while the main contribution in dimension three and four comes from $Q$, we believe that $\Omega$ and $Q$ are of the same leading order in dimension $n\geq 5$. 

\begin{lemma} \label{sdsdsdmsdksdisdu}

In dimension $n=3$,
$$
Q= (4\pi a \rho)\cdot \frac{128}{15\sqrt{\pi}} Y^{1/2} +o(\rho^{3/2}) \quad \text{and} \quad Q\leq (4\pi a \rho)\cdot \frac{128}{15\sqrt{\pi}} Y^{1/2}  .
$$
In dimension $n\geq 4$,
$$
Q= \sum_{m=3}^{\lceil n/2\rceil}c_m\rho^{m-1}+c_{n/2+1}\bigg( s_na^{n-2}\rho Y^{n/2-1}|\ln Y|\bigg) +\Oh(\rho^{n/2}),  
$$
where the error term depends on $V$ and where $c_m=0$ if $m\notin \N$, 
$$
c_m:=\frac{\Phi^{(m)}(0)}{2(2\pi)^nm!}\int_{\R^n}p^2\hat w_p^m, \quad m\leq (n+1)/2,
$$
and
$$
c_{n/2+1}:=\frac{\Phi^{(n/2+1)}(0)|\mathbb{S}^{n-1}|^{n/2+1}(n-2)^{n/2}}{4(2\pi)^n (n/2+1)!}.
$$
The function $\Phi$ is given in \eqref{sdsdmdsnsdjsdh}.
\end{lemma}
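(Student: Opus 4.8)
The plan is to Taylor expand $\Phi$ from \eqref{sdsdmdsnsdjsdh} at the origin only as far as the coefficients stay integrable against $p^2\,dp$, and to dispatch the remainder by rescaling. I record the facts I need: $\Phi$ is analytic and strictly increasing on $(-1/4,\infty)$, with $\Phi(0)=\Phi'(0)=\Phi''(0)=0$, $\Phi^{(3)}(0)=24$, $\Phi(t)=\sum_{m\geq 3}\frac{\Phi^{(m)}(0)}{m!}t^m$ for $|t|<1/4$, $0\leq\Phi(t)\leq 2t^2$, and $|\Phi^{(j)}(t)|\leq C_j$ for $t\geq 0$ and $j\geq 2$; and $g=V(1-w)=Vu\in C_0^\infty(\R^n)$, so $\hat g$ is Schwartz and radial, $\hat g_0=2s_na^{n-2}=:2A$ by \eqref{sdsdsmsdnsdydudtdy}, $g\geq 0$ forces $|\hat g_p|\leq\hat g_0$ hence $0<\hat w_p\leq A/p^2$, $\hat g_p-\hat g_0=\Oh(|p|^2)$ near the origin, and for all small $\rho$ one has $\rho\hat w_p>-1/4$ everywhere, so $\Phi(\rho\hat w_p)$ is defined and $\Phi(\rho\hat w_p)\leq\Phi(\rho A/p^2)$ by monotonicity. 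Set $K:=\lceil n/2\rceil$ and split $\Phi=\sum_{m=3}^{K}\frac{\Phi^{(m)}(0)}{m!}(\cdot)^m+\widetilde\Phi$. Because $p^2\hat w_p^m$ is integrable on $\R^n$ precisely for $m\leq K$ (the origin contributes $p^{n+1-2m}$, integrable iff $m<n/2+1$, while $\hat w_p$ decays rapidly), \eqref{sdsdksdjsdu} yields the \emph{exact} decomposition
\[
Q=\sum_{m=3}^{K}c_m\rho^{m-1}+Q_{\mathrm{rem}},\qquad Q_{\mathrm{rem}}:=\frac{1}{2(2\pi)^n\rho}\int_{\R^n}p^2\,\widetilde\Phi(\rho\hat w_p)\,dp,
\]
with $c_m$ as in the statement (and $c_m=0$ when $m\notin\N$, automatically); here $\widetilde\Phi(t)=\Oh(t^{K+1})$ near $0$ and $\widetilde\Phi(t)=\Oh(t^{K})$ for large $t\geq 0$, which is exactly what makes this integral converge for each fixed $\rho>0$. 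Substituting $p=\rho^{1/2}q$,
\[
Q_{\mathrm{rem}}=\frac{\rho^{n/2}}{2(2\pi)^n}\int_{\R^n}q^2\,\widetilde\Phi\left(\frac{\hat g_{\rho^{1/2}q}}{2q^2}\right)dq=:\frac{\rho^{n/2}}{2(2\pi)^n}\,J(\rho).
\]

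For odd $n$ (in particular $n=3$) the integral $\int_{\R^n}q^2\widetilde\Phi(A/q^2)\,dq$ converges: near $q=0$ its integrand is $\Oh(q^{2-2K})$, integrable as $2K<n+2$, and near $q=\infty$ it is $\Oh(q^{-2K})$, integrable as $2K=n+1>n$. Using $|\hat g_{\rho^{1/2}q}|\leq\hat g_0$ to dominate the integrand of $J(\rho)$ uniformly in $\rho$ by a fixed $L^1(\R^n)$ function of these two types, dominated convergence gives $J(\rho)\to\int_{\R^n}q^2\widetilde\Phi(A/q^2)\,dq$ and hence $Q_{\mathrm{rem}}=\Oh(\rho^{n/2})$ --- the assertion for $n\geq 5$. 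For $n=3$ one has $K=2$, $\widetilde\Phi=\Phi$, and polar coordinates together with $r=A^{1/2}s$ give
\[
\int_{\R^3}q^2\Phi(A/q^2)\,dq=4\pi A^{5/2}\int_0^\infty s^4\Phi(1/s^2)\,ds=2\pi A^{5/2}\int_0^\infty u^{-7/2}\Phi(u)\,du=\frac{256\pi}{15}A^{5/2},
\]
the last being a standard integral (evaluated e.g.\ via $u=1/s^2$ and a Beta function) equal to $128/15$; with $A=4\pi a$ this reads $Q=\frac{16}{15\pi^2}A^{5/2}\rho^{3/2}+o(\rho^{3/2})=(4\pi a\rho)\frac{128}{15\sqrt\pi}Y^{1/2}+o(\rho^{3/2})$. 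The error-free upper bound is immediate from $\Phi(\rho\hat w_p)\leq\Phi(\rho A/p^2)$ and the same substitution, now yielding the \emph{exact} identity $\int_{\R^3}p^2\Phi(\rho A/p^2)\,dp=(\rho A)^{5/2}\frac{256\pi}{15}$, whence $Q\leq(4\pi a\rho)\frac{128}{15\sqrt\pi}Y^{1/2}$.

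For even $n$ the integral $\int_{\R^n}q^2\widetilde\Phi(A/q^2)\,dq$ diverges \emph{logarithmically} at $q=\infty$, since there $\widetilde\Phi(A/q^2)\sim\frac{\Phi^{(K+1)}(0)}{(K+1)!}(A/q^2)^{K+1}$ with $K+1=n/2+1$, so the integrand is of order $q^{-n}$. In $J(\rho)$ this tail is cut off near $|q|\sim\rho^{-1/2}$, because $\hat g_{\rho^{1/2}q}$ decays once $\rho^{1/2}|q|\gtrsim 1$. Splitting $J(\rho)=\int_{|q|<\Lambda}+\int_{|q|>\Lambda}$ at a fixed large $\Lambda$, the first piece tends by dominated convergence to the finite constant $\int_{|q|<\Lambda}q^2\widetilde\Phi(A/q^2)\,dq$; in the second, $\hat g_{\rho^{1/2}q}/(2q^2)$ is small, and after undoing the substitution its leading term equals $\frac{\Phi^{(K+1)}(0)}{2^{K+1}(K+1)!}\int_{|p|>\Lambda\rho^{1/2}}(\hat g_p)^{K+1}|p|^{-n}\,dp$ (recall $2K=n$), whose behaviour at the lower end is $\hat g_0^{K+1}|\mathbb S^{n-1}|\int_{\Lambda\rho^{1/2}}r^{-1}\,dr+\Oh(1)=\frac12\hat g_0^{K+1}|\mathbb S^{n-1}|\,|\ln\rho|+\Oh(1)$, while all remaining contributions --- higher powers of $\widetilde\Phi$; replacing $(\hat g_p)^{K+1}$ by $\hat g_0^{K+1}$, which by $\hat g_p-\hat g_0=\Oh(p^2)$ turns the critical $r^{-1}$ into $r^{+1}$; and the tail $|p|\gtrsim\rho^{1/2}$ --- are $\Oh(1)$. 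Hence $J(\rho)=\frac{\Phi^{(n/2+1)}(0)A^{n/2+1}|\mathbb S^{n-1}|}{2(n/2+1)!}|\ln\rho|+\Oh(1)$, and since $|\ln\rho|=|\ln Y|+\Oh(1)$ and $A=s_na^{n-2}=(n-2)|\mathbb S^{n-1}|a^{n-2}$, the quantity $\frac{\rho^{n/2}}{2(2\pi)^n}J(\rho)$ becomes precisely $c_{n/2+1}\,s_na^{n-2}\rho\,Y^{n/2-1}|\ln Y|+\Oh(\rho^{n/2})$ after a short check of the constant (for instance it is $8\pi^4a^6\rho^2|\ln Y|$ when $n=4$).

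The only genuine work is the even-$n$ step: one must show that replacing $\hat g_{\rho^{1/2}q}$ by the frozen value $\hat g_0$ throughout the crossover range $\Lambda<|q|<\rho^{-1/2}$ changes $J(\rho)$ by $\Oh(1)$ rather than $\Oh(|\ln\rho|)$, and that the region $|q|\gtrsim\rho^{-1/2}$ (where $\hat g$ really decays) contributes $\Oh(1)$; keeping the uniform bounds on $\widetilde\Phi$ and its Taylor tails sharp enough to push every non-logarithmic term into $\Oh(\rho^{n/2})$ is the principal bookkeeping obstacle. Given that, the odd-$n$ case and the $n=3$ coefficient and upper bound drop out immediately from the scaling identity and dominated convergence.
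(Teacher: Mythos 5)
Your proof is correct and follows essentially the same route as the paper's: Taylor-expand $\Phi$ at the origin up to the last power $m$ for which $p^2\hat w_p^m$ is integrable, control the remainder at the momentum scale $|p|\sim\rho^{1/2}$, and for even $n$ extract the logarithm on $\rho^{1/2}\lesssim |p|\lesssim 1$ by replacing $\hat g_p$ with $\hat g_0$ (the paper invokes Lipschitz continuity where you use $\hat g_p-\hat g_0=\Oh(|p|^2)$); your only organizational difference is that you subtract the Taylor polynomial globally, so the $c_m\rho^{m-1}$ terms appear exactly, and you treat the remainder via the substitution $p=\rho^{1/2}q$ with dominated convergence, whereas the paper splits the integration domain at $|p|\sim\rho^{1/2}$ and expands one order further, arriving at the same $\Oh(\rho^{n/2})$ errors and the same constants (your evaluation $\int_0^\infty u^{-7/2}\Phi(u)\,du=128/15$ and the $n=4$ coefficient both check out). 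One cosmetic slip: $g\geq 0$ gives $|\hat g_p|\leq \hat g_0$ but not $\hat w_p>0$; fortunately only the two-sided bound $|\hat w_p|\leq A/p^2$, the lower bound $\rho\hat w_p>-1/4$ for small $\rho$ (which you state), and the monotonicity of $\Phi$ on $(-1/4,\infty)$ are actually used, so nothing breaks.
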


\begin{proof}
We first consider the case $n=3$. Let $\epsilon=(\hat g_0 \rho)^{1/2}$. By a change of variables $p\mapsto \epsilon p$, continuity of $\hat g$ and the dominated convergence theorem we have
\begin{equation} \label{sdsdsdmsdjsdi}
\rho^{-3/2}Q= \frac{\hat g_0^{5/2}}{2(2\pi)^3}\int_{\R^3}p^2\Phi\bigg(\frac{\hat g_{\epsilon p}}{2p^2\hat g_0}\bigg)\, dp \to \frac{\hat g_0^{5/2}}{2(2\pi)^3}\int_{\R^3}p^2\Phi\bigg(\frac{1}{2p^2}\bigg)\, dp
\end{equation}
as $\rho\to 0$. A direct calculation then yields
\begin{equation} \label{asasasnasbashg}
Q= (4\pi a \rho)\cdot \frac{128}{15\sqrt{\pi}} Y^{1/2} +o(\rho^{3/2}).
\end{equation}
The explicit upper bound claimed in the lemma can be obtained from the same calculation with the additional information that $\Phi$ is increasing and $\hat g_p\leq \hat g_0$. In \cite{esy08} the estimate is done more carefully and shows that \eqref{asasasnasbashg} holds with $o(\rho^{3/2})$ replaced by $\Oh(\rho^2|\ln \rho|)$. In higher dimensions the asymptotics of $Q$ is more subtle, due to the fact that the latter integral in \eqref{sdsdsdmsdjsdi} becomes divergent! That is, we cannot replace $\hat g_p$ by $\hat g_0$, because we need the decay of $\hat g$ in order for the integral to converge. However, from the asymptotics of $\Phi$ we get some information. First notice that $\Phi(t)=\Oh(t^2)$. Hence
\begin{eqnarray*}
Q_{\epsilon}'&:=&\frac{1}{2(2\pi)^n\rho}\int_{|p|\leq \epsilon}p^2\Phi(\rho\hat w_p)\, dp \leq \frac{1}{2(2\pi)^n\rho}\int_{|p|\leq \epsilon}p^22(\rho\hat w_p)^2\, dp \\
&\leq& \frac{\hat g_0^2\rho}{4(2\pi)^n} \int_{|p|\leq \epsilon}p^{-2}\, dp =Ca^{n-2}\rho Y^{n/2-1},
\end{eqnarray*}
where we have inserted $\hat g_0=2s_na^{n-2}$. To estimate $Q_{\epsilon}:=Q-Q_{\epsilon}'$, we expand $\Phi$ to the $(k-1)$ th order around $t=0$, where $k$ is the smallest integer such that $2k\geq n+3$. Since $\Phi(0)=\Phi'(0)=\Phi''(0)=0$, we have
$$
\Phi(t)= b_3t^3+\ldots +b_{k-1}t^{k-1}+\Oh(t^k),
$$
where $b_m:=\Phi^{(m)}(0)/m!$. Correspondingly we have the expansion
$$
Q_{\epsilon}=Q_{\epsilon}^{(3)}+\ldots +Q_{\epsilon}^{(k-1)}+ \E,
$$
where
$$
Q_{\epsilon}^{(m)}:=\frac{b_m}{2(2\pi)^n\rho}\int_{|p|>\epsilon}p^2(\rho \hat w_p)^m\, dp,
$$
and where
$$
|\E|\leq C\rho^{-1}\int_{|p|>\epsilon}|\rho \hat w_p|^k\, dp \leq C\hat g_0^k\rho^{k-1}\int_{|p|>\epsilon}p^{2-2k}\, dp= Ca^{n-2}\rho Y^{n/2-1}.
$$
If $m<n/2+1$, then $p^2\hat w_p^m$ is integrable at $p=0$, and we have
$$
Q_{\epsilon}^{(m)} = \frac{b_m\rho^{m-1}}{2(2\pi)^n}\int p^2\hat w_p^m\, dp +\Oh\big(a^{n-2}\rho Y^{n/2-1} \big).
$$
Notice that if $n$ is odd, then $k=(n+3)/2$, and hence $m<n/2+1$, for each $m\leq k-1$. In equal dimension there is a $m=n/2+1$ term:
\begin{eqnarray*}
Q_{\epsilon}^{(m)}&=& \frac{b_m}{2(2\pi)^n\rho}\int_{\epsilon<|p|\leq 1}p^2(\rho \hat w_p)^m\, dp + \Oh(\rho^{m-1}) \\
&=& \frac{b_m}{2(2\pi)^n\rho}\int_{\epsilon<|p|\leq 1}p^2 \bigg(\frac{\hat g_0}{2p^2} \bigg)^m  \, dp + \Oh(\rho^{m-1}),
\end{eqnarray*}
where the errors depend on $V$, and where we have used Lipschitz continuity of $\hat g$ to replace $\hat g_p$ with $\hat g_0$ in the second estimate. Now, by inserting $\hat g_0=2s_na^{n-2}$, we get
$$
Q_{\epsilon}^{n/2+1}= s_na^{n-2}\rho \frac{b_{n/2+1}|\mathbb{S}^{n-1}|^{n/2+1}(n-2)^{n/2}}{4(2\pi)^n}Y^{n/2-1}|\ln Y|+\Oh(\rho^{n/2}),
$$
where we have artificially replaced $|\ln(\hat g_0\rho)|$ by $|\ln Y|$ at the cost of an error of order $\rho^{n/2}$, depending on $V$. In particular, with $b_3=4$ and $|\mathbb{S}^{3}|=2\pi^2$, we have 
$$
Q_2^{(3)}=(4\pi^2a^2\rho)\cdot 2\pi^2Y |\ln Y| +\Oh(\rho^2),
$$
which is the term present in four dimensions.
\end{proof}

In table 1 we have listed the powers of $\rho$ in the expansion of $Q$ up to dimension $n=7$. Whether the expansion of $e_0(\rho)$ has this structure too remains to be clarified.

\begin{table}
\centering
\begin{tabular}{l |c ccc}
$n$ & $Q$  \\ \hline
$3$  & $\rho^{3/2}$ & $\Oh(\rho^2|\ln \rho|)$ \\
$4$ & $\rho^2|\ln \rho|$ & $\Oh(\rho^2)$ &  \\
$5$  & $\rho^2$ & $\Oh(\rho^{5/2})$  & \\
$6$  & $\rho^2$ & $\rho^3|\ln \rho|$ & $\Oh(\rho^3)$ &  \\
$7$ & $\rho^2$ & $\rho^3$ & $\Oh(\rho^{7/2})$ \\
$\vdots$ & $\vdots$
\end{tabular}
\caption{Qualitative expansion of $Q$ in the first few dimensions.}
\end{table}

\begin{lemma} \label{sdsdjdhdydhdy}
$$
\Omega(\rho)=\left\{\begin{array}{ll} \mathcal{O}(\rho^2|\ln \rho|)& n=3 \\  \mathcal{O}(\rho^2)& n\geq4   \end{array} \right.
$$
\end{lemma}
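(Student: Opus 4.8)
The plan is to split $\Omega=\Omega_1+\Omega_2+\Omega_3$, keeping the common prefactor $\bigl(2(2\pi)^{2n}\rho\bigr)^{-1}$, where $\Omega_1$ carries the integrand $\hat V_{p-q}F_pF_q$ with $F_p:=e_p+\rho\hat w_p$, $\Omega_2$ carries $2(\hat V_{p-q}-\hat V_p)s_ph_q$, and $\Omega_3$ carries $-2\hat V_ph_ph_q$, and to estimate each piece in $L^1$. The preliminary ingredient is a set of pointwise bounds. From the explicit choice \eqref{sdsdmsdksdlsdjsdk} and $\sqrt{1+4t}-1\le 2t$ for $t\ge 0$ one gets $0\le -e_p\le\rho\hat w_p$ wherever $\hat w_p\ge 0$, and $|e_p|\le C\rho|\hat w_p|$ in general; since then $e_p<0$, also $h_p=e_p^2/(1-2e_p)\le\min\bigl(e_p^2,\tfrac12|e_p|\bigr)$ and $|s_p|=|e_p+h_p|\le C|e_p|$; and, writing $F_p=\rho\hat w_p\bigl(\sqrt{1+4\rho\hat w_p}-1\bigr)\bigl(\sqrt{1+4\rho\hat w_p}+1\bigr)^{-1}$, also $0\le F_p\le C\min\bigl((\rho\hat w_p)^2,\rho\hat w_p\bigr)$. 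Now $g=Vu\ge 0$ lies in $C_0^\infty(\R^n)$, so $\hat g$ is Schwartz, $|\hat g_p|\le\hat g_0=2s_na^{n-2}$, and $\hat g_p$ decays faster than any polynomial; since $\hat w_p=\hat g_p/(2p^2)$, these bounds are, for $|p|\le 1$, of size $|s_p|\lesssim\rho\hat g_0|p|^{-2}$, while $h_p$ and $|F_p|$ are $\lesssim\rho\hat g_0|p|^{-2}$ for $|p|\lesssim(\rho\hat g_0)^{1/2}$ and $\lesssim\rho^2\hat g_0^2|p|^{-4}$ for $(\rho\hat g_0)^{1/2}\lesssim|p|\le 1$, all with rapid decay for $|p|>1$. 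Splitting the radial integrals at the crossover scale $|p|\sim\rho^{1/2}$ then gives $\int|s_p|\,dp=\Oh(\rho)$; both $\int h_p\,dp$ and $\int|F_p|\,dp$ are $\Oh(\rho^{3/2})$ for $n=3$, $\Oh(\rho^2|\ln\rho|)$ for $n=4$, and $\Oh(\rho^2)$ for $n\ge 5$; and $\int|q|\,h_q\,dq$ is $\Oh(\rho^2|\ln\rho|)$ for $n=3$ and $\Oh(\rho^2)$ for $n\ge 4$.

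For $\Omega_1$ and $\Omega_3$ the crude bound $|\hat V_{p-q}|\le\|V\|_1=\hat V_0$ is enough. It gives $|\Omega_1|\le\frac{\hat V_0}{2(2\pi)^{2n}\rho}\Bigl(\int|F_p|\,dp\Bigr)^2$; and, since $\Omega_3=-\frac{1}{(2\pi)^{2n}\rho}\Bigl(\int\hat V_ph_p\,dp\Bigr)\Bigl(\int h_q\,dq\Bigr)$, also $|\Omega_3|\le\frac{\hat V_0}{(2\pi)^{2n}\rho}\Bigl(\int h_p\,dp\Bigr)^2$. By the estimates above both are $\Oh(\rho^2)$ when $n=3$, $\Oh\bigl(\rho^3(\ln\rho)^2\bigr)$ when $n=4$, and $\Oh(\rho^3)$ when $n\ge 5$; in particular $o(\rho^2)$ for $n\ge 4$.

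The essential point is $\Omega_2$. The crude bound $|\hat V_{p-q}-\hat V_p|\le 2\hat V_0$ only gives $|\Omega_2|\le\frac{C}{\rho}\Bigl(\int|s_p|\,dp\Bigr)\Bigl(\int h_q\,dq\Bigr)$, which is $\Oh(\rho^2|\ln\rho|)$ for $n\le 4$---marginal when $n=3$ and one logarithm too large when $n=4$. Instead I would use the first-order cancellation in $\hat V$: since $\hat V$ is Schwartz, $|\hat V_{p-q}-\hat V_p|\le C_V\min(|q|,1)\le C_V|q|$ uniformly in $p$ (mean value theorem for $|q|\le 1$, and $|\hat V_{p-q}-\hat V_p|\le 2\hat V_0$ otherwise). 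Then $|\Omega_2|\le\frac{C}{\rho}\Bigl(\int|s_p|\,dp\Bigr)\Bigl(\int|q|\,h_q\,dq\Bigr)$, which equals $\frac{C}{\rho}\cdot\Oh(\rho)\cdot\Oh(\rho^2|\ln\rho|)=\Oh(\rho^2|\ln\rho|)$ for $n=3$ and $\frac{C}{\rho}\cdot\Oh(\rho)\cdot\Oh(\rho^2)=\Oh(\rho^2)$ for $n\ge 4$. Adding the three contributions yields $\Omega=\Oh(\rho^2|\ln\rho|)$ for $n=3$ and $\Omega=\Oh(\rho^2)$ for $n\ge 4$. I expect the main obstacle to be precisely $\Omega_2$---one must not discard the linear smallness of $\hat V_{p-q}-\hat V_p$ in $q$---together with the bookkeeping of the two regimes of $e_p,h_p,s_p,F_p$ across $|p|\sim\rho^{1/2}$; note in particular that $F_p=\Oh\bigl((\rho\hat w_p)^2\bigr)$ only for $|p|\gg\rho^{1/2}$, while $F_p\asymp\rho\hat w_p$ near the origin, so that $\int|F_p|\,dp$ is of the same order as $\int h_p\,dp$ rather than smaller.
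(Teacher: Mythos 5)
Your proposal is correct and follows essentially the same route as the paper: the same three-term bound $|\Omega|\le C_V\rho^{-1}\big[(\int|e_p+\rho\hat w_p|)^2+(\int|s_p|)(\int|q|h_q)+(\int h_p)^2\big]$ (your $F_p$ equals $e_p^2$ by the relation \eqref{sdsdmsdksdlsdjsdk}, which is how the paper phrases the first term), the same use of the Lipschitz bound $|\hat V_{p-q}-\hat V_p|\le C_V|q|$ on the cross term, and the same splitting of the radial integrals at $|p|\sim\rho^{1/2}$ yielding $\int|s_p|=\Oh(\rho)$, $\int h_p=\Oh(\rho^{3/2}),\Oh(\rho^2|\ln\rho|),\Oh(\rho^2)$ and $\int|q|h_q=\Oh(\rho^2|\ln\rho|),\Oh(\rho^2)$. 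No gaps.
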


\begin{proof}
Using $|\hat V_p|\leq \hat V_0$, Lipschitz continuity of $\hat V$ and the relation in \eqref{sdsdmsdksdlsdjsdk}, we have
$$
|\Omega| \leq C_V\rho^{-1}\bigg\{ \bigg(\int e_p^2\, dp \bigg)^2+ \bigg(\int |s_p|\, dp \bigg)\bigg(\int h_q|q|\, dq \bigg)+\bigg(\int h_p\, dp \bigg)^2\bigg\} .
$$

Notice the asymptotics
$$
h_p=\frac{1}{2}\bigg(\frac{1+2\rho \hat w_p}{\sqrt{1+4\rho \hat w_p}} -1\bigg)= \left\{ \begin{array}{ll} \Oh\big(\sqrt{\rho \hat w_p}\big) & \text{as $|\rho\hat w_p|\to \infty$} \\ \Oh\big(\rho^2\hat w_p^2\big) & \text{as $|\rho \hat w_p|\to 0$}   \end{array}\right. .
$$
In fact, $h_p\leq C\rho^2\hat w_p^2$ for each $p\neq 0$, provided $\rho$ is sufficiently small. In dimension $n\geq 5$ we then simply have
$$
\int h_p\, dp =\Oh\big(\rho^2 \| \hat w \|_2^2 \big).
$$
Otherwise we split the integral into two parts 
$$
\int h_p\, dp \leq C\bigg[ \int_{|p|\leq \epsilon} (\rho \hat w_p)^{1/2}\, dp + \int_{|p|>\epsilon} (\rho \hat w_p)^2 \, dp \bigg]=I_1+I_2
$$
where $\epsilon:=(a\rho)^{1/2}$. Since $\hat g_p\leq \hat g_0=2s_na^{n-2}$, it follows that $I_1=\Oh(\rho Y^{n/2-1})$. In dimension $n=3$ we again use $|\hat g_p|\leq \hat g_0$ to obtain $I_2=\Oh(\rho Y^{1/2})$, and in four dimensions we get a logarithmic term, 
$$
I_2\leq C\rho Y|\ln Y| +C_V\rho^2.
$$
In total,
\begin{equation} \label{sdsdsdmdsjsdk}
\int h_p\, dp \leq \left\{ \begin{array}{ll} C \rho Y^{1/2}& n=3  \\ C \rho Y|\ln Y|+C_V\rho^2 & n=4 \\ C_V\rho^2 & n\geq 5  \end{array}\right. .
\end{equation}
By repeating the above estimates with an additional factor $|p|$ in the integrands, we see that
$$
\int |p| h_p\, dp \leq \left\{ \begin{array}{ll} C_V\rho^2|\ln \rho | & n=3 \\ C_V\rho^2 & n\geq 4  \end{array}\right. .
$$
The integral of $e_p^2$ is estimated similarly to $h_p$ and in fact \eqref{sdsdsdmdsjsdk} holds with $h_p$ replaced by $e_p^2$. Finally, since
$$
s_p=\frac{-\rho \hat w_p}{\sqrt{1+4\rho \hat w_p}},
$$
we see that 
$$
\int |s_p|\, dp =\mathcal{O}(\rho),
$$
for any $n\geq 3$, and we are done.
\end{proof}

\begin{remark}
From the estimate \eqref{sdsdsdmdsjsdk} and $|\hat \phi_p|\leq \hat \phi_0$ it follows that 
$$
\tilde Q(p) \leq \left\{ \begin{array}{ll} C \tilde \gamma a \rho Y^{1/2}& n=3  \\ C\tilde \gamma a^2 \rho Y|\ln Y|+C_V\rho^2 & n=4 \\ C_V\rho^2 & n\geq 5  \end{array}\right. ,
$$
where $\tilde \gamma:=\hat \phi_0/\hat g_0$.
\end{remark}

\noindent In order to finish the proof of Theorem \ref{thm:ESYndim} we only need to show that $\tilde \gamma \leq C \gamma$ as defined in  \eqref{definitiona_gamma}. This, however, follows easily from \eqref{sssdsbdgsdfsdtsdrsy} and \eqref{sdsdsmsdnsdydudtdy}, since then
$$
\hat \phi_0 =\int V(x)\phi(x)\, dx \leq a^{n-2}\int V(x)|x|^{2-n}\, dx = \frac{\hat g_0}{2s_n}\int V(x)|x|^{2-n}\, dx,
$$
as desired.

\appendix
\section{Equivalence of Ensembles} \label{app-ensembles}

In this section we prove Lemma \ref{lem-ensembles}. We will see that the canonical and grand canonical energies are related via the Legendre transform, and in order for this to be well-behaved globally, it is convenient to have high density bounds on the ground state energy. A trivial upper bound to $E_0(N,L)$ with periodic boundary conditions is obtained by calculating the energy of the constant function:
$$
E_0(N,L)\leq \frac{N(N-1)}{2|\Lambda|}\int V(x)\, dx .
$$
Thus, in the thermodynamic limit (and for all boundary conditions),
\begin{equation} \label{upperbound_constant}
e_0(\rho)\leq \frac{\hat V_0}{2}\rho .
\end{equation}
In the following lemma we derive a simple lower bound to $E_0(N,L)$ under the assumption that $V$ is uniformly strictly positive in a neighborhood of the origin. Due to lack of space, this forces a large fraction of the particles to interact.
\begin{lemma} \label{simple_lower}
Suppose that $V\geq \epsilon \mychi_{B(0,2R)}$, for some $\epsilon,R>0$. Then 
\begin{equation} \label{sdsdsdjsdnsdbsdvsdgsdf}
E_0(N,L)\geq C \epsilon R^n \frac{N^2}{|\Lambda|} - \frac{N}{2}V(0),
\end{equation}
for some constant $C>0$ depending only on the dimension.
\end{lemma}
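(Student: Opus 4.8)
The estimate is a pure packing (pigeonhole) bound: after discarding the nonnegative kinetic energy, $N$ particles confined to a box of volume $|\Lambda|$ simply do not have room to keep all pairwise distances large, so of order $N^2/|\Lambda|$ pairs must sit inside the region $\{|x|<2R\}$, where $V\geq\epsilon$ by hypothesis. This forces the energy down by the claimed amount.

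First I would throw away the Laplacians (which are nonnegative) and use $V\geq\epsilon\mychi_{B(0,2R)}$ to pass to the operator lower bound
\[
H_{N,L}\ \geq\ \sum_{1\leq j<k\leq N}V(x_j-x_k)\ \geq\ \epsilon\sum_{1\leq j<k\leq N}\mychi_{B(0,2R)}(x_j-x_k),
\]
the right-hand side being a multiplication operator, so that it suffices to bound the function $\vec x\mapsto\#\{\,j<k:\,|x_j-x_k|<2R\,\}$ pointwise from below. To do this, tile $\R^n$ by congruent half-open cubes of side length $\ell:=R/\sqrt n$, so that each has diameter $\ell\sqrt n=R<2R$; let $\Lambda_1,\dots,\Lambda_M$ be those cubes that meet $\Lambda$, and set $N_\alpha:=\sum_{i=1}^N\mychi_{\Lambda_\alpha}(x_i)$, so that $\sum_{\alpha=1}^M N_\alpha=N$.

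Any two particles lying in a common cube are at distance $<2R$, hence every such ``same-cube'' pair is counted, and therefore, pointwise,
\[
\sum_{j<k}\mychi_{B(0,2R)}(x_j-x_k)\ \geq\ \sum_{\alpha=1}^M\tfrac12 N_\alpha(N_\alpha-1)\ \geq\ \tfrac12\Big(\tfrac{N^2}{M}-N\Big),
\]
the last inequality being Cauchy--Schwarz, $\sum_\alpha N_\alpha^2\geq(\sum_\alpha N_\alpha)^2/M$. Since $0\in B(0,2R)$ forces $\epsilon\leq V(0)$, combining the two displays gives
\[
E_0(N,L)\ \geq\ \frac{\epsilon N^2}{2M}-\frac{\epsilon N}{2}\ \geq\ \frac{\epsilon N^2}{2M}-\frac{N}{2}V(0).
\]

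It then remains to count the cubes: in each coordinate at most $L/\ell+1$ of the intervals of length $\ell$ meet $(-L/2,L/2)$, so $M\leq(L/\ell+1)^n\leq(2\sqrt n)^n|\Lambda|/R^n$ as soon as $L\geq\ell$, and substituting this yields the asserted bound with $C=\tfrac12(2\sqrt n)^{-n}$, which depends only on $n$. When $L<\ell$ the box fits inside a single cube, so every pair interacts; this degenerate regime never enters, the lemma being used only in the thermodynamic limit $L\to\infty$, so I would merely flag it and move on. No genuine difficulty arises here: the only points requiring a little care are the choice of $\ell$ that forces same-cube pairs to interact, the convexity (Cauchy--Schwarz) step, and — if one wants hypotheses only as strong as in Lemma~\ref{lem-ensembles} — reading $V(0)$ in the conclusion simply as some constant $\geq\epsilon$ (e.g.\ $\epsilon$ itself), which only strengthens the bound.
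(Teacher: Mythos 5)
Your argument is correct, and it proves the lemma by a genuinely different mechanism than the paper. You discard the kinetic energy and then make the packing bound discrete: tile by cubes of side $R/\sqrt n$, count same-cube pairs, and use Cauchy--Schwarz on the occupation numbers $N_\alpha$ to get at least $\tfrac12(N^2/M-N)$ interacting pairs. The paper instead keeps everything continuous: it averages $\sum_j\mychi_{B(0,R)}(x_j-z)$ over the center $z\in\Lambda$, applies Jensen's inequality to pass to $\sum_{j,k}\int_\Lambda\mychi_R(x_j-z)\mychi_R(x_k-z)\,dz$, and uses the triangle inequality $\mychi_R(x_j-z)\mychi_R(x_k-z)\le\mychi_{2R}(x_j-x_k)\mychi_R(x_k-z)$ to convert this into the pair interaction, the diagonal $j=k$ terms producing exactly the subtraction $-\tfrac N2V(0)$ (you obtain it instead from $N_\alpha(N_\alpha-1)$ together with the correct observation $\epsilon\le V(0)$). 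Both are implementations of the same pigeonhole idea and both yield a constant depending only on $n$ (the paper gets $v_n4^{-n}/2$, you get roughly $\tfrac12(2\sqrt n)^{-n}$); your version is more elementary, while the paper's avoids any cube-counting at the boundary, at the price of a corner-of-the-box volume estimate $|B(x_j,R)\cap\Lambda|\ge 2^{-n}v_nR^n$, which itself implicitly needs $R\le L$ --- so the small-$L$ caveat you flag is on exactly the same footing as in the paper and is harmless, since the lemma is only invoked for the thermodynamic-limit bounds \eqref{sdsdsdmsdnsdhsdy}. One minor slip: the number of grid cubes meeting $\Lambda$ is bounded by $(L/\ell+2)^n$ rather than $(L/\ell+1)^n$ in the worst alignment; this only changes the value of $C$, say to $\tfrac12(3\sqrt n)^{-n}$ for $L\ge\ell$, and does not affect the statement.
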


\begin{proof}
We will simply discard the kinetic energy and show that the total interaction is pointwise bounded from below by the RHS in \eqref{sdsdsdjsdnsdbsdvsdgsdf}. Let $\mychi_R =\mychi_{B(0,R)}$. By Jensen's inequality we have
$$
\bigg(\int_{\Lambda}\sum_{j=1}^N \mychi_R(x_j-z)\, \frac{dz}{|\Lambda|} \bigg)^2\leq \frac{1}{|\Lambda|}\sum_{j,k}\int_{\Lambda}\mychi_R(x_j-z) \mychi_R(x_k-z)\, dz .
$$
However, the triangle inequality shows that
$$
\mychi_R(x_j-z)\mychi_R(x_k-z)\leq \mychi_{2R}(x_j-x_k)\mychi_R(x_k-z),
$$
and hence
\begin{eqnarray*}
\bigg(\int_{\Lambda}\sum_{j=1}^N \mychi_R(x_j-z)\, \frac{dz}{|\Lambda|} \bigg)^2&\leq& \frac{v_nR^n}{|\Lambda|} \sum_{j,k}\mychi_{2R}(x_j-x_k) \leq \frac{v_nR^n}{\epsilon |\Lambda|}\sum_{j,k}V(x_j-x_k) \\
&=& \frac{v_nR^n}{\epsilon |\Lambda|} \bigg(2\sum_{j<k}V(x_j-x_k)+NV(0) \bigg),
\end{eqnarray*}
where $v_n$ denotes the volume of the unit ball in $\R_n$. The result now follows by noting that
$$
\int_{\Lambda}\sum_{j=1}^N \mychi_R(x_j-z)\,dz \geq Nv_n 2^{-n}R^n,
$$
where the inequality and the factor $2^{-n}$ comes from the situation where $x_j$ is located close the corner of the box.
\end{proof}

Recall the notation in \eqref{def:gsepervolume} and \eqref{sdsdsdnsdbsdgsdhsfsdtsdy} for the ground state energy per volume. As a technical convenience we extend, for fixed $L>0$, the mapping $N\mapsto E_0(N,L)$ to $[0,\infty)$, as a piecewise linear function, by setting $E(0,L)=0$ and 
\begin{equation} \label{sdsdsdmsdnsdgsdhsdtsdysu}
E_0(N+\sigma,L)=(1-\sigma)E_0(N,L)+\sigma E_0(N+1,L), \quad \sigma\in [0,1].
\end{equation}
Note that, as a consequence of Lemma \ref{simple_lower} we have the lower bounds
\begin{equation} \label{sdsdsdmsdnsdhsdy}
e_L(\rho) \geq C_1\rho^2-C_2\rho \quad \text{and} \quad e(\rho) \geq C_1\rho^2-C_2\rho ,
\end{equation}
for constants $C_1,C_2>0$ depending on $V$. 

Since each $N$-particle sector is naturally imbedded in the Fock space, it follows that $E_0^{\text{GC}}(N,L)\leq E_0(N,L)$. We remark that in case $N$ is not a natural number, the inequality follows from the convention \eqref{sdsdsdmsdnsdgsdhsdtsdysu} by considering the combination
$$
\Psi := \sqrt{1-\sigma}\Psi_{\lfloor N\rfloor} +\sqrt{\sigma}\Psi_{\lceil N \rceil }
$$
of arbitrary $\lfloor N\rfloor$-particle and $\lceil N \rceil $-particles states, where $\sigma:= N-\lfloor N\rfloor$. In order to prove Lemma \ref{lem-ensembles}, we therefore only need to show that
\begin{equation} \label{sdsdskdslsdmsdnsdhsdy}
\liminf_{L\to \infty} e_L^{GC}(\rho) \geq e(\rho).
\end{equation}
We introduce a chemical potential $\mu\geq 0$ and notice that, for any normalized $\Psi=(\Psi_0,\Psi_1,\ldots)\in \F$ with $\bra \Psi, \Num\Psi \ket\geq \rho L^n$ we have the lower bound
\begin{eqnarray*}
\frac{\bra \Psi , H_L\Psi \ket }{L^n}&=& \frac{1}{L^n} \big[\mu \bra \Psi, \Num\Psi \ket + \bra \Psi, (H_L-\mu \Num)\Psi \ket   \big] \\
&\geq& \mu \rho +\sum_{N=0}^{\infty}\| \Psi_N\|^2 \bigg[ e_L(N/L^n)-\mu\frac{N}{L^n}\bigg] \\
&\geq& \mu \rho +f_L(\mu), 
\end{eqnarray*}
where $f_L:=- e_L^*$ and where
$$
 g^*(\mu):=\sup_{\rho \geq 0} \big[\mu \rho -g(\rho) \big],
$$
denotes the Legendre Transform of any function $g:[0,\infty)\to \R$, and for $\mu\geq 0$ such that the supremum is finite. We will employ the well-known fact \cite{simon_convex} that the Legendre transform is involute on convex functions, meaning that $( g^*)^*=g^*$. The inequality \eqref{sdsdskdslsdmsdnsdhsdy} will then follow, provided we can show the convergence
$$
\lim_{L\to \infty} f_L(\mu) =f(\mu):=- e^*(\mu),
$$
for each $\mu\geq 0$. Now, by definition,
$$
f_L(\mu)\leq e(\rho)-\mu \rho +[e_L(\rho)-e(\rho)],
$$
and hence
$$
\limsup_{L\to \infty} f_L(\mu)\leq e(\rho)-\mu \rho ,
$$
for each $\rho\geq 0$. It follows that
$$
\limsup_{L\to \infty} f_L(\mu)\leq f(\mu).
$$

For the lower bound we employ the following lemma.

\begin{lemma} \label{duplicate}
Suppose that $V$ is compactly supported with, say, $\text{supp}(V)\subset B(0,R)$. Then 
$$
e_L(\rho)\geq (1+R/L)^ne(\rho [1+R/L]^{-n}) 
$$
for each $\rho,L>0$. 
\end{lemma}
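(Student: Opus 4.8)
The plan is to prove the bound by a Dirichlet box-tiling argument: pack many translated copies of the optimal $\Lambda_L$-configuration into a much larger box, separated by empty corridors of width $R$, so that particles in different copies do not interact.

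Set $L':=L+R$ and $\rho':=\rho(1+R/L)^{-n}=\rho(L/L')^n$, and note $\rho'(ML')^n=M^nN$ for every $M\in\N$, where $N:=\rho L^n$. First I would fix a large integer $M$, partition the cube $\Lambda_{ML'}$ (with Dirichlet boundary conditions, the relevant choice here) into $M^n$ congruent cells of side $L'$, and place in each cell a concentric copy of $\Lambda_L$; any two of these $M^n$ sub-boxes then lie at distance $L'-L=R$ from one another. Let $\psi$ be a normalized Dirichlet ground state of $H_{N,L}$ on $\Lambda_L$ and form the trial function $\Psi:=\psi^{(1)}\otimes\cdots\otimes\psi^{(M^n)}$ on $\Lambda_{ML'}^{M^nN}$, where $\psi^{(k)}$ is $\psi$ translated into the $k$-th sub-box and extended by zero; this extension lies in $H^1$ precisely because $\psi$ vanishes on $\partial\Lambda_L$, and $\Psi$ is supported well inside $\Lambda_{ML'}^{M^nN}$. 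Since the absolute ground state of $H_{M^nN,ML'}$ on the full (not necessarily symmetric) $L^2$ is itself symmetric (Perron--Frobenius, as used in the proof of Theorem \ref{thm:dyson_upper}), the non-symmetric $\Psi$ is an admissible trial state. Because $\supp(V)\subset B(0,R)$, the only interaction terms surviving in $\langle\Psi,H_{M^nN,ML'}\Psi\rangle$ are those between two particles in the same sub-box (the borderline set $|x_i-x_j|=R$ being null), while the kinetic energy splits over sub-boxes by translation invariance; hence
$$
E_0(M^nN,ML')\le\langle\Psi,H_{M^nN,ML'}\Psi\rangle=M^n\langle\psi,H_{N,L}\psi\rangle=M^nE_0(N,L).
$$

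Next I would divide by $(ML')^n$. The left-hand side is $E_0(M^nN,ML')/(ML')^n=e_{ML'}(\rho')$ since $M^nN/(ML')^n=\rho'$, and the right-hand side is $M^nE_0(N,L)/(M^nL'^n)=E_0(N,L)/L'^n=(L/L')^n e_L(\rho)$. Letting $M\to\infty$ and using that $e(\rho')=\lim_{M\to\infty}e_{ML'}(\rho')$ (the thermodynamic limit exists and is boundary-condition independent), I get $e(\rho')\le(L/L')^n e_L(\rho)$, which rearranges to the asserted $e_L(\rho)\ge(1+R/L)^n e(\rho(1+R/L)^{-n})$.

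For non-integer $N=\rho L^n$ one invokes the piecewise-linear convention \eqref{sdsdsdmsdnsdgsdhsdtsdysu}: the argument above applies verbatim whenever $\rho L^n\in\N$ (then $M^nN\in\N$ too), and both sides of the inequality are continuous in $\rho$ — the left by \eqref{sdsdsdmsdnsdgsdhsdtsdysu}, the right because $e$ is convex hence continuous on $(0,\infty)$ — so the general case follows by density. I do not anticipate a serious obstacle: the only points needing care are the separation geometry (distinct sub-boxes genuinely at distance $\ge R$, with no wrap-around since the outer box is Dirichlet) and the bookkeeping for non-integer particle numbers.
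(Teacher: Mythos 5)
Your construction is exactly the paper's: pack $M^n$ zero-extended Dirichlet copies of the optimal $\Lambda_L$-state into a box of side $M(L+R)$ with separation $R$, note that the interaction between different copies vanishes because $\supp(V)\subset B(0,R)$, compare $E_0(M^nN,M(L+R))\leq M^nE_0(N,L)$, divide by the large volume and let $M\to\infty$. That part is correct and is essentially verbatim the published argument (the paper writes $L'=k(L+R)$ and $M=k^n$, but it is the same packing), including your remark that the product state need not be symmetric, which is harmless since the absolute ground state is symmetric.

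The one step that does not work as you stated it is the reduction to integer particle number. For \emph{fixed} $L$ the set $\{\rho>0:\rho L^n\in\N\}=\{N/L^n:N\in\N\}$ is discrete, not dense, so ``continuity plus density'' cannot upgrade the inequality from those $\rho$ to all $\rho$. The correct patch is the one the paper compresses into ``by convexity of $e(\rho)$'': the right-hand side $\rho\mapsto(1+R/L)^n\,e\bigl(\rho[1+R/L]^{-n}\bigr)$ is convex (composition of the convex $e$ with a linear map), while by the convention \eqref{sdsdsdmsdnsdgsdhsdtsdysu} the left-hand side $\rho\mapsto e_L(\rho)$ is affine on each interval $[N/L^n,(N+1)/L^n]$; since the inequality holds at the two endpoints, the affine interpolant dominates the convex function on the whole interval, which gives the claim for every $\rho>0$. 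With that substitution your proof coincides with the paper's.
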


\begin{proof}
By convexity of $e(\rho)$ we may assume that $N:=\rho L^n$ is an integer. Let $k\in \N$ and put $L'=k(L+R)$.We can place $M:=k^n$ copies of the box $\Lambda_L$ inside the larger box $\Lambda_{L'}$ with separation $R$ between neighboring boxes. From an $N$-particle trial state $\Psi$ in $\Lambda_L$ we can construct a trial state with $MN$ particles by placing independent particles in each of the $M$ boxes, each with state $\Psi$. Because of the Dirichlet boundary condition, this gives a trial state on $\Lambda_{L'}$ by extending $\Psi$ by zero and, due to the separation, particles in different boxes do not interact. Minimizing over $\Psi$ yields
$$
e_L(\rho)\geq (1+R/L)^n e_{L'}(\rho[1+R/L]^{-n}).
$$
This estimate holds for each $k\in \N$, so the result follows by taking the limit $k\to \infty$.
\end{proof}

By Lemma \ref{duplicate} we have
 \begin{eqnarray*}
e_L(\rho)-\mu \rho &\geq& e(\rho_L)-\mu \rho \\
 &=& [1+R/L]^n(e(\rho_L)-\mu \rho_L) +\epsilon_L \\
 &\geq &  [1+R/L]^n f(\mu)+\epsilon_L,
 \end{eqnarray*}
where 
$$
\rho_L:= \rho [1+R/L]^{-n} \quad \text{and} \quad \epsilon_L:=e(\rho_L)(1-[1+R/L]^n) .
$$
Now notice that, by \eqref{sdsdsdmsdnsdhsdy},
$$
f_L(\mu)=\inf_{\rho\in [0,\rho_{\mu}]} [e_L(\rho)-\mu\rho],
$$
for some $\rho_{\mu}>0$. From the upper bound \eqref{upperbound_constant}, we then have
$$
f_L \geq [1+R/L]^n f(\mu) + C\rho_{\mu}^2(1-[1+R/L]^n), 
$$
and consequently
$$
\liminf_{L\to \infty} f_L(\mu) \geq f(\mu),
$$ 
as desired.

\section*{Acknowledgements}
A. Aaen was partially supported by the Lundbeck
 Foundation and the European 
Research Council under the
European Community's Seventh Framework Program (FP7/2007--2013)/ERC grant
agreement  202859.

\bibliographystyle{plain}
\bibliography{bibliography}

\end{document}